\newcommand{\rMap}[1]{\xrightarrow{#1}}
\newcommand{\lMap}[1]{\xleftarrow{#1}}
\newcommand{\rMon}{\rightarrowtail}
\theoremstyle{plain}
\newtheorem{theorem}{Theorem}[section]
\newtheorem{lemma}[theorem]{Lemma}
\newtheorem{proposition}[theorem]{Proposition}
\newtheorem{corollary}[theorem]{Corollary}
\theoremstyle{definition}
\newtheorem{definition}[theorem]{Definition}
\newtheorem{example}[theorem]{Example}
\newtheorem{fact}[theorem]{Fact}
\newcommand{\Sem}[1]{\llbracket #1 \rrbracket}
\newcommand{\Iff}{\Leftrightarrow}
\newcommand{\Cond}{\mathop{|}}
\newcommand{\sIndep}{\mathop{\perp}}
\newcommand{\Indep}{\mathop{\bot\!\!\!\!\!\;\bot}}
\newcommand{\Supp}{\text{supp}}
\newcommand{\Forces}{\Vdash}
\newcommand{\Syn}[1]{\mathsf{#1}}
\newcommand{\Equiv}{\sim}
\newcommand{\EqAE}{=_{\text{a.s.}}}
\newcommand{\Epi}{\twoheadrightarrow}
\newcommand{\sCat}[1]{\mathbb{#1}}
\newcommand{\Op}[1]{#1^{\mathsf{op}}}
\newcommand{\Id}{\mathsf{id}}
\newcommand{\Set}{\textbf{Set}}
\newcommand{\Sur}{\mathbb{Sur}}
\newcommand{\sBP}{\mathbb{SBP}}
\newcommand{\R}{\mathbb{R}}
\newcommand{\I}{\mathbb{I}}
\newcommand{\Iop}{\Op{\I}}
\newcommand{\Sh}[1]{\underline{#1}}
\newcommand{\Psh}{\mathrm{Psh}}
\newcommand{\ShAt}{\mathrm{Sh}_{\mathrm{at}}}
\newcommand{\Yon}{\mathbf{y}}
\newcommand{\AS}{\mathbf{a}}
\newcommand{\NEls}{\mathsf{\underline{NV}}}
\newcommand{\SRV}{\Sh{\mathsf{RV}}}
\newcommand{\Sort}{\mathsf{Sort}}
\newcommand{\Arity}{\mathrm{arity}}
\newcommand{\FV}{\mathrm{FV}}
\newcommand{\eqdist}{\stackrel{d}{=}}
\newcommand{\One}{\mathbb{1}}
\def\BibTeX{{\rm B\kern-.05em{\sc i\kern-.025em b}\kern-.08em
    T\kern-.1667em\lower.7ex\hbox{E}\kern-.125emX}}
\begin{document}

\title{Equivalence and Conditional Independence
in Atomic Sheaf Logic%
\thanks{\textbf{{\copyright}Alex Simpson 2026. This is the author's version of the work. It is posted here for your personal use. Not for redistribution. The definitive Version of Record was published in the Journal of the ACM, 
http://dx.doi.org/10.1145/3809163\,.}}
}

\author{Alex Simpson%
\thanks{
Research supported by John Templeton Foundation grant number 39465 (2013--14).
\\
\begin{tabular}{ll}
\includegraphics[scale=0.12]{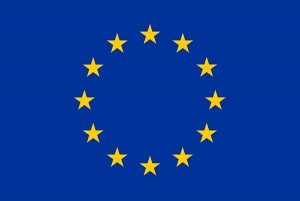}
\hspace*{-3ex}
& 
 \begin{tabular}[b]{l}
This project has received funding from the European Union's 
\\ Horizon  2020  research and innovation programme under the Marie 
\\  Sk{\l}odowska-Curie grant agreement No 731143.
\end{tabular}
\end{tabular}
\\
Research supported by ARIS Programme P1-0294.
}
\\[2ex]
Faculty of Mathematics and Physics, University of Ljubljana\\
 Institute of Mathematics, Physics and Mechanics, Ljubljana
}

\date{}

\maketitle

\begin{abstract}
We propose a semantic foundation for logics for reasoning in settings that possess 
a distinction between equality of variables, a coarser equivalence of variables, and a notion of 
conditional independence between variables.  We show that such relations
can be modelled naturally in  atomic sheaf toposes. 
Equivalence of variables is modelled by an intrinsic relation of 
\emph{atomic equivalence} that is possessed by every atomic sheaf. 
We identify additional structure  on the category generating the atomic topos  (primarily, the existence of a system of \emph{independent pullbacks}) that allows the relation of conditional independence to be interpreted in the topos. We then study the logic of equivalence and conditional  independence that is induced by the internal logic of the topos. This \emph{atomic sheaf logic} is a classical logic that validates a number of fundamental reasoning principles relating equivalence and conditional independence.
 As a concrete example of this abstract framework, we use the atomic topos over the category of surjections between finite nonempty sets as our main running example. 
 In this category, the interpretations of equivalence and conditional independence coincide with those given by the 
 multiteam semantics of independence logic, in which  
 the role of equivalence is taken by the relation of mutual inclusion. A major difference from independence logic is that, in  atomic sheaf logic, the multiteam semantics of the equivalence and conditional independence relations is embedded within a classical surrounding logic. 
 At the end of the paper, we briefly outline two other instances of our framework, to demonstrate its versatility.
 The first of these is a  category of \emph{probability sheaves}, in which atomic equivalence is 
 equality-in-distribution, and the conditional independence relation is the usual probabilistic one.
 Our other example is the \emph{Schanuel topos} (equivalent to nominal sets) where equivalence is 
 orbit equality and conditional independence amounts to a relative form of separatedness. 
\end{abstract}

\section{Introduction}
\label{section:introduction}

This paper provides a  study of fundamental logical principles for reasoning about relations of \emph{independence} and \emph{conditional independence} together with  an associated relation of \emph{equivalence}. The principles,
which are obtained via the abstract mathematical framework of \emph{sheaf theory}, 
are  general, in the sense that that they apply uniformly to  different instantiations of the notions of (conditional) independence and equivalence in a number of very different application areas. The paper  focuses on the mathematical development of a general theory that is intended to be cross-disciplinary in its applicability, but with computer science as a particularly prominent source of  target application areas.  

Notions of \emph{independence} and \emph{conditional independence} arise in many scientific areas. One particularly significant area is in probability and statistics, where it has long been recognised that conditional independence relations are subject to subtle rules of inference~\cite{dawidA,spohn}. Such rules, in a graphical formulation, are crucial in the technology of Bayesian networks~\cite{PP,GPP,GP}. In a more logical form, they have received recent interest in the area of program verification, where, for example, versions of separation logic based on probabilistic independence have been developed~\cite{BHL,BDHS,LAS}. 

In a different direction, the \emph{dependence} and  \emph{independence logics} of V\" a\" an\" anen and Gr\" adel~\cite{Vaananen,GV}
are concerned with purely logical notions of dependence and independence between variables. 
Such logics are based on  \emph{team semantics}, which develops Hodges' compositional approach~\cite{hodges}  to the semantics of  
 \emph{independence-friendly logic}~\cite{HS}  into a fully fledged semantic framework. One of the attractions of team semantics is the close relationship it enjoys with database theory and notions of dependence and independence that arise therein~\cite{HK}. There is also an intriguing aspect to team semantics: it gives rise to logics that are
exotic in  character.
This point is discussed in more detail in Section~\ref{subsection:relationship-multiteam}.

The starting point of the present paper, in Sections~\ref{section:multiteams} and~\ref{section:atomic-sheaves},
 is the observation that the interpretation given by team semantics, more precisely by its \emph{multiteam} variant~\cite{DHKM}, to conditional independence 
statements is equivalent to interpreting these relations  in a certain sheaf topos, namely the topos of atomic sheaves on  the category
$\Sur$  of finite nonempty sets and surjections. This means that the team semantics of conditional independence automatically has 
a logic canonically associated with it: the internal logic of the topos. Since the topos is atomic, this internal logic is ordinary classical logic, albeit with a nonstandard semantics. We thus obtain a classical logic suitable for reasoning with
conditional independence relations endowed with their (multi)team semantics (Section~\ref{section:atomic-sheaf-logic}). 

One advantage of the atomic sheaf perspective on conditional independence is that it is very general.
We axiomatise structure, on the generating category of the  topos,  that gives rise to a canonical interpretation of conditional independence relations. For this, we define, in Section~\ref{section:independent-pullbacks},
the notion of \emph{independent pullback structure} on a category, 
closely related to the \emph{conditional independence structure} of~\cite{simpsonA}, but with a much more compact axiomatisation. 
We also expose a surprisingly rich interplay between independent pullback structure and the induced 
atomic sheaves. Building on this, in Section~\ref{section:atomic-conditional-independence}, we define
\emph{atomic conditional independence},  generalising the
multiteam conditional-independence relation to any atomic sheaf topos over 
a generating category with sufficient structure

Along the route to defining conditional independence, we observe, in Section~\ref{section:atomic-equivalence},
that every object of an atomic topos 
carries, in addition to the standard {equality} relation on the object, an additional intrinsic equivalence relation, which we call \emph{atomic equivalence}. Logically, this provides us with a canonical equivalence relation between variables that is, in general, coarser than equality. In the example of the atomic topos on the category $\Sur$, atomic equivalence turns out to coincide with a relation of interest in team semantics, namely the \emph{equiextension} relation.

One important contribution of the paper is the identification of fundamental axioms for 
relations of equivalence and conditional independence that are validated by the 
general interpretation of these relations in atomic toposes (over generating categories with enough structure).
These axioms include the usual quantifier-free axioms from the literature 
(for example, axioms formalising the reasoning principles for  conditional independence from~\cite{dawidA,spohn}), but also new first-order axioms that 
fully exploit the use of atomic sheaf logic. In Sections~\ref{section:atomic-equivalence} and~\ref{section:atomic-conditional-independence},
we identify five such principles: the \emph{transfer principle},
the \emph{invariance principle}, the \emph{principle of independent equivalence},
the
\emph{independent existence principle} and the property of \emph{existence preservation}.

Throughout Sections~\ref{section:atomic-sheaves}--\ref{section:atomic-conditional-independence}, the abstract definitions are illustrated in the case of   atomic sheaves over the category $\Sur$, which is our main running example, chosen because of  its connection to (multi)team semantics. In Sections~\ref{section:probability-sheaves} and~\ref{section:Schanuel} we present two other examples of our
general structure, in order to give some indication of its versatility. 
Section~\ref{section:probability-sheaves} presents an atomic sheaf topos over a category of probability spaces. The resulting category of \emph{probability sheaves} (first introduced in \cite{simpsonB}) includes sheaves of random variables, over which 
equality coincides with the probabilistic relation of \emph{almost sure equality},
atomic equivalence coincides with  the relation of \emph{equality in distribution}, and the atomic conditional independence relation coincides with the usual probabilistic relation.  
Section~\ref{section:Schanuel} very briefly indicates how  the Schanuel topos (which is equivalent to the category of nominal sets~\cite{GabP,pitts}) fits into our framework. In this case, atomic equivalence is the relation of orbit equality, and conditional independence amounts to a relative form of separatedness. 

Finally, in Section~\ref{section:related}, we discuss related and potential future work,
including a detailed comparison with team and multiteam semantics in Section~\ref{subsection:relationship-multiteam},
and a  discussion of potential computer science applications in Section~\ref{subsection:cs-applications}.

This paper is an expanded version of a conference paper~\cite{conference}, presented at the thirty-ninth annual ACM/IEEE Symposium on
Logic in Computer Science (LICS), held in 
Tallinn, Estonia in July 2024. In comparison with the conference paper, this journal version includes proofs of all main results, as well as an expanded discussion on sheaves in Section~\ref{section:atomic-sheaves} 
and also a substantially expanded presentation of our second main example, the category of probability sheaves, which occupies
Section~\ref{section:probability-sheaves}. We further  include three new appendices containing lengthy proofs that we prefer not to incorporate into  the main body of the paper, where they would interrupt the flow. 

\section{Multiteam semantics}

\label{section:multiteams}

\noindent
\emph{Dependence logic}~\cite{Vaananen} and \emph{independence logic}~\cite{GV}  extend first-order logic with new logical primitives expressing notions of dependence and independence between variables. These logics are based on the realisation that such new primitives can be interpreted semantically, by replacing the usual \emph{assignments} used to interpret variables in logical formulas with
 \emph{teams} (sets of assignments) or with \emph{multiteams} (multisets of assignments).
The relevant definitions are as follows, where $A$ is an arbitrary set.

\begin{itemize}

\item
An $A$-valued \emph{assignment} is a function $\mathcal{V} \to A$ where $\mathcal{V}$ is a (without loss of generality finite) set of variables.

\item
An $A$-valued \emph{team}~\cite{hodges,Vaananen}  is a set of assignments with common variable set $\mathcal{V}$.

\item
An $A$-valued \emph{multiteam}~\cite{DHKM} is a multiset of assignments with common variable set $\mathcal{V}$.

\end{itemize}


\noindent
Teams and  multiteams give a canonical semantics to 
a variety of interesting new logical relations between variables,
such as those expressing \emph{dependence} $=\!(\Syn{x},\Syn{y})$, \emph{independence} $\Syn{x} \sIndep {\Syn{y}}$, 
\emph{conditional independence} $\Syn{x} \sIndep_{\Syn{z}} \Syn{y}$, \emph{inclusion} $\Syn{x} \subseteq \Syn{y}$,  \emph{equiextension} $\Syn{x} \bowtie \Syn{y}$  
and \emph{exclusion} $\Syn{x} |  \Syn{y}$, to give a non-exhaustive list. 
We review this in detail, in the case of multiteams, focusing on two of the above relations:
conditional independence and 
 equiextension. 

A \emph{multiset} of elements from a set $A$ is a function $m \colon A \to \mathbb{N}$, which assigns to every element $a \in A$ a \emph{multiplicity} $f(a)$. A multiset $m$ is \emph{finite} if its \emph{support} (the set $\Supp(m) := \{a \mid m(a) > 0\}$) is 
finite. A multiset $m \colon A \to \mathbb{N}$ can alternatively be presented by a set $\Omega$ together with a function $M \colon \Omega \to A$ satisfying, for all $a \in A$, the fibre $M^{-1}(a)$ has cardinality $m(a)$. 
The elements of $\Omega$ can be thought of as names for distinct element occurrences in the multiset (so each element in $A$ has as many names as its multiplicity). Note also that the function $M$ has the support set $\Supp(m)$ as its image.
Of course a multiset $m \colon A \to \mathbb{N}$ has many different presentations by finite-fibre functions. However, given two such representations $M \colon \Omega \to A$ and $M' \colon \Omega' \to A$, there exists  
 a  
bijection $i \colon \Omega \to \Omega'$ such that $M = M' \circ i$. (The proof of this statement, although simple, requires the axiom of choice.) 
So multisets are in one-to-one correspondence with isomorphism classes of presentations.

In the case of a finite multiset $m \colon A \to \mathbb{N}$, the domain set $\Omega$ of a presentation $M \colon \Omega \to A$ is necessarily finite, and all functions with finite domain present finite multisets. Thus there is a one-to-one correspondence between finite multisets and isomorphism classes of finite-domain presentations. 
(Moreover, because the multisets are now finite, the axiom of choice is no longer needed.)

Since a \emph{multiteam} is  a multiset of assignments with a common  $\mathcal{V}$.
it can be presented by a finite-fibred function of the form
\[
{M} \colon \Omega \to (\mathcal{V} \to A) \enspace .
\]
As in~\cite{DHKM}, we  restrict attention to finite multiteams. 
Henceforth, by \emph{multiteam} we mean a finite multiset of assignments with common $\mathcal{V}$.   Such finite multiteams  correspond to functions $M$, as above, for which the set $\Omega$ is finite. 
Equivalently, by  transposition, a multiteam can be represented by a function of the form
\[
\underline{M} \colon \mathcal{V} \to ( \Omega  \to A)
\]
While this is just a simple set-theoretic reorganisation of the notion of multiteam, it provides 
an illuminating alternative perspective on multiteam semantics, which we now elaborate.

One can think of a function $X \colon \Omega \to A$ as a \emph{nondeterministic variable} valued in $A$. Here the terminology is motivated by analogy with the notion of  \emph{random variable} from probability theory.
In our setting, we view the \emph{set} $\Omega$ as a finite \emph{sample set}, a nondeterministic version of a
\emph{sample space} in probability theory. The sample set represents a realm of possible nondeterministic choices.
With this terminology, a multiteam presented as
$\underline{\rho} \colon \mathcal{V}\to ( \Omega  \to A)$
is simply an assignment of $A$-valued nondeterministic variables (with shared sample set) to logical variables. 
(In this paper, we restrict to finite sample sets.
Nevertheless, the notion of nondeterministic variable obviously generalises to arbitrary sample  sets $\Omega$.)


We now use the above formulation of multiteams as assignments of nondeterministic variables to recast definitions from multiteam semantics (as in~\cite{DHKM}). Technically, this is simply a straightforward matter of translating the definitions along the equivalence between the two formulations of multiteam. However, even if mathematically equivalent, our formulation of multiteam encourages a different `local' style of presentation, where the sample sets $\Omega$ play a role similar to that played by \emph{possible worlds} in Kripke semantics and by \emph{forcing conditions} in set theory.

Before addressing semantics, we introduce our syntax. 
For greater generality, we work with a multi-sorted logic. 
This also has the advantage that the sorting constraints on logical primitives provide useful information about their  generality in scope. 
Accordingly, we assume a set $\Sort$ of basic syntactic \emph{sorts} $\Syn{A}, \Syn{B}, \Syn{C}, \dots$.
Variables $\Syn{x^A}$ have explicit sorts.
We consider three forms of atomic formula.

\begin{itemize}

\item If $\Syn{x^A},\Syn{y^A}$ have the same sort, then $\Syn{x^A} = \Syn{y^A}$ is an atomic formula.

\item If $\Syn{x}_1^{\Syn{A}_1}, \dots, \Syn{x}_n^{\Syn{A}_n}$ and 
$\Syn{y}_1^{\Syn{A}_1}, \dots, \Syn{y}_n^{\Syn{A}_n}$ are two lists of variables of the same length $n \geq 0$ with identical sort lists, 
then 
\begin{equation}
\label{equation:equiv-formula}
\Syn{x}_1^{\Syn{A}_1}, \dots, \Syn{x}_n^{\Syn{A}_n} \Equiv \Syn{y}_1^{\Syn{A}_1}, \dots, \Syn{y}_n^{\Syn{A}_n}
\end{equation}
is an atomic formula.

\item If $\Syn{x}_1^{\Syn{A}_1}, \dots, \Syn{x}_m^{\Syn{A}_m}$ and 
$\Syn{y}_1^{\Syn{B}_1}, \dots, \Syn{y}_n^{\Syn{B}_n}$ and
$\Syn{z}_1^{\Syn{C}_1}, \dots, \Syn{z}_l^{\Syn{C}_l}$
are three lists of variables (with $m,n,l \geq 0$)
then 
\begin{equation}
\label{equation:indep-formula}
\Syn{x}_1^{\Syn{A}_1}, \dots, \Syn{x}_m^{\Syn{A}_m} \sIndep \Syn{y}_1^{\Syn{B}_1}, \dots, \Syn{y}_n^{\Syn{B}_n}
\Cond \Syn{z}_1^{\Syn{C}_1}, \dots, \Syn{z}_l^{\Syn{C}_l}
\end{equation}
 is an atomic formula.
\end{itemize}
The first formula expresses  \emph{equality}, as in ordinary (multi-sorted) first-order logic. The remaining two are atomic constructs borrowed from logics associated with team semantics. 

The formula $\vec{\Syn{x}} \Equiv \vec{\Syn{y}}$ represents what we call \emph{equivalence}, which arises in the team-semantics literature as \emph{equiextension}
${\vec{\Syn{x}} \subseteq \vec{\Syn{y}}} \wedge {\vec{\Syn{y}} \subseteq \vec{\Syn{x}}}$, sometimes written with the notation
$\vec{\Syn{x}} \bowtie \vec{\Syn{y}}$. Our more neutral notation and terminology reflects the fact that we will later consider other interpretations of the $\Equiv$ relation. The use of vectors of variables on either side is needed because equivalence is a relation that holds
betwen the vectors $\Vec{\Syn{x}}$ and $\Vec{\Syn{y}}$ \emph{jointly}, and does not reduce to a conjunction of equivalences between components. 

The formula $\vec{\Syn{x}} \sIndep \vec{\Syn{y}} \Cond \vec{\Syn{z}}$ represents \emph{conditional independence} from the \emph{independence logic} of~\cite{GV}, where it is written 
$\vec{\Syn{x}} \sIndep_{\vec{\Syn{z}}} \vec{\Syn{y}}$. In our syntax, we take the conditioning variables out of the subscript position in order to give them more prominence, adopting a notation that is familiar from probability theory. An important special case of conditional independence is when the 
sequence $\vec{\Syn{z}}$ is empty. In such cases, we write simply $\vec{\Syn{x}} \sIndep \vec{\Syn{y}}$ for the resulting relation, which expresses unconditional independence.

It is of course  the atomic formulas $\vec{\Syn{x}} \Equiv \vec{\Syn{y}}$ and $\vec{\Syn{x}} \sIndep \vec{\Syn{y}} \Cond \vec{\Syn{z}}$  that give us \emph{equivalence} and \emph{conditional independence} in the title of this paper.

To define the semantics, we assume we have,  for every sort $\Syn{A}$, an associated set $\Sem{\Syn{A}}$.
%
%
In a multi-sorted setting, an assignment 
for a finite set $\mathcal{V}$ of variables
is an element
\[
\rho ~ \in ~ \prod_{\Syn{x}^\Syn{A} \in \mathcal{V}}  \Sem{A} \enspace ,
\]
and a multiteam is a finite multiset of assignments.
In the standard multiteam semantics, a formula $\Phi(\Syn{x}_1^{\Syn{A}_1},\dots,\Syn{x}_n^{\Syn{A}_n})$ (i.e., all free variables are in 
$\{\Syn{x}_1^{\Syn{A}_1}, \dots, \Syn{x}_n^{\Syn{A}_n}\}$) is given a satisfaction relation 
\begin{equation}
\label{equation:satisfaction}
\models_m \Phi \enspace ,
\end{equation}
where $m$ is a multiteam of $\{\Syn{x}_1^{\Syn{A}_1}, \dots, \Syn{x}_n^{\Syn{A}_n}\}$-assignments.
If instead we adopt  the reformulation of multisets described above, a multiteam is given as a \emph{single} assignment 
\begin{equation}
\label{equation:new-multiteam}
\Sh{\rho} ~ \in ~ \prod_{\Syn{x}^\Syn{A} \in \mathcal{V}}  \left(\Omega \to \Sem{A} \right)
\end{equation}
of nondeterministic variables to logic variables, and the satisfaction relation can then be rewritten as
\begin{equation}
\label{equation:satisfaction-B}
\models_{\Sh{\rho}} \Phi \enspace .
\end{equation}
It turns out to be helpful 
to make the sample set $\Omega$, that occurs implicitly within $\Sh{\rho}$, explicit in the notation,
so we write
\begin{equation} 
\label{equation:forces}
\Omega \Forces_{\underline{\rho}} \Phi \enspace \, .
\end{equation}
We here switch to the `forcing' notation $\Forces$, since we  shall view $\Omega$  as a 
`possible world' or `condition' 
(capturing all the nondeterminism that the multiteam uses)
that determines the `local truth' of $\Phi$.
We stress that relations \eqref{equation:satisfaction}, \eqref{equation:satisfaction-B} and
\eqref{equation:forces} all have exactly the same meaning. The only differences are in the formulation of multiset that is used, and whether or not $\Omega$ is explicit in the notation. 

%
%

\begin{figure}[t]
\begin{align*}
\Omega \Forces_{\!\underline\rho}  \Syn{x^A} \! = \! \Syn{y^A} ~ \Iff ~ & ~ \underline\rho(\Syn{x^A})=\underline\rho(\Syn{y^A}) 
\quad \text{(equal functions $\Omega \to \Sem{\Syn{A}}$)}
\\
\Omega \Forces_{\!\underline{\rho}}  \vec{\Syn{x}} \! \Equiv \! \vec{\Syn{y}} ~ \Iff ~ & ~ \underline{\rho}(\vec{\Syn{x}}) \bowtie  \underline{\rho}(\vec{\Syn{y}}) 
\\
\Omega \Forces_{\!\underline\rho}  {\vec{\Syn{x}}} \sIndep {\vec{\Syn{y}}}  \Cond {\vec{\Syn{z}}}  ~  \Iff  ~ & ~
\Sh{\rho}(\vec{\Syn{x}}) \Indep \Sh{\rho}(\vec{\Syn{y}}) \Cond \Sh{\rho}(\vec{\Syn{z}}) 
\end{align*}
\caption{Multiteam semantics of atomic formulas}
\label{figure:atomic-semantics}
\end{figure}

Figure~\ref{figure:atomic-semantics} 
defines the 
forcing relation $\Omega \Forces_{\underline{\rho}} \Phi$ directly in terms of 
our reformulated multiteams, as in~\eqref{equation:new-multiteam}, for atomic formulas $\Phi$.
In the clauses for equivalence and  independence we use the notation $\Sh{\rho}(\vec{\Syn{x}})$, where $\vec{\Syn{x}}$ is a vector  of variables $\Syn{x}_1^{\Syn{A}_1}, \dots, \Syn{x}_n^{\Syn{A}_n}$, to represent
the $(\Sem{\Syn{A}_1} \times \dots \times \Sem{\Syn{A}_n})$-valued nondeterministic variable 
\[
\Sh{\rho}(\vec{\Syn{x}}) ~ := ~ \omega \mapsto (\Sh{\rho}(\Syn{x}_1^{\Syn{A}_1}), \dots, \Sh{\rho}(\Syn{x}_n^{\Syn{A}_n})) ~ \colon~  \Omega \to \Sem{\Syn{A}_1} \times \dots \times \Sem{\Syn{A}_n} \enspace .
\]
We also write  $X \bowtie Y$ and $X \Indep Y \Cond Z$ for the semantic relation of \emph{equiextension} and 
\emph{conditional independence} between nondeterministic variables, as defined below.
\begin{definition}[Equiextension for nondeterministic variables] 
\label{definition:equiextension}
Two  nondeterministic variables $X \colon \Omega \to A$ and $Y \colon \Omega \to A$  are \emph{equiextensive} (notation $X \bowtie Y$) if
they have equal images, i.e., $X(\Omega) = Y(\Omega)$.
\end{definition}

\begin{definition}[Conditional independence for nondeterministic variables]
\label{definition:conditional-independence-NV}
 Let  $X \colon \Omega \to A$, $~ Y \colon \Omega \to B$ and $Z \colon \Omega \to C$ be nondeterministic variables. We say that $X$ and $Y$ are  \emph{conditionally independent} given $Z$ 
(notation $X \Indep Y \Cond Z$) if, for all $a \in A, b \in B, c \in C$,
\begin{align*}
  (\exists \omega \in \Omega.~ X(\omega) \!=\! a ~ \text{and}  ~ Z(\omega) \!= \!c) ~ 
 \text{and}   & ~ (\exists \omega \in \Omega.~ Y(\omega)\! =\! b~ \text{and} ~ Z(\omega) \!=\! c) ~ \\
& \quad \quad \text{implies} ~\exists \omega \in \Omega.~ X(\omega) \!=\! a~ \text{and} ~ Y
(\omega)\! =\! b ~ \text{and} ~ Z(\omega) \!= \!c \enspace .
\end{align*}
\end{definition}

In the literature on (in)dependence logics, the semantic clauses for atomic formulas are extended with clauses 
giving meaning to the  logical connectives and quantifiers. A number of inequivalent ways of achieving this appear in the literature~\cite{Vaananen,GV,DHKM}. All share the feature that the resulting logics are exotic. We shall discuss this in more detail in Section~\ref{subsection:relationship-multiteam}.

In this paper, we consider a different approach to embedding the equivalence and conditional independence constructs, with their multiteam semantics, in a full multi-sorted first-order logic. We observe  that the multiteam semantics of the atomic constructs  lives naturally in a certain \emph{atomic sheaf topos}. 
and then we  make use of the standard \emph{internal logic} of the topos, which in the case of an atomic topos is  classical logic.


\section{Atomic sheaves}
\label{section:atomic-sheaves}


In this section, we define the notion of \emph{atomic sheaf topos}, which is a special kind of \emph{Grothendieck topos}.
We restrict attention to 
presenting the definitions and results we shall make use of, attempting to do so in such a way that they can be understood from first principles given knowledge of core category theory. For further contextualisation,  \cite{MM} is an excellent source.

A \emph{presheaf} on a small category $\sCat{C}$ is a functor 
$P \colon \Op{\sCat{C}} \to \Set$ (note the contravariance). 
The \emph{presheaf category} $\Psh(\sCat{C})$ is the functor category $\Set^{\Op{\sCat{C}}}$. Given a presheaf $P$,
object $Y$ of $\sCat{C}$, element $y \in PY$ and map $f \colon X \rMap{} Y$ in $\sCat{C}$, we write $y \cdot_P f$ for the element $P(f)(y) \in PX$, or simply $y \cdot f$ when $P$ is clear from the context.

\begin{example}[Representable presheaves] For any object $Z \in \sCat{C}$, the \emph{representable presheaf} $\Yon Z := \sCat{C}(-,Z)$ is defined by
\begin{itemize}
\item For any object $X \in \sCat{C}$, define $\Yon Z (X) ~ := ~ \sCat{C}(X,Z)$,
i.e., the hom set.

\item For any map $f \colon Y \rMap{} X$ in $\sCat{C}$ and $g \in (\Yon Z)(X) $,
define $g \cdot f := g \circ f$.
\end{itemize}
The object mapping $Z \mapsto \Yon Z$ extends to a full and faithful functor $\Yon: \sCat{C} \to \Psh(\sCat{C})$, the \emph{Yoneda functor}~\cite{MM}.
\end{example}

\begin{example}[Product presheaves] Let $P_1, \dots, P_n$ be presheaves on $\sCat{C}$.
Define the \emph{product presheaf} $P_1 \times \dots \times  P_n$ in $\Psh(\sCat{C})$ by:
\begin{itemize}
\item For any object $X \in \sCat{C}$, define 
\[
(P_1 \times \dots \times  P_n)(X) ~ := ~ P_1(X) \times \dots \times P_n(X) \enspace ,
\]
i.e., the product of sets.

\item For any map $f \colon Y \rMap{} X$ in $\sCat{C}$ and $(x_1, \dots, x_n) \in (P_1 \times \dots \times  P_n)(X)$,
define
\[
(x_1, \dots, x_n) \cdot f ~ := ~ (x_1 \cdot_{P_1} f, \dots, x_n \cdot_{P_n} f) \enspace ,
\]
\end{itemize}
\end{example}
\noindent
The above definition generalises to infinite products, and further to arbitrary category-theoretic limits and colimits, all of which are defined on presheaves in a similar (pointwise) way, using the corresponding definitions in the category of sets.

The next example is central to this paper.

\begin{example} Let $\Sur$ be (a small category equivalent to) the category whose objects are non-empty finite sets
 and whose morphisms are surjective functions. For any set $A$, we have a presheaf $\NEls(A)$ in $\Psh(\Sur)$ of 
$A$-valued \emph{nondeterministic variables} (in the sense of Section~\ref{section:multiteams}), defined as follows.
\begin{itemize}
\item For any object $\Omega$ of $\Sur$, define $\NEls(A)(\Omega)$ to be the set of all functions $\Omega \to A$.

\item For any map $p \colon \Omega' \rMap{} \Omega$ in $\Sur$, and $X \in \NEls(A)(\Omega)$ define 
 $X \cdot p$ to be $X \circ p \in \NEls(A)(\Omega')$.

\end{itemize}

\end{example}

Grothendieck introduced a very general notion of what it means for a presheaf $P \in \Psh(\sCat{C})$ to be a 
\emph{sheaf} relative to a \emph{Grothendieck topology} on $\sCat{C}$. A Grothendieck topology $\mathcal{J}$  specifies, for every object $X$, a collection $\mathcal{J}_X$ of families of maps with codomain $X$, in which each family of maps 
$(c_i \colon Y_i \rMap{} X)_{i \in I} \in \mathcal{J}_X$ is deemed to provide a \emph{covering family} (more briefly \emph{cover}) for $X$. A presheaf $P$ is a $\mathcal{J}$-\emph{sheaf} if, for every such cover, every \emph{matching} family of elements $(y_i \in P(Y_i))_{i\in I}$ 
has a unique \emph{amalgamation} $x \in P(X)$.
The high-level idea is that the \emph{matching} property, which says that the $y_i$ elements agree with each other on overlapping parts of the cover, allows all the $y_i$ to be glued together into a single \emph{amalgamation} $x$, which  is an  element of $P(X)$.
We shall not give the general definitions underlying the emphasised words because, for this paper, it is not necessary to understand the notion of sheaf in its full generality. 
Nonetheless, there is a point about the general definition worth making. The intuition that is usually presented for the general definition is that the
matching condition for the family $(y_i \in P(Y_i))_{i\in I}$ means that the different $y_i$ are compatible with each other, and then the unique amalgamation `glues' these compatible elements together to form a single element $x \in P(X)$, which is possible because the object $X$ is covered by the family $(Y_i)_{i \in I}$. 
In this paper, we are going to work only with sheaves for  \emph{atomic} Grothendick topologies, for which the usual general intuition for sheaves outlined above is not very helpful. In the case of an atomic topology, covers are single maps $c \colon Y \rMap{} X$, matching families contain only one element $y \in P(Y)$ (it needs to match only with itself, which turns out to be a nontrivial condition) and the amalgamation
$x \in P(X)$ is obtained from $y$ alone, so the usual `gluing' intuition does not apply.
Instead, we shall use the terminology  \emph{invariant element} in place of matching family, and \emph{descendent} in place of
 amalgamation, since this seems more appropriate in the context of an atomic topology.
  
  We first introduce the atomic sheaf concept using the example of  $\Sur$, and then follow this with the generalisation to an arbitrary small category $\sCat{C}$.
In the case of $\Sur$, an object $\Omega$ can be thought of as representing a `world' of currently available nondeterministic choices, and
a map $c : \Omega' \rMap{} \Omega$ specifies an extension of the existing nondeterministic choices in $\Omega$  to accommodate the additional 
nondeterminism potentially available in $\Omega'$. Nondeterministic variables form a presheaf  $\NEls(A)$ simply because any nondeterministic variable 
$X \in \NEls(A)(\Omega)$ extends via $c$ to a corresponding $\Omega'$-based nondeterministic variable
$X \cdot c := X \circ c \in  \NEls(A)(\Omega')$. This latter nondeterministic variable is defined for all nondeterministic choices in $\Omega'$, but only makes use of nondeterminism already available in $\Omega$; that is, $(X \cdot c) (\omega') = (X \cdot c) (\omega'')$ for any $\omega',\omega'' \in \Omega'$ for which $c(\omega') = c(\omega'')$. 
Furthermore, 
every element $Y \in  \NEls(A)(\Omega')$, that only makes use of nondeterminism in $\Omega$, arises 
as $Y = X \cdot c$ for a unique 
$X \in \NEls(A)(\Omega)$. In other words, it
has a unique representation as a \emph{bona fide}  $\Omega$-based nondeterministic variable $X$.
In order to formulate this technically, we say that a nondeterministic variable 
$Y \in  \NEls(A)(\Omega')$ is \emph{$c$-invariant} if 
$Y(\omega') = Y(\omega'')$ for any $\omega',\omega'' \in \Omega'$ for which $c(\omega') = c(\omega'')$. 
The presheaf $\NEls(A)$ then satisfies:
every $c$-invariant $Y\in  \NEls(A)(\Omega')$ arises as $X \cdot c$ for a unique 
$X \in \NEls(A)(\Omega)$, which we call the \emph{$c$-descendent} of $Y$. 
As we shall see below, the property we have just elucidated asserts that the presheaf $\NEls(A)$ is a \emph{sheaf} for the 
\emph{atomic Grothendieck topology} on the category $\Sur$.

A similar story can be told for any small category $\sCat{C}$ for which 
an object $X \in \sCat{C}$ can be thought of  as a world of current possibilities, and a map $c \colon Y \rMap{} X$ represents a way of extending the current world to another world $Y$ with additional possibilities. Given a presheaf $P$ an element $x \in P(X)$ and map $c \colon Y \rMap{} X$, the element $x \cdot c \in P(Y)$ represents the extension of $x$ to incorporate the new possibilities from $Y$. The extended element $x \cdot c$ enjoys the 
property of \emph{$c$-invariance} (Definition~\ref{definition:c-invariant} below), 
which formalises that $x \cdot c$ does not depend on any of the possibilities in $Y$ beyond those already available in $X$. 
Moreover, for any $y \in P(Y)$ that is {$c$-invariant}, the definition of \emph{atomic sheaf} (Definition~\ref{definition:atomic-sheaf} below) says that  there must exist a unique $x \in P(X)$ that, via the equation
$y = x \cdot c$, makes explicit the true dependency of $y$ only on $X$. 

The main intuition underpinning the  above discussion can be  summarised as follows. 
In the context of a category $\sCat{C}$, for which we think of  maps $c \colon Y \rMap{} X$ as extending the possibilities offered by state $X$ to a more refined set of possibilities offered by state $Y$,

\begin{itemize}
\item the \emph{presheaf} property of $P$ says that we can \emph{extend} any element $x \in P(X)$, defined using the possibilities at $X$, to a corresponding element  $x \cdot c \in P(Y)$ that, although defined at $Y$, does not exploit the potential additional generality of $Y$;
\item and the \emph{atomic sheaf} property says that, for any element $y \in P(Y)$, defined using the possibilities at $Y$ in such a way that $y$ does not exploit the potential greater generality afforded by $Y$ over $X$, there exists a unique corresponding element $x \in P(X)$
that makes explicit the dependency of $y$ only on possibilities offered by $X$.
\end{itemize}

Since we are  interested only in atomic topologies, we can define the sheaf property  (Definition~\ref{definition:atomic-sheaf} below) directly,
 without needing to introduce the general notion of  Grothendieck topology. 
However, we do need the 
atomic Grothendieck topology  to 
exist on the base category $\sCat{C}$, which happens if and only if the category $\sCat{C}$ is \emph{coconfluent}.

\begin{definition}[Coconfluence] A category $\sCat{C}$ is \emph{coconfluent}\footnote{In~\cite[A 2.1.11(h)]{johnstone}  $\sCat{C}$ is said to satisfy the \emph{right Ore condition}.}
if for any cospan $X \rMap{f} Z \lMap{g} Y$, there exists a span $X \lMap{u} W \rMap{v} Y$ such that $f \circ u = g \circ v$. 
\end{definition}

\begin{proposition} 
\label{proposition:Sur-coconfluent}
$\Sur$ is coconfluent.
\end{proposition}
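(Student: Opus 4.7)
The plan is to exhibit the span directly using the pullback in $\Set$, and then verify that this construction stays within $\Sur$, i.e., that the resulting object is a nonempty finite set and the projection maps are surjections.

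Concretely, given a cospan $X \rMap{f} Z \lMap{g} Y$ in $\Sur$, I would set
\[
W ~:=~ \{(x,y) \in X \times Y \mid f(x) = g(y)\} \enspace ,
\]
with $u \colon W \to X$ and $v \colon W \to Y$ the two projections. Finiteness of $W$ is immediate from finiteness of $X \times Y$. The commutativity $f \circ u = g \circ v$ holds by construction.

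The three remaining things to check are nonemptiness of $W$ and surjectivity of $u$ and $v$, all of which fall out of the surjectivity of $f$ and $g$. For surjectivity of $u$, pick any $x \in X$; since $g$ is surjective, there is some $y \in Y$ with $g(y) = f(x)$, so $(x,y) \in W$ and $u(x,y) = x$. In particular, since $X$ is nonempty, $W$ is nonempty. The argument for $v$ is symmetric, using surjectivity of $f$.

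There is no real obstacle here: the only point to keep in mind is that the naive pullback in $\Set$ could in principle leave $\Sur$ (if the projections failed to be surjective or if $W$ were empty), but the surjectivity hypothesis on both legs of the cospan, together with nonemptiness of $X$ and $Y$, ensures that this does not happen.
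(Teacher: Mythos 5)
Your proof is correct and takes exactly the same route as the paper: form the pullback in $\Set$ and observe that surjectivity of both legs makes the projections surjective (and the apex nonempty), so the square lives in $\Sur$. You spell out the surjectivity and nonemptiness checks that the paper leaves implicit, but the construction and the key observation are identical.
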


\begin{proof} Consider any cospan $\Omega_X \rMap{p} \Omega_Z \lMap{q}\Omega_Y$ in $\Sur$, Define 
\[
\Omega_W ~ := ~ \{(x,y) \in \Omega_X \times \Omega_Y \mid p(x) = q(y) \} \enspace .
\]
Then $u := (x,y) \mapsto x$ and $v := (x,y) \mapsto y$ define surjective functions
$\Omega_W \Epi \Omega_X $ and $\Omega_W \Epi \Omega_Y $, hence they are maps in $\Sur$, for which indeed $p \circ u  = q \circ v$.  (More briefly, the pullback in $\Set$ is a commuting square in $\Sur$, though not a pullback in $\Sur$.)
\end{proof}

  Let $P \in \Psh(\sCat{C})$ be a presheaf. 
\begin{definition}[Invariant element] 
\label{definition:c-invariant}
Given $c \colon Y \rMap{} X$ and $y \in P(Y)$ we say that 
$y$ is \emph{$c$-invariant} if, for any parallel pair of maps $d, e \colon Z \rMap{} Y$ such that $c \circ d = c \circ e$, it holds that $y \cdot d = y \cdot e $. 
\end{definition}
\begin{definition}[Descendent] Given $c \colon Y \rMap{} X$ and $y \in P(Y)$ we say that 
$x \in P(X)$ is a  \emph{$c$-descendent} of $y$ if $y = x \cdot c$.  
\end{definition}
\noindent
It is easily seen that if $x$ is a  {$c$-descendent} of $y$ then $y$ is $c$-invariant. The notion of sheaf imposes a converse. 
\begin{definition}[Atomic sheaf] 
\label{definition:atomic-sheaf}
A presheaf $P \in \Psh(\sCat{C})$ 
 is an \emph{atomic sheaf} if, for every map 
$c \colon Y \rMap{} X$ in $\sCat{C}$, every $c$-invariant $y \in P(Y)$ has a unique $c$-descendent
$x \in P(X)$.
\end{definition}
\noindent
We shall also have use for the following weakening of the  notion of sheaf.
\begin{definition}[Separated presheaf] 
\label{definition:separated}
A presheaf $P \in \Psh(\sCat{C})$ 
 is an \emph{separated} (with respect to the atomic topology)  if, for every map 
$c \colon Y \rMap{} X$ in $\sCat{C}$, every $c$-invariant $y \in P(Y)$ has at most one $c$-descendent
$x \in P(X)$
\end{definition}
\begin{proposition}
\label{proposition:separated-inj}
A presheaf $P \in \Psh(\sCat{C})$ 
 is  separated if and only
 if, for all $x,y \in P(X)$ and $q \colon Z \rMap{} X$, it holds that
 $x \cdot q = y \cdot q$ implies $x = y$. 
 \end{proposition}
 \begin{proof}
 Suppose $P$ is separated, and $x,y$ and $q$ are such that $x \cdot q = y \cdot q$. It then holds that $x \cdot q$ is $q$-invariant, and $x$ and $y$ are 
 $q$-descendents of $x \cdot q$. So, by separatedness, $x = y$.
 
 The converse implication, showing that separatedness follows from the statement in the proposition, is easy.
 \end{proof}

Propositions~\ref{proposition:NEls-sheaf} and~\ref{proposition:Yon-is-sheaf}
below illustrate the notion of sheaf in the case of $\sCat{C} = \Sur$.
\begin{proposition}
\label{proposition:NEls-sheaf}
For any set $A$ the presheaf $\NEls(A)$ in $\Psh(\Sur)$ is an atomic sheaf.
\end{proposition}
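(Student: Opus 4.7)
The plan is to unpack the $c$-invariance condition for a presheaf element $Y \in \NEls(A)(\Omega')$ and show that it is exactly equivalent to $Y \colon \Omega' \to A$ being constant on the fibres of $c$, from which the unique descendent is produced by the universal property of $c$ as a surjection in $\Set$.

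Concretely, fix $c \colon \Omega' \Epi \Omega$ in $\Sur$ and a $c$-invariant $Y \in \NEls(A)(\Omega')$. The first step is to prove: for all $\omega_1', \omega_2' \in \Omega'$ with $c(\omega_1') = c(\omega_2')$, one has $Y(\omega_1') = Y(\omega_2')$. To this end, let $\tau \colon \Omega' \to \Omega'$ be the transposition that swaps $\omega_1'$ with $\omega_2'$ and fixes all other elements; this is a bijection, hence a morphism of $\Sur$. Because $c(\omega_1') = c(\omega_2')$, we have $c \circ \tau = c = c \circ \Id_{\Omega'}$, so $c$-invariance applied to the parallel pair $\Id_{\Omega'}, \tau \colon \Omega' \rMap{} \Omega'$ yields $Y = Y \circ \tau$. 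Evaluating at $\omega_1'$ gives $Y(\omega_1') = Y(\omega_2')$.

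The second step is the construction of the descendent. Since $c$ is surjective and $Y$ is constant on its fibres, there exists a unique function $X \colon \Omega \to A$ such that $Y = X \circ c$, i.e.\ $Y = X \cdot c$ in $\NEls(A)$. This $X$ lies in $\NEls(A)(\Omega)$ and is, by definition, a $c$-descendent of $Y$. Uniqueness of $X$ in $\NEls(A)(\Omega)$ is immediate: if $X_1 \circ c = X_2 \circ c$, then surjectivity of $c$ forces $X_1 = X_2$ pointwise.

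I do not anticipate a genuine obstacle; the only thing to notice is the idea in the first step, namely that within $\Sur$ one has at one's disposal bijective endomorphisms of $\Omega'$ itself, and these suffice to witness $c$-invariance as ordinary fibrewise constancy. Once this observation is made, the rest is just the elementary fact that functions out of a surjection are determined by their composites with it.
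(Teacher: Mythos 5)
Your proof is correct, and its overall shape matches the paper's: reduce $c$-invariance to the statement that $Y$ is constant on the fibres of $c$, then obtain the unique descendent from the fact that a function out of a surjection factors uniquely through it. The one genuine difference is the witness you use for the first step. The paper forms the single object $\Omega'' = \{(\omega',\omega'') \mid c(\omega')=c(\omega'')\}$ with its two (surjective, since $\Omega''$ contains the diagonal) projections $u,v$, and applies $c$-invariance once to the parallel pair $u,v$ to get fibrewise constancy in one stroke. You instead apply invariance to the family of parallel pairs $\Id_{\Omega'}, \tau$, one transposition $\tau$ per pair of points in a common fibre, exploiting the fact that $\Sur$ contains all bijective endomorphisms of $\Omega'$. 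Both are valid; your version is more elementary and combinatorial, while the kernel-pair version is the one that transfers to other base categories (it is the same pattern used in the paper's coconfluence argument and in the omitted proof that representables are sheaves), where bijective endomorphisms swapping two designated points need not be available.
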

\begin{proof}
Consider any map $c \colon \Omega' \rMap{} \Omega$ in $\Sur$ and $c$-invariant $Y \in \NEls(A)(\Omega')$, i.e.,
function $Y \colon \Omega' \to A$. Define
\[
\Omega'' := \{(\omega', \omega'') \in \Omega' \times \Omega' \mid c(\omega') = c(\omega'')\} \,,
\]
and $u := (\omega', \omega'') \mapsto \omega' \colon \Omega'' \rMap{} \Omega'$ and
$v := (\omega', \omega'') \mapsto \omega'' \colon \Omega'' \rMap{} \Omega'$.
Clearly $c \circ u = c \circ v$. So, since $Y$ is $c$-invariant, $Y \circ u = Y \cdot u = Y \cdot v = Y \circ v$.
That is,  for any $(\omega' , \omega'') \in \Omega''$, we have $Y (\omega') = Y(\omega'')$; i.e., for any $\omega \in \Omega$, the function $Y$ is constant on $c^{-1}(\omega)$.  
Define $X \in \NEls(A)(\Omega)$, i.e., $X \colon \Omega(A)$ by:
\begin{equation}
\label{equation:X-def-Sur}
X(\omega) ~ := ~ Y(\omega')~\text{where $\omega' \in c^{-1}(\omega)$}\enspace .
\end{equation}
Since $c$ is surjective, this is a good definition by the constancy property remarked above. By definition, 
$Y = X \circ c = X \cdot c$, so $X$ is a $c$-descendent of $Y$.  It is the unique such, because, for any $c$-descendent $X$, the surjectivity of $c$ forces \eqref{equation:X-def-Sur}.
\end{proof}
\begin{proposition}
\label{proposition:Yon-is-sheaf}
For any finite set $\Omega$ the representable presheaf $\Yon(\Omega)$ in $\Psh(\Sur)$ is an atomic sheaf.
\end{proposition}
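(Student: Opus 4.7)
The plan is to deduce the result from Proposition~\ref{proposition:NEls-sheaf} by identifying $\Yon(\Omega)$ with a subpresheaf of $\NEls(\Omega)$. For each $X$ in $\Sur$, one has $\Yon(\Omega)(X) = \Sur(X,\Omega) \subseteq \Set(X,\Omega) = \NEls(\Omega)(X)$, and in both presheaves the restriction action $y \mapsto y \cdot p$ is given by pre-composition, so this inclusion is natural. (Since the composition of surjections is a surjection, the inclusions are indeed preserved by restriction.)

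Next, I would fix a map $c \colon Y \rMap{} X$ in $\Sur$ and take a $c$-invariant $y \in \Yon(\Omega)(Y)$, that is, a surjection $y \colon Y \Epi \Omega$. Because the $c$-invariance condition in Definition~\ref{definition:c-invariant} quantifies only over parallel pairs $d, e \colon Z \rMap{} Y$ in $\Sur$ and refers only to pre-composition, the same $y$, regarded now as an element of $\NEls(\Omega)(Y)$, is $c$-invariant there as well. Proposition~\ref{proposition:NEls-sheaf} therefore yields a unique function $x \colon X \to \Omega$ with $y = x \circ c$.

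The only remaining point is to upgrade $x$ to a member of $\Yon(\Omega)(X)$, i.e., to check that $x$ is surjective. This is immediate: since $y = x \circ c$ has image $\Omega$, so does $x$. Uniqueness of $x$ inside $\Yon(\Omega)(X)$ is then inherited from uniqueness inside $\NEls(\Omega)(X)$, because any competing $x' \in \Yon(\Omega)(X)$ with $y = x' \circ c$ is in particular a function with this property. I do not anticipate any real obstacle; the argument is essentially the observation that the sheaf property transfers to the subpresheaf because the descended element automatically lies within it.
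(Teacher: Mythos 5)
Your proof is correct. The paper omits its own argument, saying only that it is ``very similar to the previous'', i.e.\ to the direct proof of Proposition~\ref{proposition:NEls-sheaf}: there one would again show that a $c$-invariant surjection $y \colon Y \Epi \Omega$ is constant on the fibres of $c$ and define the descendent fibrewise by $x(\omega) := y(\omega')$ for $\omega' \in c^{-1}(\omega)$, checking at the end that $x$ is surjective. You instead reduce to Proposition~\ref{proposition:NEls-sheaf} outright, using that $\Yon(\Omega)$ sits inside $\NEls(\Omega)$ as a subpresheaf (restriction is pre-composition in both, and surjections are closed under pre-composition with surjections), that $c$-invariance of an element of a subpresheaf is literally the same condition as in the ambient presheaf, and that the unique descendent $x$ produced in $\NEls(\Omega)$ is automatically surjective because its composite $x \circ c = y$ already is. This buys economy --- no repetition of the fibre-constancy argument --- at the price of making the result depend on the earlier proposition rather than being self-contained. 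It is worth noting that your final step is exactly an instance of the second clause of Proposition~\ref{proposition:subsheaf} (stated later in the paper): a subpresheaf $Q$ of an atomic sheaf $P$ is itself a sheaf precisely when $x \cdot f \in Q(Y)$ implies $x \in Q(X)$, which here amounts to ``$x \circ c$ surjective implies $x$ surjective''.
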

\noindent 
We omit the proof, which is very similar to the previous. This last proposition asserts that the atomic topology on $\Sur$ is
\emph{subcanonical}.

As a final set of examples, it is standard (and also easily verified) that if $\Sh{P_1}, \dots, \Sh{P_n}$ are sheaves
then the product presheaf $\Sh{P_1} \times \dots \times  \Sh{P_n}$ is also a sheaf, the \emph{product sheaf}. (A similar fact applies more generally to arbitrary category-theoretic limits of sheaves.) In this statement, we introduce a notational convention we shall often adopt. We shall typically use underlined names for  sheaves (as with $\NEls(A)$) in order to emphasise that they are sheaves not just presheaves. 

Assuming the small category $\sCat{C}$ is coconfluent, we write $\ShAt(\sCat{C})$ for the full subcategory 
of  atomic sheaves in $\Psh(\sCat{C})$. While the coconfluence
condition was not actually used in the definition of atomic sheaf above, it nonetheless plays a critical role.
 For the benefit of readers who know the relevant category theory, 
 we reiterate that the coconfluence condtion is equivalent to   the collection of atomic covers in $\sCat{C}$  forming a
 Grothendieck topology, which in turn means that 
   $\ShAt(\sCat{C})$ is a \emph{Grothendieck topos}, and the inclusion functor $\ShAt(\sCat{C}) \to \Psh(\sCat{C})$ has a left adjoint $\AS :  \Psh(\sCat{C}) \to \ShAt(\sCat{C})$, the \emph{associated sheaf} functor~\cite{MM}. Composing with the Yoneda functor, we obtain a functor $\AS\Yon: \sCat{C} \to \ShAt(\sCat{C})$. Because we are working with atomic topologies, every map in $\sCat{C}$ is a \emph{cover}, i.e., it is mapped by $\AS\Yon$ to an epimorphism in $\ShAt(\sCat{C})$. It thus follows from the Yoneda lemma that a necessary condition for
 every 
 representable presheaf to be a  sheaf (i.e., for the atomic topology to be subcanonical) is that all maps in $\sCat{C}$ are epimorphic.


\section{Atomic sheaf logic}
\label{section:atomic-sheaf-logic}

For the next two sections, let $\sCat{C}$ be an arbitrary  coconfluent small category. We present 
a 
fragment of the internal logic of the topos $\ShAt(\sCat{C})$ of atomic sheaves, which we will extend later
with equivalence and conditional independence formulas. 
The fragment we consider is simply multi-sorted first-order logic. Let $\Sort$ be a collection of sorts.
We assume a collection 
of primitive relation symbols, where each relation symbol $\Syn{R}$ has an \emph{arity} given as a finite sequence of sorts $\Arity(\Syn{R}) \in \Sort^*$. 
As in Section~\ref{section:multiteams}, variables $\Syn{x^A}$ have explicit sorts. The rules for forming atomic formulas
are: 
\begin{itemize}
\item if $\Arity(\Syn{R}) = \Syn{A}_1 \dots \Syn{A}_n$ and $\Syn{x}_1^{\Syn{A}_1}, \dots, \Syn{x}_n^{\Syn{A}_n}$ is a list of variables of the corresponding sorts, then $\Syn{R}(\Syn{x_1^{{A}_1}}, \dots, \Syn{x}_n^{\Syn{A}_n})$ is a formula; 

\item if $\Syn{x^A}, \Syn{y^A}$ have the same sort then $\Syn{x^A} = \Syn{y^A}$ is a formula.


\end{itemize}
\noindent
The grammar for formulas extends atomic formulas with the usual constructs of first-order logic. 
\begin{align*}
\Phi ~ ::= ~ & \Syn{R}(\Syn{x}_1^{\Syn{A}_1}, \dots, \Syn{x}_n^{\Syn{A}_n}) \mid 
\Syn{x^A} = \Syn{y^A} 
\mid \neg \Phi \mid \Phi \wedge \Phi \mid \Phi \vee \Phi \mid \Phi \to \Phi \mid \exists \Syn{x^A}.\, \Phi \mid \forall \Syn{x^A}.\, \Phi \enspace .
\end{align*}
\noindent
We write $\FV(\Phi)$ for the set of free variables of a formula $\Phi$.


\begin{definition}[Semantic interpretation] 
\label{definition:semantic-interpretation}
A \emph{semantic interpretation} in $\ShAt(\sCat{C})$ is given by
a function mapping every sort $\Syn{A}$ to an atomic sheaf 
$\Sh{\Syn{A}}$ (i.e., to an object of $\ShAt(\sCat{C})$), and a function mapping
every relation symbol $\Syn{R}$ of arity $\Syn{A}_1 \dots \Syn{A}_n$ to a \emph{subsheaf}
$\Sh{\Syn{R}} ~ \subseteq ~ \Sh{\Syn{A}_1} \times \dots \times \Sh{\Syn{A}_1}$.  
\end{definition}

\begin{definition}[Subpresheaf/subsheaf] For $P, Q \in \Psh(\sCat{C})$, we say that
$Q$ is a \emph{subpresheaf} of $P$ (notation $Q \subseteq P$)  if:
\begin{itemize}
\item for every object $X \in \sCat{C}$, we have $Q(X) \subseteq P(X)$, and

\item for every map $f \colon Y \rMap{} X$ in $\sCat{C}$ and element $x \in Q(X)$, it holds that
$x \cdot_Q f = x \cdot_P f$. 
\end{itemize}
For sheaves $\Sh{P}, \Sh{Q}$ with  $\Sh{Q} \subseteq \Sh{P}$, we say  $\Sh{Q}$ is a \emph{subsheaf} of
$\Sh{P}$.
\end{definition}
\noindent
The following is standard, and also easily verified. 
\begin{proposition} 
\label{proposition:subsheaf}
Given a presheaf $P \in \Psh(\sCat{C})$ and a function $Q$ mapping
every object $X \in \sCat{C}$ to a subset of $P(X)$, the function $Q$ determines a 
(necessarily unique) subpresheaf of $P$ if and only if:
\begin{itemize}
\item for every  $f \colon Y \rMap{} X$ in $\sCat{C}$ and $x \in Q(X)$, it holds that $x \cdot_P f \in Q(Y)$.
\end{itemize}
If the above holds and $P$ is also a sheaf, then the uniquely determined subpresheaf $Q$ is 
itself a sheaf 
if and only if
\begin{itemize}
\item for every  $f \colon Y \rMap{} X$ in $\sCat{C}$ and $x \in P(X)$, if $x \cdot_P f \in Q(Y)$ then $x \in Q(X)$.
\end{itemize}
\noindent
(This characterisation is valid in the  form above because we are considering only sheaves for the atomic topology.)
\end{proposition}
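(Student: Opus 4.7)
The proof is in two parts, one for each of the two characterisations, and in both cases the key is that the sheaf/presheaf structure on $Q$ is forced to agree with that on $P$.

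For the first part, the forward direction is immediate: if $Q \subseteq P$ is a subpresheaf then $x \in Q(X)$ gives $x \cdot_P f = x \cdot_Q f \in Q(Y)$ by the subpresheaf definition. For the converse, the plan is simply to define the restriction on $Q$ by $x \cdot_Q f := x \cdot_P f$; the stated closure condition ensures this lands in $Q(Y)$, and functoriality is inherited directly from $P$. The inclusion $Q \hookrightarrow P$ is then a natural transformation by construction. Uniqueness of the subpresheaf structure is forced because the second bullet in the definition of subpresheaf requires $x \cdot_Q f = x \cdot_P f$.

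For the second part, assume $P$ is a sheaf and that the closure condition of the first part holds, so that $Q$ is at least a subpresheaf. For the forward direction, suppose $Q$ is itself a sheaf, and take $f \colon Y \rMap{} X$ with $x \in P(X)$ such that $y := x \cdot_P f \in Q(Y)$. Any $f$-descendent is automatically $f$-invariant, so $y$ is $f$-invariant in $P$, hence also in $Q$ since the restriction maps coincide. The sheaf condition on $Q$ yields an $f$-descendent $x' \in Q(X)$ of $y$. Viewed in $P$, both $x$ and $x'$ are $f$-descendents of $y$, so by uniqueness in the sheaf $P$ we get $x = x' \in Q(X)$.

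For the reverse direction, assume the descent condition and let $y \in Q(Y)$ be $f$-invariant in $Q$. Because $Q \subseteq P$ and restrictions agree, $y$ is $f$-invariant in $P$ as well, so the sheaf $P$ provides a unique $x \in P(X)$ with $x \cdot_P f = y$. Since $y \in Q(Y)$, the hypothesis gives $x \in Q(X)$, and uniqueness of the descendent in $Q$ follows from uniqueness in $P$ combined with the separatedness of $Q$ (itself inherited from $P$). I expect the only point that requires any thought is making explicit that the notion of $f$-invariance transfers freely between $P$ and $Q$ because the two restriction operations agree on $Q$; once this is spelled out, everything else is a routine transport of the atomic sheaf condition across the inclusion.
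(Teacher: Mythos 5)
Your proof is correct and is exactly the routine verification the paper has in mind (the paper omits the proof, remarking only that the result is ``standard, and also easily verified''). The one point worth being explicit about --- that $f$-invariance transfers between $P$ and $Q$ because the restriction maps agree --- is the point you do make explicit, and the rest (using uniqueness of descendents in the sheaf $P$ to force $x = x'$, and to give uniqueness in $Q$) is handled correctly.
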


The  three propositions below  illustrate the notion of subsheaf. 
The first two observe that the relations of equiextension and conditional independence of nondeterministic variables (Definitions~\ref{definition:equiextension} and~\ref{definition:conditional-independence-NV}) form subsheaves, a fact which will enable us to 
extend atomic sheaf logic with equivalence and conditional-independence relations at the end of the present section.
Although the proofs are straightforward, we include them to help give 
readers who are not familiar with sheaves some feeling for the subsheaf property.

\begin{proposition} 
\label{proposition:equiextension-subsheaf}
The subsets
\[
\{(X,Y) \mid X \!\bowtie \!Y \} \subseteq (\NEls(A) \times \NEls(A))(\Omega)
\]
define a subsheaf $~\bowtie_{A} \subseteq \NEls(A) \times \NEls(A)$
 via Proposition~\ref{proposition:subsheaf}. 
\end{proposition}
\begin{proof}
For the subpresheaf property, suppose $(X,Y) \in (\NEls(A) \times \NEls(A))(\Omega)$ are such that
$(X, Y) \in {\,\bowtie_{A}\!(\Omega)}$; i.e., we have equality of images
$X(\Omega) = Y(\Omega)$. Let 
$q \colon \Omega' \rMap{} \Omega$ be a map in $\Sur$.
We need to show that $(X \cdot q, Y\cdot q) \in {\,\bowtie_{A}\!(\Omega')}$. But indeed
\[(X \cdot q)(\Omega') = X(q(\Omega')) = X(\Omega) = Y(\Omega) = Y(q(\Omega')) =(Y \cdot q)(\Omega') \enspace ,\]
where the second and fourth equalities hold because $q$ is surjective.

For the subsheaf property, suppose we have $(X,Y) \in (\NEls(A) \times \NEls(A))(\Omega)$ and 
map $q \colon \Omega' \rMap{} \Omega$ in $\Sur$ such that $(X \cdot q, Y \cdot q) \in {\,\bowtie_{A}\!(\Omega')}$.
By the definition of equiextension,  $X(q(\Omega')) = Y(q(\Omega'))$.
Because $q$ is surjective, $X(\Omega) = Y(\Omega)$. That is, 
$(X, Y) \in {\,\bowtie_{A}\!(\Omega)}$, as required by Proposition~\ref{proposition:subsheaf} to show the subsheaf property. 
\end{proof}

\begin{proposition} 
\label{proposition:indep-subsheaf}
The subsets
\[
\{(X,Y,Z) \mid X \!\Indep \!Y \!\Cond \!Z\} \subseteq (\NEls(A) \times \NEls(B) \times \NEls(C))(\Omega)
\]
define a subsheaf $\Indep_{A,B|C} \subseteq \NEls(A) \times \NEls(B) \times \NEls(C)$
 via Prop.~\ref{proposition:subsheaf}. 
\end{proposition}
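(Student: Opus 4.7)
The plan is to invoke Proposition~\ref{proposition:subsheaf} directly. Since $\NEls(A) \times \NEls(B) \times \NEls(C)$ is already a sheaf (Proposition~\ref{proposition:NEls-sheaf} together with closure of sheaves under products), I need to verify two conditions: (i) the family of subsets is closed under the restriction action, so that it determines a subpresheaf, and (ii) the reverse implication holds, i.e., if the restricted triple belongs to the subpresheaf then so does the original triple. Both conditions will reduce to bookkeeping with witnesses, using only that every morphism $p \colon \Omega' \Epi \Omega$ in $\Sur$ is surjective.

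For (i), fix $p \colon \Omega' \Epi \Omega$ and suppose $X \Indep Y \Cond Z$ on $\Omega$. To show $(X \cdot p) \Indep (Y \cdot p) \Cond (Z \cdot p)$, take $\omega'_1, \omega'_2 \in \Omega'$ witnessing the two hypotheses of Definition~\ref{definition:conditional-independence-NV} for chosen $a,b,c$. Setting $\omega_i := p(\omega'_i)$ produces witnesses in $\Omega$ for the hypotheses downstairs; the hypothesis on $(X,Y,Z)$ yields $\omega \in \Omega$ with $X(\omega)=a$, $Y(\omega)=b$, $Z(\omega)=c$, and surjectivity of $p$ delivers $\omega' \in p^{-1}(\omega)$, which is the required joint witness in $\Omega'$.

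For (ii), suppose conversely that $(X \cdot p) \Indep (Y \cdot p) \Cond (Z \cdot p)$ on $\Omega'$, and take witnesses $\omega_1, \omega_2 \in \Omega$ of the two hypotheses for $(X,Y,Z)$ at $a,b,c$. Using surjectivity of $p$, pick preimages $\omega'_i \in p^{-1}(\omega_i)$; these witness the hypotheses of conditional independence for $(X \cdot p, Y \cdot p, Z \cdot p)$, so by assumption there is a joint witness $\omega' \in \Omega'$. Then $p(\omega')$ is the required joint witness in $\Omega$.

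There is no real obstacle to overcome: the argument is symmetric in the two directions, each consisting of a single use of surjectivity of $p$ in $\Sur$. It is worth emphasising that \emph{both} the presheaf condition and the sheaf condition genuinely need surjectivity, which is precisely what the atomic topology on $\Sur$ provides; in particular, the reverse direction would fail if arbitrary functions were allowed as morphisms, since one could then lose witnesses outside the image.
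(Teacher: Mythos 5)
Your proof is correct and is exactly the routine verification the paper intends (it states this proposition without printing a proof): both the subpresheaf and subsheaf conditions of Proposition~\ref{proposition:subsheaf} follow by transporting witnesses along the surjection $p$, lifting the joint witness in one direction and the hypothesis witnesses in the other. Nothing is missing.
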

\begin{proof}
We leave the subpresheaf property to the reader and verify just the subsheaf property.
Suppose we have 
$(X,Y,Z) \in (\NEls(A) \times \NEls(B) \times \NEls(C))(\Omega)$
 and 
map $q \colon \Omega' \rMap{} \Omega$ in $\Sur$ such that
$(X \cdot q ,\,Y \cdot q,  \,Z \cdot q) \in  {\Indep_{A,B|C}(\Omega')}$; i.e., 
$ X \cdot q \Indep Y \cdot q  \Cond Z \cdot q$. We need to show that 
$(X,Y,Z) \in  {\Indep_{A,B|C}(\Omega)}$; i.e., $X \Indep Y \Cond Z$.

Suppose  that there exists $\omega_1 \in \Omega$ such that
$X(\omega_1) = a$ and $Z(\omega_1) = c$, and there exists
$\omega_2 \in \Omega$ such that
$Y(\omega_2) = b$ and $Z(\omega_2) = c$. Using the surjectivity of $q$, let
$\omega_1', \omega_2' \in \Omega'$ be such that $q(\omega_1') = \omega_1$ and
$q(\omega_2') = \omega_2$. Then $(X \cdot q)(\omega_1') = a$ and 
$(Z \cdot q)(\omega_1') = c$. Similarly
$(Y \cdot q)(\omega_2') = b$ and 
$(Z \cdot q)(\omega_2') = c$.
Because $ X \cdot q \Indep Y \cdot q  \Cond Z \cdot q$,
there exists $\omega' \in \Omega'$ such that
$(X \cdot q)(\omega') = a$ and $(Y \cdot q)(\omega') = b$
and $(Z \cdot q)(\omega') = c$. So $\omega := q(\omega')$ satisfies
$X(\omega) = a$ and 
$Y(\omega) = b$ and $Z(\omega) = c$, showing that indeed
$X \Indep Y \Cond Z$.
\end{proof}


As further interesting examples of subsheaves, we  show how 
subsheaves of the  sheaf $\NEls(A)$  can be defined by using modalities to lift 
 properties $P \subseteq A$ to properties of $A$-valued nondeterministic variables. 
\begin{proposition} 
\label{proposition:R-subsheaf}
For any set $A$ and subset $P \subseteq A$, the definitions
\begin{align*}
\Sh{\Box P}(\Omega)~ := ~ & 
\{ X :  \Omega \to A \mid \forall \omega \in \Omega.\,  X(\omega) \in P \} \\
\Sh{\Diamond P}(\Omega)~ := ~ & 
\{ X :  \Omega \to A \mid \exists \omega \in \Omega.\,  X(\omega) \in P \} 
\end{align*}
define subsheaves
$\Sh{\Box P}$ and $ \Sh{\Diamond P}$ of $\NEls(A)$ in $\ShAt(\Sur)$, by Proposition~\ref{proposition:subsheaf}.
%
\end{proposition}
\noindent
This time we omit the proof, since the modality subsheaves will not play any further role in the paper. We mention, however, that 
the constructions in Proposition~\ref{proposition:R-subsheaf} can be used as the basis for an interesting modal extension of the first-order atomic sheaf logic of $\ShAt(\Sur)$, in which the modalities mediate between the ordinary first-order logic of variables valued in  $A$ and 
the sheaf logic of nondeterministic variables valued in $\NEls(A)$.

\begin{figure}[t]
\begin{align*}
X \Forces_{\!\underline\rho}  \Syn{R}(\Syn{x}_1^{\Syn{A}_1}, \dots, \Syn{x}_n^{\Syn{A}_n}) ~ \Iff ~ & ~
  (\underline\rho(\Syn{x}_1^{\Syn{A}_1}), \dots, \underline\rho(\Syn{x}_n^{\Syn{A}_n}))  \in \Sh{\Syn{R}}(X)
 \\
X \Forces_{\!\underline\rho}  \Syn{x^A}  = \Syn{y^A} ~ \Iff ~ & ~ \underline\rho(\Syn{x^A})=\underline\rho(\Syn{y^A}) 
\\
X \Forces_{\!\underline\rho} \neg \Phi  ~  \Iff  ~ & ~
    X \not\Forces_{\!\underline\rho} \Phi
\\
X \Forces_{\!\underline\rho} \Phi \wedge \Psi  ~  \Iff  ~ & ~
    X \Forces_{\!\underline\rho} \Phi   ~\text{and}~
    X \Forces_{\!\underline\rho} \Psi
\\
X \Forces_{\!\underline\rho} \Phi \vee \Psi  ~  \Iff  ~ & ~
    X \Forces_{\!\underline\rho} \Phi   ~\text{or}~
    X \Forces_{\!\underline\rho} \Psi
\\
X \Forces_{\!\underline\rho} \Phi \to \Psi  ~  \Iff  ~ & ~
    X \not\Forces_{\!\underline\rho} \Phi   ~\text{or}~
    X \Forces_{\!\underline\rho} \Psi
\\
X \Forces_{\!\underline\rho} \exists \Syn{x^A}.\, \Phi   ~  \Iff  ~ & ~ \exists Y. \;  \exists f \colon Y \rMap{} X. \,  
\exists x  \in \Sh{A}(Y). ~Y  \Forces_{(\underline{\rho} \,\cdot f){[\Syn{x^A}:=x]}}  \Phi  
\\
X \Forces_{\!\underline\rho} \forall \Syn{x^A}.\, \Phi   ~  \Iff  ~ & ~ \forall Y. \; \forall f \colon Y \rMap{} X. \,  
\forall x  \in \Sh{A}(Y). ~Y  \Forces_{(\underline{\rho}\, \cdot f){[\Syn{x^A}:=x]}}  \Phi   
\end{align*}
\caption{Semantics of atomic sheaf logic}
\label{figure:atomic-sheaf-semantics}
\end{figure}


Returning to the general semantic interpretation of atomic sheaf logic in $\ShAt(\sCat{C})$, the semantics of formulas is given by  a forcing relation
\[
X \Forces_{\Sh{\rho}} \Phi \enspace ,
\]
where $\Phi$ is a formula, $X$ is an object of $\sCat{C}$ and 
\[
\Sh{\rho} ~ \in ~ \prod_{\Syn{x}^\Syn{A} \in \{\Syn{x}_1^{\Syn{A}_1}, \dots, \Syn{x}_n^{\Syn{A}_n}\}}  \Sh{A}(X)
\]
 is what we call an \emph{$X$-assignment}: it maps every variable $\Syn{x}^\Syn{A}$
 in a set $\{\Syn{x}_1^{\Syn{A}_1}, \dots, \Syn{x}_n^{\Syn{A}_n}\} \supseteq \FV(\Phi)$ 
 to an element $\Sh{\rho}(\Syn{x}^\Syn{A}) \in \Sh{\Syn{A}}(X)$, where $\Sh{\Syn{A}}$ is the sheaf interpreting the sort $\Syn{A}$ of the variable. 
 
The definition of the forcing relation is presented in Fig.~\ref{figure:atomic-sheaf-semantics}. 
In the quantifier clauses, 
we write write $\Sh{\rho} \,\cdot f$ for the $Y$-assignment $\Syn{z^B} \mapsto \Sh{\rho}(\Syn{z^B}) \cdot f$,
where $\Sh{\rho}$ is an $X$-assignment and  $f \colon Y \rMap{} X$ is a map in $\sCat{C}$,

The clauses for the propositional connectives in Fig.~\ref{figure:atomic-sheaf-semantics} 
look remarkably simple-minded. They are, nonetheless,  equivalent to the more involved clauses that appear  in the \emph{sheaf semantics}  for  logic in a sheaf topos~\cite{MM}. The simplification in formulation is possible because we are working in the special case of atomic sheaves. The clauses for the existential and universal quantifier are also taken from sheaf semantics,  and do not admit further simplification. Their non-local nature (they involve a change of world along $f \colon Y \to X$) is the key feature that will give atomic sheaf logic its character, when we later include equivalence and conditional independence formulas.

The next results summarise fundamental  properties of the forcing relation and the logic it induces. The first is very basic, but we include it explicitly because the notion of \emph{locality} it addresses, namely
the dependency of  semantics only on assignments  to the free variables appearing in a formula, 
has been a delicate issue in the context of (in)dependence logics. 

\begin{proposition}[Locality] 
\label{proposition:locality}
For any  formula $\Phi$, object $X$ of  $\sCat{C}$ and
$X$-assignments $\Sh{\rho}, \Sh{\rho}'$ that are defined and coincide on $\FV(\Phi)$.
\[
\text{$X \Forces_{\Sh{\rho}} \Phi$ if and only if $X \Forces_{\Sh{\rho}'} \Phi$.}
\]
\end{proposition}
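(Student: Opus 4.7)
The plan is a straightforward structural induction on the formula $\Phi$, since locality is a purely syntactic stability property and the semantic clauses in Fig.~\ref{figure:atomic-sheaf-semantics} consult the assignment only through the variables syntactically present in the formula (or in a subformula after binding).

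For the base cases, the atomic clauses depend on $\Sh{\rho}$ only through $\Sh{\rho}(\Syn{x}_i^{\Syn{A}_i})$ for the variables $\Syn{x}_i^{\Syn{A}_i}$ occurring in the atomic formula, and $\FV(\Phi)$ for an atomic $\Phi$ is exactly the set of these variables; so the hypothesis that $\Sh{\rho}$ and $\Sh{\rho}'$ coincide on $\FV(\Phi)$ immediately gives the equivalence. The propositional cases ($\neg$, $\wedge$, $\vee$, $\to$) are handled by noting that $\FV(\Phi \star \Psi) = \FV(\Phi) \cup \FV(\Psi)$, so the assignments agree on each of $\FV(\Phi)$ and $\FV(\Psi)$; the induction hypothesis then transfers forcing of the subformulas and the Boolean combinations follow.

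The quantifier cases are the only ones requiring any care. Consider $\exists \Syn{x^A}. \Phi$, with $\FV(\exists \Syn{x^A}. \Phi) = \FV(\Phi) \setminus \{\Syn{x^A}\}$. Given $f \colon Y \rMap{} X$ and $x \in \Sh{A}(Y)$, I need to check that the $Y$-assignments $(\Sh{\rho} \cdot f)[\Syn{x^A}:=x]$ and $(\Sh{\rho}' \cdot f)[\Syn{x^A}:=x]$ agree on $\FV(\Phi)$ so that the induction hypothesis applies. For any $\Syn{y^B} \in \FV(\Phi)$: if $\Syn{y^B} = \Syn{x^A}$ then both assignments return $x$; if $\Syn{y^B} \neq \Syn{x^A}$ then $\Syn{y^B} \in \FV(\Phi) \setminus \{\Syn{x^A}\} = \FV(\exists \Syn{x^A}. \Phi)$, so by hypothesis $\Sh{\rho}(\Syn{y^B}) = \Sh{\rho}'(\Syn{y^B})$, and applying the functorial action of $f$ gives $\Sh{\rho}(\Syn{y^B}) \cdot f = \Sh{\rho}'(\Syn{y^B}) \cdot f$. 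The induction hypothesis applied at $Y$ then yields $Y \Forces_{(\Sh{\rho} \cdot f)[\Syn{x^A}:=x]} \Phi$ iff $Y \Forces_{(\Sh{\rho}' \cdot f)[\Syn{x^A}:=x]} \Phi$, and both sides of the existential clause become equivalent. The universal case is identical, with $\forall$ in place of $\exists$.

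The only potential pitfall is the standard one of variable capture in the quantifier step, handled by the case split on whether the relevant variable coincides with the bound one; there is no deeper obstacle since forcing for the propositional connectives is given by the naive Boolean clauses (specific to the atomic case) rather than by the more involved sheaf-theoretic clauses that would force an appeal to covers.
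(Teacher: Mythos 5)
Your proof is correct and matches the paper's approach: the paper states that Proposition~\ref{proposition:locality} is proved by induction on the structure of the formula and omits the details as standard, and your structural induction (with the usual case split on the bound variable in the quantifier clauses) is exactly that standard argument. The coconfluence the paper mentions is only needed for the companion sheaf-property proposition, not for locality, so your proof is complete as written.
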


\begin{proposition}[Sheaf property]
\label{proposition:forcing-sheaf}
For any formula $\Phi$, map 
$f \colon Y \rMap{} X$ in $\sCat{C}$,  and
$X$-assignment $\Sh{\rho}$ defined on $\FV(\Phi)$. 
\begin{equation}
\label{equation:sheaf-prop}
\text{$X \Forces_{\Sh{\rho}} \Phi$ if and only if $Y \Forces_{\Sh{\rho} \, \cdot f} \Phi$.}
\end{equation}
\end{proposition}

\noindent
Proposition~\ref{proposition:forcing-sheaf} is called the sheaf property because it 
is equivalent to the statement that, for every formula $\Phi$ with
$\FV(\Phi) \subseteq \{\Syn{x}_1^{\Syn{A}_1}, \dots , \Syn{x}_n^{\Syn{A}_n}\}$, it holds that
\begin{equation}
\label{equation:Phi-subsheaf}
\{(x_1, \dots, x_n) \mid X \Forces _{\Syn{x}_i^{\Syn{A}_i} \mapsto x_i} \Phi \} ~ 
\subseteq  ~ (\Sh{A_1} \times \dots \times \Sh{A_n})(X) 
\end{equation}
defines a subsheaf of $\Sh{A_1} \times \dots \!\times \Sh{A_n}$ via Proposition~\ref{proposition:subsheaf}.

Propositions~\ref{proposition:locality} 
and~\ref{proposition:forcing-sheaf} are both proved by induction on the structure of the formula.
We omit the proof of Proposition~\ref{proposition:locality}, which is straightforward.
Proposition~\ref{proposition:forcing-sheaf}
asserts that 
the \emph{monotonicity} and \emph{local character} properties from~\cite[{\S}VI.7]{MM} hold.
In \emph{loc.\ cit.}, these properties are shown to hold for arbitrary Grothendiek topologies, whereas
Proposition~\ref{proposition:forcing-sheaf} concerns just the special case of atomic topologies. 
Nevertheless, we give a direct proof of Proposition~\ref{proposition:forcing-sheaf}, both for the benefit of readers who do not know general sheaf theory, and also to demonstrate the crucial role played by the coconfluence property of $\sCat{C}$.
\begin{proof}[Proof of Proposition~\ref{proposition:forcing-sheaf}]
By induction on the structure of $\Phi$.

In the case that $\Phi$ is an atomic formula of the form $\Syn{R}(\Syn{x}_1^{\Syn{A}_1}, \dots, \Syn{x}_n^{\Syn{A}_n})$,
property~\eqref{equation:sheaf-prop} holds because 
$\Sh{\Syn{R}}$ is a subsheaf of $\Sh{\Syn{A}_1} \times \dots \times \Sh{\Syn{A}_1}$.

If $\Phi$ is an equality $\Syn{x^A}  = \Syn{y^A}$, then the left-to-right implication of~\eqref{equation:sheaf-prop} is immediate. For the right-to-left implication, suppose 
$Y \Forces_{\Sh{\rho} \, \cdot f} \Syn{x^A}  = \Syn{y^A}$; that is, $\Sh{\rho}(\Syn{x^A}) \cdot f = \Sh{\rho}(\Syn{y^A}) \cdot f$.
Since $\Sh{\Syn{A}}$ is a sheaf, hence separated, we have
$\Sh{\rho}(\Syn{x^A}) = \Sh{\rho}(\Syn{y^A})$, by Proposition~\ref{proposition:separated-inj}. That is, 
$X \Forces_{\Sh{\rho}} \Syn{x^A}  = \Syn{y^A}$, as required.

The cases for the propositional connectives are all easy. We note only that, for the cases of negation and implication, in which there are negated clauses in the definition of the forcing relation (Fig.~\ref{figure:atomic-sheaf-semantics}), the induction hypothesis is used in the opposite direction of~\eqref{equation:sheaf-prop} to the implication being proved. 

In the case that $\Phi$ is an existentially quantified formula $\exists \Syn{x^A}.\, \Phi'$, we prove the left-to-right implication of~\eqref{equation:sheaf-prop}. Accordingly, suppose that 
$X \Forces_{\!\underline\rho} \exists \Syn{x^A}.\, \Phi'$. By the forcing clause for the existential quantifier,  there exist  $g: Z \rMap{} X$  and 
$x  \in \Sh{A}(Z)$ such that $Z  \Forces_{(\underline{\rho} \,\cdot g){[\Syn{x^A}:=x]}}  \Phi'$. By coconfluence, there exists a span $Y \lMap{f'} W \rMap{g'} Z$ such that $g \circ g' = f \circ f'$. By the induction hypothesis,
$W  \Forces_{(\underline{\rho} \,\cdot g \cdot g'){[\Syn{x^A}:=x\cdot g']}}  \Phi'$; i.e.,
$W  \Forces_{(\underline{\rho} \,\cdot f \cdot f'){[\Syn{x^A}:=x\cdot g']}}  \Phi'$. Whence, by the forcing clause for the existential quantifier, $Y \Forces_{\!\underline\rho\,\cdot f} \exists \Syn{x^A}.\, \Phi'$, as required.
We leave the easier right-to-left implication of~\eqref{equation:sheaf-prop}, which does not involve coconfluence, to the reader.

The proof for the universal quantifier is similar. (It can also be bypassed, by noting that the forcing interpretation of
$\exists \Syn{x^A}.\, \Phi'$ is equivalent to that for $\neg \exists \Syn{x^A}.\, \neg\Phi'$.)
\end{proof}

It is standard that sheaf semantics, for an arbitrary Grothendieck topology, always validates intuitionistic logic. 
In the special case of an atomic topology, the \emph{law of excluded middle} $\Phi \vee \neg \Phi$ is also validated, hence atomic sheaf logic is classical. In more detail, atomic topologies are special cases of dense Grothendieck topologies, and categories of sheaves for the latter are always boolean, hence classical logic is validated. This whole picture is explained in~\cite{MM}. 
We shall not, however, assume familiarity with this abstract  picture. Accordingly, we give a brief, direct explanation of how atomic sheaf logic
validates classical logic.

A formula $\Phi$ is said to be \emph{true (in $\ShAt(\sCat{C})$) under all assignments} if, for every object $X$ of  $\sCat{C}$ and
$X$-assignments $\Sh{\rho}$ defined on $\FV(\Phi$),
it holds that $X \Forces_{\Sh{\rho}} \Phi$.
\begin{theorem}[Classical logic]
\label{theorem:classical}
If $\Phi$ is a theorem of (multisorted) classical logic then it is true in $\ShAt(\sCat{C})$ under all assignments.
\end{theorem}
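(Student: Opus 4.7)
The plan is to induct on the length of a Hilbert-style derivation of $\Phi$ in multi-sorted classical first-order logic, checking that each axiom is true under all assignments in $\ShAt(\sCat{C})$ and that each inference rule preserves this property. The propositional fragment requires essentially no work: the clauses for $\neg$, $\wedge$, $\vee$, $\to$ in Figure~\ref{figure:atomic-sheaf-semantics} compute $X \Forces_{\Sh{\rho}} \Phi$ at each fixed $X$ and $\Sh{\rho}$ by applying the corresponding classical metalogical operations to the truth values of the immediate subformulas. Consequently, at each such $(X,\Sh{\rho})$, the restriction of the forcing relation to propositional combinations of atomic and quantified subformulas is a two-valued truth assignment in the sense of ordinary classical logic, so every propositional tautology is forced, and modus ponens is evidently sound.

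For the quantifier rules I would first establish a substitution lemma by induction on $\Phi$:
\[
X \Forces_{\Sh{\rho}} \Phi[y^{\Syn{A}}/x^{\Syn{A}}] \quad \Iff \quad X \Forces_{\Sh{\rho}[x^{\Syn{A}}:=\Sh{\rho}(y^{\Syn{A}})]} \Phi
\]
whenever $y^{\Syn{A}}$ is free for $x^{\Syn{A}}$ in $\Phi$, with the quantifier cases dispatched by Proposition~\ref{proposition:locality}, which lets one identify extensions of an assignment that agree on the relevant free variables. With this in hand, universal instantiation $\forall x^{\Syn{A}}.\,\Phi \to \Phi[y^{\Syn{A}}/x^{\Syn{A}}]$ follows from the $\forall$-clause by taking $f := \Id_X$ and $a := \Sh{\rho}(y^{\Syn{A}}) \in \Sh{A}(X)$ and then applying the substitution lemma; existential generalisation is dual. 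The axiom $\Phi \to \forall x^{\Syn{A}}.\,\Phi$ (for $x^{\Syn{A}} \notin \FV(\Phi)$), the $\forall$-distribution axiom, and $\alpha$-renaming unfold routinely using Propositions~\ref{proposition:locality} and~\ref{proposition:forcing-sheaf}. The generalisation rule is immediate from the $\forall$-clause: if $\Phi$ is true under all assignments, then at any $X$ with $X$-assignment $\Sh{\rho}$ on $\FV(\Phi) \setminus \{x^{\Syn{A}}\}$, for any $f \colon Y \rMap{} X$ and any $a \in \Sh{A}(Y)$, the $Y$-assignment $(\Sh{\rho} \cdot f)[x^{\Syn{A}} := a]$ is defined on $\FV(\Phi)$, so the inductive hypothesis gives $Y \Forces_{(\Sh{\rho} \cdot f)[x^{\Syn{A}} := a]} \Phi$, which is exactly what the $\forall$-clause demands.

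The only genuinely technical step is the substitution lemma, where the non-local character of the quantifier clauses forces one to check that substitution commutes with restriction along an arbitrary $f \colon Y \rMap{} X$ and with extension by a fresh bound variable. This is pure bookkeeping, fully enabled by Propositions~\ref{proposition:locality} and~\ref{proposition:forcing-sheaf}; no further appeal to Booleanness or to properties specific to the atomic topology is needed, because the classical-looking propositional clauses of Figure~\ref{figure:atomic-sheaf-semantics} already encapsulate what that structure buys us.
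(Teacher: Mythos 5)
Your proof is correct, but it takes a genuinely different route from the paper's. The paper disposes of Theorem~\ref{theorem:classical} by citation: the atomic topology is a special case of the dense topology, sheaves for the dense topology form a Boolean topos, and soundness of classical logic for the internal (Kripke--Joyal) semantics of a Boolean topos is standard; the only paper-specific ingredient is the remark that the simple-minded clauses of Figure~\ref{figure:atomic-sheaf-semantics} agree with the usual sheaf-semantics clauses in the atomic case. You instead give a direct syntactic soundness induction over a Hilbert-style derivation, working straight from Figure~\ref{figure:atomic-sheaf-semantics}. This is a legitimate and in some ways preferable argument: it is self-contained, it does not require the reader to verify the equivalence of Figure~\ref{figure:atomic-sheaf-semantics} with the standard sheaf semantics, and it makes visible exactly where the special structure enters --- namely through Proposition~\ref{proposition:forcing-sheaf}, whose proof uses coconfluence and which you need for the ``return trip'' from $Y$ back to $X$ in the $\exists$-elimination rule (from $\Phi \to \Psi$ with $\Syn{x^A} \notin \FV(\Psi)$ infer $\exists \Syn{x^A}.\,\Phi \to \Psi$) and in the vacuous-quantifier axiom. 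So your closing claim that no appeal to atomicity is needed should be read as ``no appeal beyond Propositions~\ref{proposition:locality} and~\ref{proposition:forcing-sheaf}'', which is what you in fact wrote. Two small points you should make explicit. First, the theorem presumably covers classical logic \emph{with equality}, so the reflexivity and Leibniz axioms need a (one-line) check; they hold because $=$ is interpreted as literal equality in $\Sh{\Syn{A}}(X)$ and the semantics of a relation symbol depends only on the tuple of values. Second, both your argument and the paper's inherit the usual caveat about inhabitation: a closed theorem such as $\forall \Syn{x^A}.\Phi \to \exists \Syn{x^A}.\Phi$ is forced at $X$ only if some $f \colon Y \rMap{} X$ has $\Sh{\Syn{A}}(Y) \neq \emptyset$, so ``theorem of classical logic'' must be understood in the inclusive (possibly-empty-domain) sense, or the sorts must be interpreted by locally inhabited sheaves. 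This is not a defect of your proof relative to the paper's --- the paper glosses over it too --- but your explicit induction is exactly where one would notice and record it.
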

\begin{proof}[Proof (outline)]
It follows trivially from the  definition of the forcing relation Fig.~\ref{figure:atomic-sheaf-semantics} that
every classical propositional tautology (including every  instance of the 
law of excluded middle $\Phi \vee \neg \Phi$) is true under all assignments (assuming, as we do, that we are working in a classical meta-theory).

The verification of the validity of the axioms and inference rules pertaining to quantifiers takes a little more work, but is not difficult. Since we are working in a special case of sheaf semantics, where such facts are anyway well established in far greater generality, we do not go into details. A sceptical reader may enjoy verifying this for themselves, using their preferred 
formulation of the axioms and rules of multi-sorted first-order logic.
\end{proof}

By Theorem~\ref{theorem:classical}, atomic sheaf logic is just multisorted first-order classical logic with a nonstandard semantics. 
The logic includes the equality relation, which is given a canonical interpretation. 
The nonstandard semantics allows 
relation symbols to be interpreted as arbitrary subsheaves of product sheaves. Atomic sheaf categories possess interesting such subsheaves that have no analogue in the standard semantics of first-order logic. Our main examples  of this phenomenon are
the two relations from the title: equivalence and conditional independence. 

To end this section, we observe that, in the case of our running example $\ShAt(\Sur)$, atomic sheaf logic  can incorporate the relations of equivalence and conditional independence from multiteam semantics, as in Section~\ref{section:multiteams}.
Syntactically, we simply extend the logic with equivalence and conditional independence formulas~\eqref{equation:equiv-formula} 
and~\eqref{equation:indep-formula}, as in Section~\ref{section:multiteams}. 
Actually, we can do this simply by including equivalence and conditional independence as particular relation symbols,
so the equivalence and conditional independence formulas are then instances of atomic formulas of the form
$\Syn{R}(\Syn{x}_1^{\Syn{A}_1}, \dots, \Syn{x}_n^{\Syn{A}_n})$.
Specifically, for equivalence, we include relation symbols
$\Equiv_{ \Syn{A}_1\dots \Syn{A}_n}$ with $\Arity(\Equiv_{ \Syn{A}_1\dots \Syn{A}_n}) = \Syn{A}_1\dots \Syn{A}_n \Syn{A}_1\dots \Syn{A}_n$.
Similarly, for conditional independence, 
we include relation symbols $\sIndep_{\Syn{A}_1\dots \Syn{A}_l, \Syn{B}_1\dots \Syn{B}_m |  \Syn{C}_1\dots \Syn{C}_n}$
with $\Arity(\sIndep_{\Syn{A}_1\dots \Syn{A}_l,\Syn{B}_1\dots \Syn{B}_m |  \Syn{C}_1\dots \Syn{C}_n}) = 
\Syn{A}_1\dots \Syn{A}_l \Syn{B}_1\dots \Syn{B}_m \Syn{C}_1\dots \Syn{C}_n$.
 
To interpret the extended logical language, we instantiate the semantic interpretation of Definition~\ref{definition:semantic-interpretation},
in the special case of the category $\ShAt(\Sur)$,
by requiring that every sort $\Syn{A}$ is interpreted by a sheaf of nondeterministic
variables $\Sh{\Syn{A}} := \NEls(\Sem{\Syn{A}})$ for some set $\Sem{\Syn{A}}$. 
We then interpret each relation
$\Equiv_{ \Syn{A}_1\dots \Syn{A}_n}$ 
as the subsheaf 
of $\left(\prod_{i=1}^n \Sh{\Syn{A}_i } \right) \times \left(\prod_{i=1}^n \Sh{\Syn{A}_i}\right)$
that is isomorphic to the subsheaf 
$\bowtie_{(\prod_{i=1}^n \Sem{\Syn{A}_i})}$ of 
$\NEls(\prod_{i=1}^n \Sem{{A}_i}) \times \NEls(\prod_{i=1}^n \Sem{\Syn{A}_i}))$ from
Proposition~\ref{proposition:equiextension-subsheaf}
along the canonical isomorphism between the two product sheaves.
A similar procedure, using $\Indep_{A,B|C}$ from Proposition~\ref{proposition:indep-subsheaf}, defines the semantics of conditional independence formulas as subsheaves.
These rather convoluted definitions are equivalent to simply 
interpreting equivalence and conditional independence formulas directly using the 
conditions  given in Figure~\ref{figure:atomic-semantics}. 
The benefit of the convoluted explanation in terms of subsheaves is that it
presents the extended logic as a special case of general atomic sheaf logic, and in doing so
explains why the meta-logical properties (locality, sheaf property, classical logic)  hold for the extended logic.


\section{Atomic equivalence}
\label{section:atomic-equivalence}

The interpretation of equivalence formulas at the end of Section~\ref{section:atomic-sheaf-logic} was given only 
for the relation of  equiextension of  nondeterministic variables, interpreted over the sheaves of nondeterministic variables 
in $\ShAt(\Sur)$ using Proposition~\ref{proposition:equiextension-subsheaf}.

Atomic sheaves offer, however, a much more  general perspective on the notion of equivalence. Every category $\ShAt(\sCat{C})$ of atomic sheaves possesses a canonical notion of equivalence, which we call \emph{atomic equivalence}. 
Specifically, for every sheaf $\Sh{P}$, there is an associated subsheaf
${\Equiv_{\Sh{P}}} \subseteq \Sh{P} \times \Sh{P}$ that is an equivalence relation in $\ShAt(\sCat{C})$.
(A subsheaf $\Sh{E} \subseteq \Sh{P} \times \Sh{P}$ is an \emph{equivalence relation in 
$\ShAt(\sCat{C})$} if $\Sh{E}(X) \subseteq \Sh{P}(X) \times \Sh{P}(X)$ is an equivalence relation, for every $X \in \sCat{C}$.)

\begin{theorem}[Atomic equivalence]
\label{theorem:atomic-equiv}
Let $\Sh{P}$ be any sheaf in $\ShAt(\sCat{C})$.
\begin{align*}
\Equiv_{\Sh{P}}(X) ~ : = ~  & 
\{(x,x') \in \Sh{P}(X) \times \Sh{P}(X) \mid 
\exists Z, \exists u, u'  \colon Z \rMap{} X. ~ x \cdot u = x' \cdot u' \}
\end{align*}
defines a subsheaf ${\Equiv_{\Sh{P}}} \subseteq \Sh{P} \times \Sh{P}$ via Proposition~\ref{proposition:subsheaf}.
Moreover, this is an equivalence relation in $\ShAt(\sCat{C})$.
\end{theorem}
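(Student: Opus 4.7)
The plan is to verify the three separate requirements in turn: first, that the prescription determines a subpresheaf of $\Sh P \times \Sh P$; second, that this subpresheaf is in fact a subsheaf; third, that each fibre is an equivalence relation.

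For the subpresheaf clause of Proposition~\ref{proposition:subsheaf}, I need to show closure under restriction. Given $(x,x') \in \Equiv_{\Sh P}(X)$ witnessed by $u,u' \colon Z \rMap{} X$ with $x \cdot u = x' \cdot u'$, and a map $f \colon Y \rMap{} X$, I would invoke coconfluence three times. First, applied to $Z \rMap{u} X \lMap{f} Y$, it yields $Z \lMap{a} W \rMap{b} Y$ with $u \circ a = f \circ b$. Second, applied to $Z \rMap{u'} X \lMap{f} Y$, it yields $Z \lMap{a'} W' \rMap{b'} Y$ with $u' \circ a' = f \circ b'$. Third, applied to $W \rMap{a} Z \lMap{a'} W'$, it yields $W \lMap{c} V \rMap{c'} W'$ with $a \circ c = a' \circ c'$. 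Setting $v := b \circ c$ and $v' := b' \circ c'$, one calculates $x \cdot f \cdot v = x \cdot u \cdot (a \circ c) = x' \cdot u' \cdot (a' \circ c') = x' \cdot f \cdot v'$, where the middle equation uses $x \cdot u = x' \cdot u'$ together with $a \circ c = a' \circ c'$. Hence $(x \cdot f, x' \cdot f) \in \Equiv_{\Sh P}(Y)$. This three-application diagram chase is, I expect, the main technical content of the theorem.

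The sheaf clause of Proposition~\ref{proposition:subsheaf} is then nearly trivial. Suppose $f \colon Y \rMap{} X$ and $(x,x') \in (\Sh P \times \Sh P)(X)$ satisfy $(x \cdot f, x' \cdot f) \in \Equiv_{\Sh P}(Y)$, witnessed by $v, v' \colon Z \rMap{} Y$ with $x \cdot f \cdot v = x' \cdot f \cdot v'$. Then $f \circ v$ and $f \circ v'$ are parallel maps $Z \rMap{} X$ satisfying $x \cdot (f \circ v) = x' \cdot (f \circ v')$, immediately giving $(x, x') \in \Equiv_{\Sh P}(X)$.

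It remains to check equivalence-relation structure on each $\Equiv_{\Sh P}(X)$. Reflexivity uses $Z := X$ and $u := u' := \Id_X$; symmetry swaps the two witnessing maps. For transitivity, given $(x,x')$ witnessed by $u,u' \colon Z \rMap{} X$ and $(x',x'')$ witnessed by $v',v'' \colon Z'' \rMap{} X$, one more application of coconfluence to the cospan $Z \rMap{u'} X \lMap{v'} Z''$ produces $Z \lMap{a} W \rMap{b} Z''$ with $u' \circ a = v' \circ b$; then
\[
x \cdot (u \circ a) = x' \cdot (u' \circ a) = x' \cdot (v' \circ b) = x'' \cdot (v'' \circ b),
\]
so $u \circ a$ and $v'' \circ b$ witness $(x, x'') \in \Equiv_{\Sh P}(X)$. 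All three properties thus reduce, once the witnesses are in hand, to straightforward uses of coconfluence and functoriality of restriction.
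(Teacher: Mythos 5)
Your proposal is correct and follows essentially the same route as the paper's own proof: the same three applications of coconfluence for closure under restriction, the same one-line composition argument for the subsheaf condition, and the same single coconfluence step for transitivity, differing only in notation.
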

\noindent
\begin{proof} For the subpresheaf property, suppose  $(x,x') \in {\Equiv_{\Sh{P}}}\,(X)$. Thus, for some $u,u' \colon Z \rMap{} X$, we have $x \cdot u = x' \cdot u'$. Consider any $f \colon Y \rMap{} X$. By coconfluence, 
there exist $g \colon W \rMap{} Z$ and $v \colon W \rMap{} Y$ such that $f \circ  v= u \circ g$. 
Similarly, there exist 
$g' \colon W' \rMap{} Z$ and $v' \colon W' \rMap{} Y$ such that $f \circ v' = u' \circ g'$.
Again by coconfluence, 
there exist $h \colon V \rMap{} W$ and $h' \colon V \rMap{} W'$ such that $g \circ h = g' \circ h'$. Then:
\[x \cdot f \cdot v \cdot h = x \cdot u \cdot g \cdot h = x' \cdot u' \cdot g' \cdot h' = 
x' \cdot f \cdot v' \cdot h'  \enspace .\]
So $v \circ h$ and $v' \circ h' \colon V \rMap{} Y$ show that  $(x \cdot f , x' \cdot f) \in {{\Equiv_{\Sh{P}}}(Y)}$.

For the subsheaf property, consider any $(x, x') \in \Sh{P}(X) \times \Sh{P}(X)$ and $f \colon Y \rMap{} X$ such that
$(x \cdot f, x' \cdot f) \in {\Equiv_{\Sh{P}}\,(Y)}$; i.e., there exist $u,u' \colon Z \rMap{} Y$ such that
$x \cdot f \cdot u =  x' \cdot f \cdot u'$. Thus $f \circ u$ and $f \circ u'  \colon Z \rMap{} X$ show that indeed 
$(x,x') \in {{\Equiv_{\Sh{P}}}(X)}$.

For the equivalence relation property, reflexivity and symmetry are trivial. For transitivity, suppose
$(x,x') \in {\Equiv_{\Sh{P}}}\,(X)$ and $(x',x'') \in {\Equiv_{\Sh{P}}}\,(X)$; i.e., there exist
$u,u' \colon Z \rMap{} X$ such that $x \cdot u = x' \cdot u'$ and 
$v, v' \colon Z' \rMap{} X$ such that $x' \cdot v = x'' \cdot v'$. By coconfluence, there exist
$w \colon W \rMap{} Z$ and $w' \colon W \rMap{} Z'$ such that $u' \circ w = v \circ w'$. Then
\[x \cdot u \cdot w = x' \cdot  u' \cdot w = x' \cdot v \cdot w' = x'' \cdot v' \cdot w'  \enspace .\]
So $u \circ w$ and $v' \circ w'$  show that $(x,x'') \in {\Equiv_{\Sh{P}}}\,(X)$.
\end{proof}

In the special case of sheaves $\NEls(A)$ of nondeterministic variables in $\ShAt(\Sur)$,
the canonical equivalence ${\Equiv_{\NEls(A)}}$ coincides with the equiextension subsheaf $\bowtie_{A}$
defined in Proposition~\ref{proposition:equiextension-subsheaf}.

\begin{proposition} 
\label{proposition:equiv-NEls}
The subsheaf ${\Equiv_{\NEls(A)}} \subseteq \NEls(A)\times \NEls(A)$
in  $\ShAt(\Sur)$ coincides with $\bowtie_A \subseteq \NEls(A)\times \NEls(A)$.
\end{proposition}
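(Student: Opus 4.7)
My plan is to prove the two inclusions $\Equiv_{\NEls(A)} \subseteq {\bowtie_A}$ and ${\bowtie_A} \subseteq \Equiv_{\NEls(A)}$ fibrewise at each object $\Omega$ of $\Sur$. The forward direction will use surjectivity of morphisms in $\Sur$, and the reverse direction will construct an explicit witness $Z$ together with two surjections.

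For the forward direction, suppose $(X,X') \in \Equiv_{\NEls(A)}(\Omega)$, so there exist a finite nonempty $Z$ and surjections $u,u' \colon Z \Epi \Omega$ with $X \circ u = X \cdot u = X' \cdot u' = X' \circ u'$. To show $X \bowtie X'$, i.e., $X(\Omega) = X'(\Omega)$, take any $a \in X(\Omega)$ with $a = X(\omega)$. By surjectivity of $u$, pick $z \in Z$ with $u(z) = \omega$; then $a = X(u(z)) = X'(u'(z)) \in X'(\Omega)$. The reverse inclusion is symmetric.

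For the reverse direction, suppose $X \bowtie X'$. I will take the witness
\[
Z ~ := ~ \{(\omega,\omega') \in \Omega \times \Omega \mid X(\omega) = X'(\omega')\} \enspace ,
\]
with $u := (\omega,\omega') \mapsto \omega$ and $u' := (\omega,\omega') \mapsto \omega'$. Since $\Omega$ is finite nonempty, so is $Z$ (nonemptiness follows from $X(\Omega) = X'(\Omega) \neq \emptyset$: fix any $\omega \in \Omega$, then $X(\omega) \in X'(\Omega)$ gives some $\omega'$ with $(\omega,\omega') \in Z$). The map $u$ is surjective because for any $\omega \in \Omega$, the value $X(\omega) \in X(\Omega) = X'(\Omega)$ equals $X'(\omega')$ for some $\omega'$, giving $(\omega,\omega') \in Z$ with $u(\omega,\omega') = \omega$; surjectivity of $u'$ is analogous using $X'(\Omega) \subseteq X(\Omega)$. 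So $u,u'$ are morphisms of $\Sur$, and $X \cdot u = X' \cdot u'$ by definition of $Z$. Hence $(X,X') \in \Equiv_{\NEls(A)}(\Omega)$.

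There is no real obstacle here: everything is driven by the fact that morphisms in $\Sur$ are exactly the surjections, so the construction of $Z$ as a kind of pullback in $\Set$ automatically yields $\Sur$-morphisms once we verify surjectivity of the two projections, which is precisely what equiextension delivers. The only point requiring a little care is ensuring $Z$ is nonempty, which is why the convention that objects of $\Sur$ are nonempty is used.
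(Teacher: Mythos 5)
Your proof is correct and takes essentially the same route as the paper: the forward direction is identical, and your explicit witness $Z$ is exactly what the paper obtains by viewing $X,X'$ as a cospan $\Omega \rMap{} X(\Omega)$ in $\Sur$ and invoking coconfluence, whose proof for $\Sur$ constructs precisely this set-theoretic pullback with its two projections. The only cosmetic difference is that you verify nonemptiness and surjectivity of the projections directly from $X(\Omega)=X'(\Omega)$ rather than citing the coconfluence lemma.
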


\begin{proof}
Consider any $X, X' \colon \Omega \to A$.

Suppose there exist $u, u' \colon \Omega' \rMap{} \Omega$ such that
$X \cdot u = X' \cdot u'$; i.e.,  $X \circ u = X' \circ u'$. Then $X \bowtie X'$ because
\[ X(\Omega) ~ = ~ X(u(\Omega')) ~ = ~ X'(u'(\Omega')) ~ = ~ X'(\Omega) \enspace ,\]
using the surjectivity of $u$ and $u'$ for the first and last equalities.

Conversely, suppose $X \bowtie X'$, i.e., $X(\Omega) = X'(\Omega)$. Define $\Omega_A := X(\Omega)$, which is 
a finite nonempty set hence (up to isomorphism) an object of $\Sur$. The functions 
$X$ and $X'$ are surjective from $\Omega$ to $\Omega_A$, hence give morphisms
$X,X' \colon \Omega  \rMap{} \Omega_A$ in $\Sur$. By coconfluence, there exist maps $p,q \colon \Omega' \rMap{} \Omega$ such that $X \circ p = X' \circ q$. But this means that $X \cdot p = X' \cdot q$, hence $(X,X') \in {\Equiv_{\NEls(A)}(\Omega)}\,$.
\end{proof}

Using the notion of atomic equivalence, we give a canonical semantics 
to equivalence formulas~\eqref{equation:equiv-formula} in any atomic sheaf topos.
As at the end of Section~\ref{section:atomic-sheaf-logic}, we include such formulas by considering them as given by
relation symbols $\Equiv_{ \Syn{A}_1\dots \Syn{A}_n}$ with $\Arity(\Equiv_{ \Syn{A}_1\dots \Syn{A}_n}) = \Syn{A}_1\dots \Syn{A}_n \Syn{A}_1\dots \Syn{A}_n$. The general semantic interpretation of sorts and relations  (Definition~\ref{definition:semantic-interpretation})
is then extended to require that each relation symbol $\Equiv_{ \Syn{A}_1\dots \Syn{A}_n}$ is
interpreted as  the subsheaf
\[
\Sh{\Equiv_{ \Syn{A}_1\dots \Syn{A}_n}} ~ := ~ 
   \Equiv_{\Sh{\Syn{A}_1} \times \dots \times \Sh{\Syn{A}_n}} ~ 
    \subseteq ~ (\Sh{\Syn{A}_1} \times \dots \times \Sh{\Syn{A}_n}) \times  (\Sh{\Syn{A}_1} \times \dots \times \Sh{\Syn{A}_n}) \enspace .
\]
The forcing relation $X \Forces_{\!\underline{\rho}}   ~ 
  \Syn{x}_1^{\Syn{A}_1}, \dots, \Syn{x}_n^{\Syn{A}_n}  \Equiv \Syn{y}_1^{\Syn{A}_1}, \dots, \Syn{y}_n^{\Syn{A}_n}$
 is then covered by the general clause for relation symbols $\Syn{R}$ in Figure~\ref{figure:atomic-sheaf-semantics}. This is equivalent to defining:\begin{align*}
X & \Forces_{\!\underline{\rho}}  ~
  \Syn{x}_1^{\Syn{A}_1}, \dots, \Syn{x}_n^{\Syn{A}_n}  \Equiv \Syn{y}_1^{\Syn{A}_1}, \dots, \Syn{y}_n^{\Syn{A}_n}
  ~ \Iff ~
  (\,(\underline\rho(\Syn{x}_1^{\Syn{A}_1}),\! \dots, \!\underline\rho(\Syn{x}_n^{\Syn{A}_n})) \, , \, 
     (\underline\rho(\Syn{y}_1^{\Syn{A}_1}), \!\dots, \!\underline\rho(\Syn{y}_n^{\Syn{A}_n})) \,) 
~ \in ~{{\Equiv_{\Sh{A_1} \times \dots \times \Sh{A_n}}}(X)}.
\end{align*}
\noindent
By Proposition~\ref{proposition:equiv-NEls}, the above definition generalises the multiteam interpretation 
of independence as the equiextension relation, in the case $\sCat{C} = \Sur$ and 
$\Sh{\Syn{A}} = \NEls(\Sem{\Syn{A}})$,  that was given in Section~\ref{section:atomic-sheaf-logic}.

\begin{figure}
\begin{gather}
\vec{\Syn{x}} \Equiv \vec{\Syn{x}}
\label{equiv:A}
\\
\vec{\Syn{x}} \Equiv \vec{\Syn{y}}  ~ \to ~ \vec{\Syn{y}} \Equiv \vec{\Syn{x}}  
\label{equiv:B}
\\
\vec{\Syn{x}} \Equiv \vec{\Syn{y}}  ~  \wedge  ~ \vec{\Syn{y}} \Equiv \vec{\Syn{z}}   ~\to ~ 
\vec{\Syn{x}} \Equiv \vec{\Syn{z}} 
\label{equiv:C}
\\[1ex]
\vec{\Syn{x}} \Equiv \vec{\Syn{y}}  ~ \to ~ \pi(\vec{\Syn{x}}) \Equiv \pi(\vec{\Syn{y}}) 
\label{equiv:D}
\\
\vec{\Syn{x}}, \Syn{x} \Equiv \vec{\Syn{y}}, \Syn{y}  ~ \to ~\vec{\Syn{x}} \Equiv \vec{\Syn{y}} 
\label{equiv:E}
\\[1ex]
\vec{\Syn{x}} \Equiv \vec{\Syn{y}} ~ \wedge ~ \Phi(\vec{\Syn{x}}) ~ \to ~ \Phi(\vec{\Syn{y}})
\label{equiv:F}
\\
\vec{\Syn{x}} \Equiv \vec{\Syn{x}'} ~ \to ~ \exists \Syn{y}'. ~ 
(\vec{\Syn{x}},\Syn{y}  \Equiv \vec{\Syn{x}'},\Syn{y}')
\label{equiv:G}
\end{gather}
\caption{Axioms for equivalence}
\label{figure:equivalence-axioms}
\end{figure}

We now explore the logic of atomic equivalence, valid in any category of atomic sheaves. 
Fig.~\ref{figure:equivalence-axioms} lists formulas that are valid in our semantics, which we identify as 
axioms for equivalence. In them, we have abbreviated variable sequences by vectors. It is implicitly assumed that the lengths and sorts of the variable sequences match so that the equivalence formulas are legitimate.
Axioms~\eqref{equiv:A}--\eqref{equiv:C} simply state that $\Equiv$ is an equivalence relation. The next two assert structural properties. 
In~\eqref{equiv:D}, $\pi$ is any permutation of the variable sequence, and the axiom asserts that
equivalence is preserved if one permutes variables in the same way on both sides.
By axiom~\eqref{equiv:E}, equivalence is also preserved if one drops identically positioned variables from both sides.
Axiom~\eqref{equiv:F} is more interesting: 
equivalence enjoys a substitutivity property, similar to the substitutivity property of equality.
However, an important restriction is hidden in the notation. It is assumed  that all free variables in $\Phi$ are contained
in a sequence $\vec{\Syn{z}}$ of \emph{distinct} variables matching in length and sorting with $\vec{\Syn{x}}$, and hence also with $\vec{\Syn{y}}$. We then write $\Phi(\vec{\Syn{x}})$ for the substitution $\Phi(\vec{\Syn{x}}/ \vec{\Syn{z}})$,
and similarly for $\Phi(\vec{\Syn{y}})$.
We call \eqref{equiv:F}  the \emph{invariance principle}, as it states that 
properties not  involving extraneous variables are invariant under equivalence.
Axiom \eqref{equiv:G} is called the \emph{transfer principle}. If $\vec{\Syn{x}}$ and $\vec{\Syn{x}'}$ are jointly equivalent, then for any variable $\Syn{y}$ there exists a (necessarily equivalent) variable $\Syn{y}'$ 
such that $\vec{\Syn{x}}, \Syn{y}$ and $\vec{\Syn{x}'}, \Syn{y'}$ are jointly equivalent.

This soundness of axioms~\eqref{equiv:A} to~\eqref{equiv:E} is straightforward.
The soundness of 
the invariance principle \eqref{equiv:F} is a consequence of the following simple lemma.
\begin{lemma}
For any $\Sh{P} \in \ShAt(\sCat{C})$ with subsheaf $\Sh{Q} \subseteq \Sh{P}$. If
$x,x' \in \Sh{P}(X)$ are such that $(x,x') \in {\Equiv_{\Sh{P}} (X)}$ and $x \in \Sh{Q}(X)$ then $x' \in \Sh{Q}(X)$.
\end{lemma}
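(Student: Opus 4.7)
The plan is to unfold the definition of atomic equivalence and then exploit the subsheaf characterisation given by Proposition~\ref{proposition:subsheaf}. By assumption $(x,x') \in \Equiv_{\Sh{P}}(X)$, so by the definition in Theorem~\ref{theorem:atomic-equiv}, there exist an object $Z$ and maps $u, u' \colon Z \rMap{} X$ satisfying $x \cdot u = x' \cdot u'$ in $\Sh{P}(Z)$.

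Since $x \in \Sh{Q}(X)$ and $\Sh{Q} \subseteq \Sh{P}$ is a subpresheaf, the first bullet of Proposition~\ref{proposition:subsheaf} yields $x \cdot u \in \Sh{Q}(Z)$. Using the equality above, this immediately gives $x' \cdot u' \in \Sh{Q}(Z)$.

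At this point, $x' \in \Sh{P}(X)$ is a global element whose restriction along $u'$ lies in the subsheaf $\Sh{Q}$. The crucial step is now to invoke the \emph{second} bullet of Proposition~\ref{proposition:subsheaf}, which is precisely the extra condition that distinguishes subsheaves from arbitrary subpresheaves (and which relies on $\Sh{P}$ and $\Sh{Q}$ being sheaves for the atomic topology). It tells us that if $x' \cdot_{\Sh{P}} u' \in \Sh{Q}(Z)$ then $x' \in \Sh{Q}(X)$, which is exactly what we want.

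This single application of the subsheaf characterisation is really the only nontrivial point; everything else is bookkeeping. No further appeal to coconfluence is needed here, since that has already been absorbed into the verification (in Theorem~\ref{theorem:atomic-equiv}) that $\Equiv_{\Sh{P}}$ is itself a subsheaf and into the validity of Proposition~\ref{proposition:subsheaf} for the atomic topology.
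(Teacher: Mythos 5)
Your proof is correct and follows essentially the same route as the paper's: extract the witnessing pair $u,u'\colon Z \rMap{} X$ with $x \cdot u = x' \cdot u'$, push $x$ into $\Sh{Q}(Z)$ using the subpresheaf property, and then pull $x'$ back into $\Sh{Q}(X)$ using the subsheaf (second-bullet) condition of Proposition~\ref{proposition:subsheaf}. Nothing is missing.
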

\begin{proof}
Because $(x,x') \in {\Equiv_{\Sh{P}} (X)}$, we have that there exist $u, u' \colon Y \rMap{} X$ such that
$x \cdot u = x' \cdot u'$. As $x \in \Sh{Q}(X)$ and $\Sh{Q}$ is a subpresheaf, we have $x \cdot u \in \Sh{Q}(Y)$, that is $x' \cdot u' \in \Sh{Q}(Y)$. Hence, since $\Sh{Q}$ is a subsheaf, $x' \in \Sh{Q}(X)$.
\end{proof}
\noindent
The invariance principle follows from the lemma, because $\Phi$ defines a subsheaf of $\Sh{\Syn{A}_1} \times \dots \times \Sh{\Syn{A}_n}$ 
via~\eqref{equation:Phi-subsheaf}, where $\Syn{A}_1, \dots, \Syn{A}_n$ are the sorts of the vector
$\vec{\Syn{x}} = \Syn{x}_1^{\Syn{A}_1}, \dots, \Syn{x}_n^{\Syn{A}_n}$ (and hence also of $\vec{\Syn{y}}$) in \eqref{equiv:F}.

The soundness of the transfer principle \eqref{equiv:G} is a consequence of the lemma below.
\begin{lemma}
Let $\Sh{P}, \Sh{Q}$ be sheaves and let $x,x' \in P(X)$ such that $(x,x') \in {\Equiv_{\Sh{P}} (X)}$.
For any $y \in P(X)$, there exists $p \colon Z \rMap{} X$ and $y' \in \Sh{Q}(Z)$ such that
$((x \cdot p, y \cdot p), (x' \cdot p, y')) \in\, {\Equiv_{\Sh{P} \times \Sh{Q}} (Z)}$.
\end{lemma}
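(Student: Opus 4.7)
My plan is to leverage the equivalence witness for $(x, x')$ together with coconfluence, applied twice, to produce a common refinement on which both pairs in the conclusion become pullbacks of a single element of $(\Sh{P} \times \Sh{Q})(X)$, whereupon the equivalence becomes automatic.

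First I would invoke the definition of $\Equiv_{\Sh{P}}$ to obtain $u, u' \colon Z_0 \rMap{} X$ with $x \cdot u = x' \cdot u'$. Then, applying coconfluence to the cospan $Z_0 \rMap{u} X \lMap{u'} Z_0$, I would obtain $W$ together with $a, b \colon W \rMap{} Z_0$ such that $u \circ a = u' \circ b$. I would set $Z := W$, $p := u \circ a$ (which equals $u' \circ b$), and, crucially, $y' := y \cdot (u \circ b) \in \Sh{Q}(W)$.

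The key computation is then
\[
x' \cdot p ~=~ x' \cdot (u' \circ b) ~=~ (x' \cdot u') \cdot b ~=~ (x \cdot u) \cdot b ~=~ x \cdot (u \circ b),
\]
which shows that $(x' \cdot p, y') = (x, y) \cdot (u \circ b)$ in $(\Sh{P} \times \Sh{Q})(W)$, while manifestly $(x \cdot p, y \cdot p) = (x, y) \cdot (u \circ a)$. Thus both pairs are pullbacks of the single element $(x, y) \in (\Sh{P} \times \Sh{Q})(X)$ along the maps $u \circ a$ and $u \circ b$ respectively.

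To finish, I would apply coconfluence once more to the cospan $W \rMap{a} Z_0 \lMap{b} W$, obtaining $s, s' \colon V \rMap{} W$ with $a \circ s = b \circ s'$; postcomposing with $u$ yields $(u \circ a) \circ s = (u \circ b) \circ s'$, and pulling $(x, y)$ back along these equal maps gives $(x \cdot p, y \cdot p) \cdot s = (x' \cdot p, y') \cdot s'$, which is the witness demanded by the definition of $\Equiv_{\Sh{P} \times \Sh{Q}}$. The principal obstacle is identifying the correct $y'$: naive choices such as $y \cdot u$ or $y \cdot u'$ do not pair cleanly with $x' \cdot p$, because of the asymmetry between $u$ and $u'$ in the witness $x \cdot u = x' \cdot u'$; the trick is to exploit the doubled presentation of $p$ afforded by the first coconfluence in order to transport $y$ along $u \circ b$, thereby reducing the claim to the general fact that any two pullbacks of a common element are atomically equivalent (itself an immediate consequence of coconfluence).
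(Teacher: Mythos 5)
Your proof is correct and is essentially the paper's own argument: the same first application of coconfluence to the cospan $u, u'$, the same choice $p := u \circ a$ and $y' := y \cdot (u \circ b)$ (the paper writes these as $u \circ v$ and $y \cdot u \cdot v'$), and the same second application of coconfluence to equalise $a$ and $b$. Your reformulation of the final step as "two restrictions of the common element $(x,y)$ are atomically equivalent" is a slightly cleaner packaging of the paper's direct computation, but the route is the same.
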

\begin{proof}
Since  $(x,x') \in {\Equiv_{\Sh{P}} (X)}$, there exist maps
$u,u' \colon Y \rMap{} X$ such that $x \cdot u = x' \cdot u'$.  
By coconfluence, let
$v,v' \colon Z \rMap{} Y$ be such that $u \circ v = u' \circ v'$.
Define $p := u \circ v$ and
$y' := y \cdot u \cdot v'$. 
By coconfluence again, let
$w,w' \colon W \rMap{} Z$ be such that $w \circ v = w' \circ v'$.  Then $w, w'$ show that
$((x \cdot p, \,y \cdot p), \,(x' \cdot p, \,y')) \in\, {\Equiv_{\Sh{P} \times \Sh{Q}} (Z)}$, because:
\begin{gather*}
x \cdot p \cdot w  =  x \cdot u \cdot v \cdot w  =  x' \cdot u' \cdot v' \cdot w'  =  x' \cdot u \cdot v  \cdot w' 
~ = ~ x' \cdot p \cdot w' \\
\intertext{and}
y \cdot p \cdot w ~ = ~ y \cdot u \cdot v \cdot w ~ = ~  y \cdot u \cdot v' \cdot w'  ~ = ~ y' \cdot w' \enspace .
\end{gather*}
\end{proof}


\section{Independent pullbacks}
\label{section:independent-pullbacks}

Whereas Section~\ref{section:atomic-equivalence} has given equivalence formulas a canonical interpretation in an arbitrary atomic sheaf topos
$\ShAt(\sCat{C})$,
the interpretation of conditional independence formulas (seemingly) requires additional structure on the generating category $\sCat{C}$. 
Primary amongst this is that 
$\sCat{C}$ possess \emph{independent pullbacks}, as defined below.

\begin{definition}[Independent pullbacks]
\label{definition:independent-pullbacks}A system of 
\emph{Independent pullbacks} on a category $\sCat{C}$ is given by a collection of commuting squares in $\sCat{C}$, called \emph{independent squares}.  A commuting  square
\begin{equation}
\label{equation:independent-square}
\vcenter{\hbox{\includegraphics[scale=0.18]{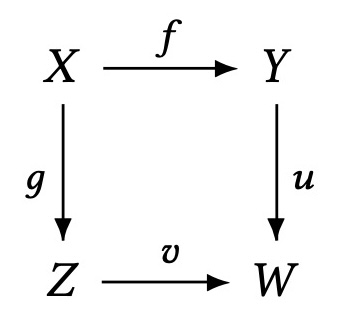}}}
\end{equation}
is then defined to be an \emph{independent pullback} if it is independent and it satisfies the usual pullback property restricted to independent squares; i.e., for every independent square
\[
\vcenter{\hbox{\includegraphics[scale=0.18]{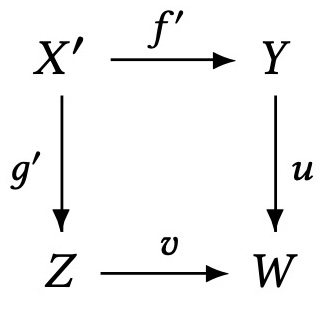}}}
\]
there exists a unique $q: X' \rMap{} X$ such that $f \circ p = f'$ and $g' \circ p = g$.
The assumed collection of independent squares and derived collection of independent pullbacks are together required to satisfy the five conditions below.
\begin{description}
\item[(IP1)] Every commuting square of the form below is independent.
\[
\vcenter{\hbox{\includegraphics[scale=0.18]{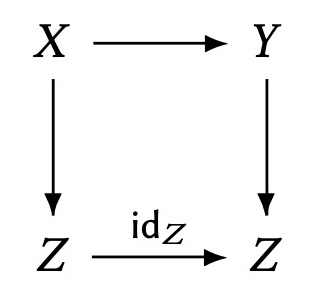}}}
\]
\item[(IP2)]
If the left square below is independent then so is the right.
\[
\vcenter{\hbox{\includegraphics[scale=0.18]{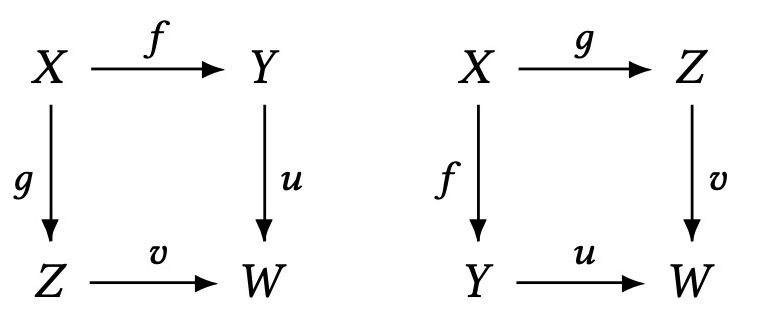}}}
\]
\item[(IP3)]
If $(A)$ and $(B)$ below are independent, then so is the composite rectangle $(AB)$.
\begin{equation}
\label{equation:AB}
\vcenter{\hbox{\includegraphics[scale=0.18]{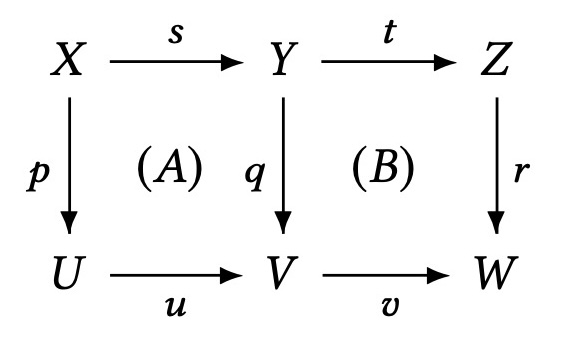}}}
\end{equation}
\item[(IP4)] If the composite rectangle $(AB)$ above is independent and $(B)$ is an independent pullback then $(A)$ is independent.

\item[(IP5)] Every cospan $Y \rMap{u} W \lMap{v} Z$ has a completion to a commuting square \eqref{equation:independent-square} that is an independent pullback. 
\end{description}

\end{definition}

\noindent
It is an easy consequence of  axioms (IP1) and (IP3) that, in any commuting diagram as below, 
if the right square is independent then so is the outer kite.
\begin{equation}
\label{diagram:kite}
\vcenter{\hbox{\includegraphics[scale=0.18]{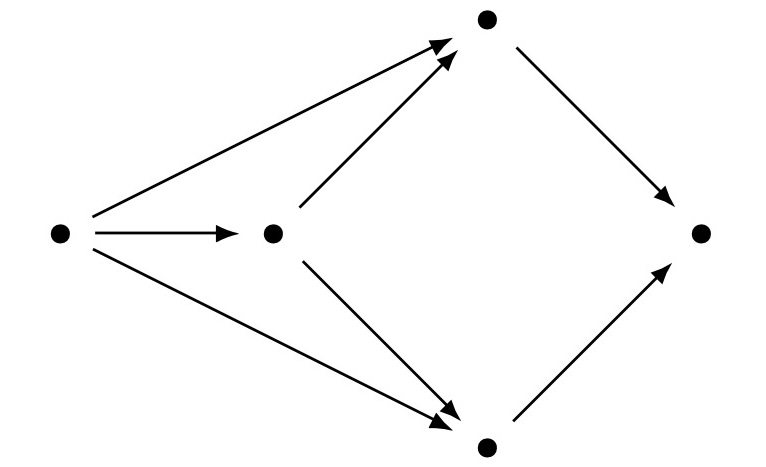}}}
\end{equation}
A straightforward consequence of this property in turn is that, in any independent pullback square~\eqref{equation:independent-square}, the span $f,g$ is \emph{jointly monic}, i.e., for all
parallel pairs $s,t: V \rMap{} X$, if both $f\circ s = f \circ t$ and $g \circ s = g \circ t$ then $s = t$.

\begin{definition}[Descent property]
We say that independent-pullback structure has the \emph{descent property} if, 
in any commuting diagram of the form~\eqref{diagram:kite} above, 
if the outer kite is independent then so is the right-hand square.
\end{definition}

As a first (trivial) example of independent pullbacks, in any category $\sCat{C}$ with pullbacks the collection of all commuting squares defines an independent pullback structure on $\sCat{C}$ satisfying the descent property, for which the independent pullbacks are exactly the pullbacks.  The category $\Sur$ (which does not have pullbacks) provides a nontrivial example. 

\begin{definition}[Independent square in $\Sur$]
\label{definition:indep-square-sur}
Define a  commuting square in $\Sur$
\[
\vcenter{\hbox{\includegraphics[scale=0.18]{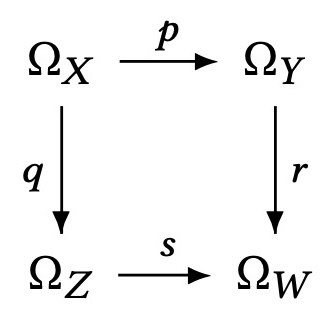}}}
\]
to be \emph{independent} if $p \Indep q \Cond r \circ p$, using conditional independence of nondeterministic variables (Definition~\ref{definition:conditional-independence-NV}). 
\end{definition}
\begin{proposition}
\label{proposition:Sur-indep-pullbacks}
Definition~\ref{definition:indep-square-sur} endows 
$\Sur$ with independent pullback structure satisfying the descent property.
\end{proposition}
\begin{proof}
Because the square is commuting and the maps are surjective, the condition
of Definition~\ref{definition:conditional-independence-NV} simplifies to: for all
$\omega_Y \in \Omega_Y$ and $\omega_Z \in \Omega_Z$, we have
$r(\omega_Y) = s(\omega_Z)$ implies there exists $\omega_X \in \Omega_X$ such that
$p(\omega_X) = \omega_Y$ and $q(\omega_X) = \omega_Z$.

The easy verification of properties (IP1) and (IP2) is left to the reader. 

For (IP3), suppose (A) and (B) in diagram~\eqref{equation:AB} are independent. We show that $t \circ s \Indep p \Cond r \circ t \circ s$, using the characterisation above. 
Accordingly, suppose $\omega_Z \in \Omega_Z$ and $\omega_U \in \Omega_U$
are such that $r(\omega_Z) = v(u(\omega_U))$. We need to find $\omega_X \in \Omega_X$ such that
$t(s(\omega_X)) = \omega_Z$ and $p(\omega_X) = \omega_U$. Because
$r(\omega_Z) = v(u(\omega_U))$,
the independence of (B) gives us $\omega_Y \in \Omega_Y$ such that
$t(\omega_Y) = \omega_Z$ and $q(\omega_Y) = u(\omega_U)$. By the latter equation and the independence of (A),
there exists $\omega_X \in \Omega_X$ such that $s(\omega_X) = \omega_Y$ and $p(\omega_Z) = \omega_U$.
We then have $t(s(\omega_X)) = t(\omega_Y) = \omega_Z$ as required.

For (IP4), we verify the stronger property that if the composite rectangle (AB) in diagram~\eqref{equation:AB} is independent and if $t,q$ are jointly monic then (A) is independent. In the category $\Sur$ the joint monicity of $t,q$
means that, for all $\omega_Y, \omega_Y' \in \Omega_Y$, if both
$t(\omega_Y) = t(\omega'_Y)$ and $q(\omega_Y) = q(\omega'_Y)$ then $\omega_Y = \omega_Y'$. To prove that (A) is independent, suppose $\omega_Y \in \Omega_Y$ and $\omega_U \in \Omega_U$ are such that $q(\omega_Y) = u(\omega_U)$. Then $r(t(\omega_Y)) = v(q(\omega_Y)) = v(u(\omega_U))$. So, by the independence of (AB), there
exists $\omega_X \in \Omega_X$ such that $t(s(\omega_X)) = t(\omega_Y)$ and $p(\omega_X) = \omega_U$. 
We then have $q(s(\omega_X)) = u(p(\omega_X)) = u(\omega_U) = q(\omega_Y)$. It follows, by the joint monicity of 
$t,q$, that $s(\omega_X) = \omega_Y$. Together with the equation $p(\omega_X) = \omega_U$, this verifies the independence of (A).

For (IP5), the construction in the proof of Proposition~\ref{proposition:Sur-coconfluent} completes any cospan to an independent pullback square, as is easily verified.

We leave the straightforward verification of the descent property to the reader.
\end{proof}

\noindent
A more abstract way of describing the independent pullback structure on $\Sur$ is that  a commuting square in $\Sur$ is independent if and only if it is a 
weak\footnote{A \emph{weak limit} is a cone that enjoys the existence property but not necessarily the uniqueness property of a limit.}
pullback in $\Set$, and it is  an independent pullback if and only if it is a pullback in $\Set$. One can use this to give 
a more abstract verification that (IP1)--(IP5) and descent hold.

We end this section with some general consequences of  the definition of 
independent pullback structure. 
The first such consequence is that an analogue of the pullback lemma holds for independent pullbacks.
\begin{lemma}[Independent-pullback lemma] 
Suppose $\sCat{C}$ has independent pullback structure.
\begin{enumerate}
\item \label{ipl:i} If $(A)$ and $(B)$ in \eqref{equation:AB} are both independent pullbacks then so is the composite rectangle $(AB)$.

\item \label{ipl:ii} If  $(B)$ and the composite rectangle $(AB)$  in \eqref{equation:AB} are both independent pullbacks then so is $(A)$.
\end{enumerate}
\end{lemma}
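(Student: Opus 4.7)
The lemma is an independent-pullback analogue of the classical pullback pasting lemma, so the plan is to imitate the standard two-step argument, using (IP3) for composition and (IP4) for cancellation to handle the independence bookkeeping.

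For part 1, independence of $(AB)$ is immediate from (IP3), since both $(A)$ and $(B)$ are independent. For the universal property, suppose $(R)$ is an independent square with the same outer boundary as $(AB)$. I would construct the factorization $p \colon W \to X'$ in two stages, exactly as in the classical pasting argument: first apply the universal property of $(B)$ to obtain an intermediate map $q \colon W \to Y'$, then apply the universal property of $(A)$ to the cone built from $q$ and the left leg of $(R)$ to obtain $p$. Each application requires the cone being factored to be an independent square; for this I would exhibit the cone as one factor of a rectangle whose other factor is an independent pullback and whose total equals $(R)$ (or a suitable repackaging of it), and then apply (IP4). Uniqueness of $p$ is inherited from the uniqueness clauses of $(A)$ and $(B)$.

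For part 2, independence of $(A)$ is immediate from (IP4) applied to $(AB)$ (independent) and $(B)$ (independent pullback). For the universal property, let $(A')$ be an independent square with $(A)$'s boundary. By (IP3), the composite $(A')+(B)$ is independent, and it has the same outer boundary as $(AB)$, so $(AB)$'s universal property produces a unique $p \colon W \to X'$ factoring $(A')+(B)$. I then need to verify that $p$ factors $(A')$ itself, i.e.\ that its ``top-leg'' component into $Y'$ coincides with the top leg $u'$ of $(A')$. These two parallel maps $W \to Y'$ can be checked to agree after composition with each of $(B)$'s two legs out of $Y'$: for one leg this uses commutativity of $(A)$ together with $p$'s defining condition and commutativity of $(A')$, and for the other leg this is $p$'s second defining condition. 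If the resulting cone from $W$ over $(B)$'s cospan is shown to be an independent square, the independent-pullback universal property of $(B)$ forces the two maps to coincide. Uniqueness of $p$ as a factor of $(A')$ then follows from its uniqueness as a factor of $(A')+(B)$ through $(AB)$.

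In both parts the main technical obstacle is the verification that the intermediate cones are independent squares, so that the universal properties of $(A)$ and $(B)$ can be invoked. My plan in each case is to express the cone as one factor of a larger composite whose other factor is a known independent pullback and whose total is a known independent rectangle, after which (IP4) delivers the required independence. Axioms (IP1), (IP2), and (IP5) serve in supporting roles: (IP1) to provide independence of the basic identity-decorated squares used to build these composites, (IP2) to reshape independent squares into the form needed, and (IP5) to manufacture independent-pullback completions of the auxiliary cospans that arise.
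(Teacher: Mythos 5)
Part 2 of your plan, and the second half of part 1, are sound. In part 2, the competitor square composed with $(B)$ is independent by (IP3) and factors uniquely through $(AB)$; the comparison of the induced map with the top leg of the competitor uses the independence of the cone obtained by precomposing $(B)$'s two legs with a single map, which is the ``kite'' fact the paper notes follows from the axioms (it is an (IP3)-composite of an (IP1) identity-decorated square with $(B)$ --- so it is (IP3), not (IP4), doing the work there). In part 1, the second intermediate cone (the one fed to $(A)$'s universal property) is exactly an (IP4) situation: its composite with $(B)$ equals the originally given independent square, and $(B)$ is an independent pullback.

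The gap is at the very first step of part 1. Write $(A)$ as the square with top $a$, left $b$, right $u$, bottom $c$, and $(B)$ as the square with top $d$, left $u$, right $e$, bottom $g$, so a competitor for $(AB)$ is an independent square with legs $(r,s)$ over the cospan $(g\circ c,\, e)$. Before $(B)$'s universal property can be invoked you must know that the cone $(r,\, c\circ s)$ over $(g,e)$ is an \emph{independent} square. Your proposed mechanism --- exhibit it as the left factor of a composite whose right factor is a known independent pullback and apply (IP4) --- cannot reach it: that square is itself the cospan-corner factor of the only available decomposition of $(r,s)$ (namely as an (IP1)-type identity square followed by $(r,\, c\circ s)$), and the axioms give no cancellation on that side; (IP4) cancels only the factor adjacent to the cospan corner, and only when that factor is an independent pullback, whereas here you would need to cancel an identity square sitting on the apex side. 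The kite fact does not help either, since it concerns precomposing both cone legs at the apex, not postcomposing one leg along part of the cospan. (In $\Sur$ this independence is verified using surjectivity of $c$, not the abstract axioms --- a sign that a genuine further argument is needed.) The paper only asserts the lemma without proof, so I cannot point you to its intended route, but as written your plan does not close this step.
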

\begin{proof}
The proof has the same structure as that of the standard pullback lemma, with the additional burden of having to verify that various commuting squares are independent. We give the proof of statement~\ref{ipl:i} insofar as it involves independence properties, leaving the standard uniqueness argument and the proof of statement~\ref{ipl:ii} to the reader. 

Suppose (A) and (B) are independent pullbacks. We need to verify that (AB) is an independent pullback. Accordingly, 
suppose that $z : T \rMap{} Z$ and $w : T \rMap{} U$ are such that the top square i the diagram below is independent. We need to show that there exists a unique map $x : T \rMap{} X$ such that $p \circ x = w$ and $t \circ s \circ x = z$.
\[
\vcenter{\hbox{\includegraphics[scale=0.18]{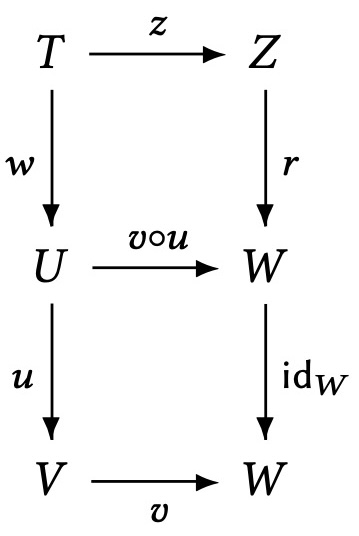}}}
\]
By axioms (IP1) and (IP2), the bottom square above is independent, hence, by (IP2) and (IP3), so is the composite rectange.
Since (B) is an independent pullback, there exists a unique $y : T \rMap{} Y$ such that $t \circ y = z$ and $q \circ y = u \circ w$. This means that the top square in the diagram above factorises as
\[
\vcenter{\hbox{\includegraphics[scale=0.18]{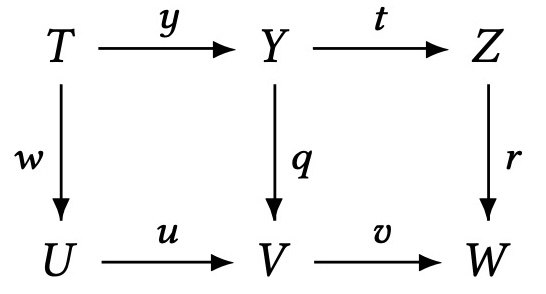}}}
\]
Since the composite rectangle is independent and the right-hand square is (B), which is an independent pullback, 
the left-hand square is independent by (IP4). Since (A) is an independent pullback, there exists a unique 
$x : T \rMap{} X$ such that $p \circ x = w$ and $s \circ x = y$, whence $t \circ s \circ x = t \circ y = z$. 
The proof that $x$ is the unique map satisfying $p \circ x = w$ and $t \circ s \circ x = z$ then proceeds as  usual.
\end{proof}

By axiom (IP5),  any category with independent pullbacks   is \emph{a fortiori} coconfluent, hence we can consider the category $\ShAt(\sCat{C})$ of atomic sheaves, for small such $\sCat{C}$. The remaining results in this section demonstrate a pleasing interplay 
 between atomic sheaves and  independent pullback structure. They are aimed at readers who are  interested in the general category-theoretic framework. Readers keen to arrive at the atomic sheaf logic of conditional independence may prefer to skip to the next section. 
 
\begin{theorem} 
\label{theorem:sheaf-nice}
Suppose $\sCat{C}$ is a small category with independent pullback structure.
The following are equivalent, for every $P \in \Psh(\sCat{C})$.
\begin{enumerate}

\item 
\label{square-prop:a}
$P$ is an atomic sheaf.

\item 
\label{square-prop:b}
$P$ maps independent squares in $\sCat{C}$ to pullbacks in $\Set$.
\end{enumerate}

\end{theorem}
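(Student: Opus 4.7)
The plan is to prove the two implications separately; (2) $\Rightarrow$ (1) is a direct consequence of hypothesis~(2) combined with (IP5), while (1) $\Rightarrow$ (2) is the substantive direction. For (2) $\Rightarrow$ (1), I would use (IP5) to form an independent pullback $K$ of the cospan $Y \to X \leftarrow Y$ with projections $p_1, p_2 \colon K \to Y$; hypothesis~(2) then expresses $P(X)$ as the equalizer of $P(p_1)$ and $P(p_2)$. Since $c \circ p_1 = c \circ p_2$, any $c$-invariant $y \in P(Y)$ satisfies $y \cdot p_1 = y \cdot p_2$, and the equalizer property produces the unique $c$-descendent $x \in P(X)$ required by Definition~\ref{definition:atomic-sheaf}.

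For (1) $\Rightarrow$ (2), consider an independent square $S$ with vertices $W, Y, Z, X$ and arrows $u' \colon W \to Y$, $v' \colon W \to Z$, $u \colon Y \to X$, $v \colon Z \to X$ satisfying $u \circ u' = v \circ v'$, together with a pair $y \in P(Y)$, $z \in P(Z)$ with $y \cdot u' = z \cdot v'$. Uniqueness of $x$ is immediate from separatedness. For existence I would proceed in four steps: (i) use (IP5) to form an independent pullback $W_0$ of $(u, v)$ with legs $p, q$, and obtain a comparison map $h \colon W \to W_0$ from $S$ via the universal property; (ii) compute $(y \cdot p) \cdot h = y \cdot u' = z \cdot v' = (z \cdot q) \cdot h$ and use separatedness to deduce $y \cdot p = z \cdot q$ in $P(W_0)$, denoting this common value $\tilde w$; (iii) establish that $y$ is $u$-invariant; (iv) apply the sheaf property to produce the unique $u$-descendent $x \in P(X)$ with $x \cdot u = y$, and verify $x \cdot v = z$ using the calculation $x \cdot v \cdot v' = x \cdot u \cdot u' = y \cdot u' = z \cdot v'$ combined with separatedness.

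The main obstacle is Step~(iii). For given $d, e \colon V \to Y$ with $u \circ d = u \circ e$, I would use (IP5) to form an independent pullback of the cospan $(u \circ d, v) \colon V, Z \to X$, yielding $V'$ with legs $s \colon V' \to V$ and $t \colon V' \to Z$. The crucial subclaim is that the commuting squares over $(u, v)$ with top edge $t$ and left edge $d \circ s$, respectively $e \circ s$, are both independent; to prove this I would symmetrise the independent pullback via (IP2), paste horizontally via (IP3) with a trivial square built from $d$, $u$ and $\mathrm{id}_X$ that is independent by (IP1), and then symmetrise back. Two applications of the universal property of $W_0$ then furnish maps $h_1, h_2 \colon V' \to W_0$ with $p \circ h_1 = d \circ s$, $p \circ h_2 = e \circ s$ and $q \circ h_i = t$ for $i = 1, 2$. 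Computing $\tilde w \cdot h_i$ via the $Z$-side yields $z \cdot t$ in both cases, so via the $Y$-side $y \cdot d \cdot s = y \cdot e \cdot s$; separatedness then concludes $y \cdot d = y \cdot e$. The delicate point is the bookkeeping needed to chain (IP1)--(IP3) to promote independence at the level of the cospan over $V$ into independence of the relevant squares over $Y$, which is what unlocks the universal property of $W_0$.
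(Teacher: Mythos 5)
Your proof is correct and follows essentially the same route as the paper's: the (2) $\Rightarrow$ (1) direction via the independent pullback of $c$ along itself is identical, and the (1) $\Rightarrow$ (2) direction likewise reduces to showing that one component of the matching pair is invariant along one leg of the square, descending via the sheaf property, and handling the other leg and uniqueness by separatedness. The only organisational difference is that the paper first treats independent pullbacks and then extends to general independent squares by factoring through the comparison map, obtaining the invariance from two independent pullbacks sharing an edge (via the independent-pullback lemma), whereas you treat general independent squares directly and get invariance from two comparison maps into $W_0$ together with the descended element $\tilde w$ --- a reorganisation rather than a substantively different argument.
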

\noindent
Note that, by contravariance, $P$ maps an independent square of the form \eqref{equation:independent-square} to a pullback square in $\Set$ with apex $PW$.

The proof of Theorem~\ref{theorem:sheaf-nice} is an adaptation to the axiomatic structure of independent pullbacks 
of a standard argument (see, e.g., \cite[A 2.1.11(h)]{johnstone}) that sheaves in the Schanuel topos can be characterised as 
pullback preserving functors from the category $\I$ of finite sets and injective functions to $\Set$.

\begin{proof}
For the \eqref{square-prop:a} $\Rightarrow$ \eqref{square-prop:b} implication, suppose that $P$ is an atomic sheaf. We first show that
$P$ maps independent pullbacks in $\sCat{C}$ to pullbacks in $\Set$. Consider any independent
 pullback of the form~\eqref{equation:independent-square}. We need to show that the square below is a pullback in $\Set$.
 \[
\vcenter{\hbox{\includegraphics[scale=0.18]{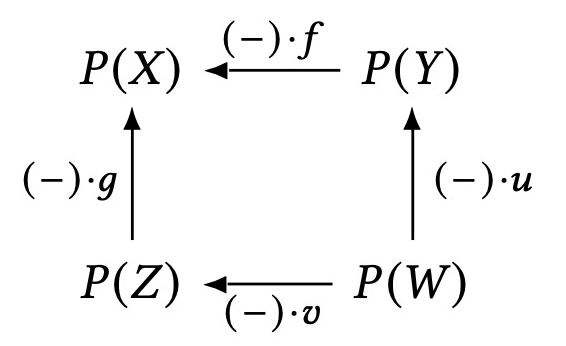}}}
\]
Accordingly, let $y\in P(Y)$ and $z \in P(Z)$ be such that $y \cdot f = z \cdot g$. We need to show that there exists a unique $w \in P(W)$
such that $w \cdot u = y$ and $w \cdot v = z$. 

We show that $z$ is $v$-invariant.  Let $s,t : T \rMap{} Z$ be such that $v \circ s = v \circ t$.
By the independent-pullback lemma, we can construct an independent pullback of $u$ along
$v \circ s = v \circ t$, either by composing the independent pullback  \eqref{equation:independent-square} with the independent
pullback of $g$ along $s$, or by composing \eqref{equation:independent-square} with the independent
pullback of $g$ along $t$. By a straightforward argument, this means the independent pullbacks of $g$ along $s$ and $t$ can be given 
the same left edge $g'$ as in the diagram below, which comprises three independent pullback squares (one with $f$ and $v$, one with $s'$ and $s$ and one with $t'$ and $t$).
\[
\vcenter{\hbox{\includegraphics[scale=0.18]{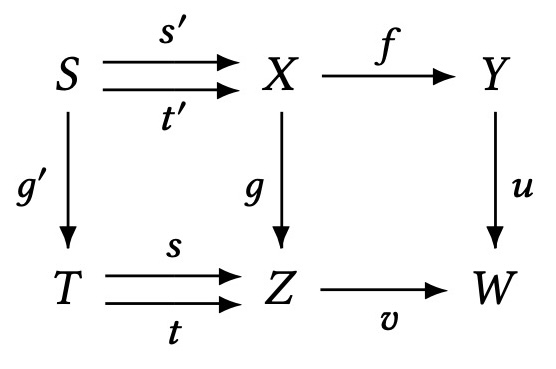}}}
\]
We have:
\begin{align*}
& z \cdot s \cdot g' = z \cdot g \cdot s' = y \cdot f \cdot s'  = y \cdot f \cdot t'  = z \cdot g \cdot t' = z \cdot t \cdot g' \enspace .
\end{align*}
Since $P$ is separated (Definition~\ref{definition:separated}) it follows that $z \cdot s = z \cdot t$.
Thus $z$ is indeed $v$-invariant.

By the sheaf property there exists $w \in P(W)$ such that $z = w \cdot v$. Then:
\[
w \cdot u \cdot f = w \cdot v \cdot g = z \cdot g = y \cdot f \enspace .
\]
So, by separatedness, we have found $w$ such that $w \cdot u = y$ and $w \cdot v = z$. Such a $w$ is unique by separatedness.

Having established that $P$ maps independent pullbacks in $\sCat{C}$ to pullbacks in $\Set$, we show that it more generally maps all independent squares
to pullbacks. Accordingly, suppose \eqref{equation:independent-square} is an independent square. By taking the independent 
pullback of $u$ along $v$, we can obtain  \eqref{equation:independent-square} as a composite:
\[
\vcenter{\hbox{\includegraphics[scale=0.18]{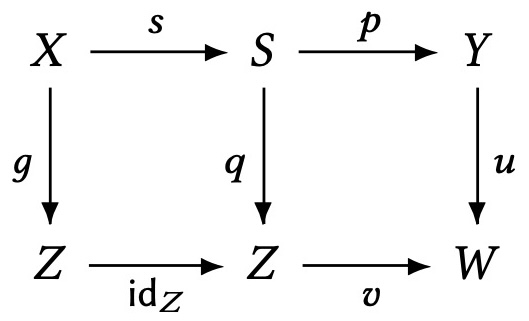}}}
\]
Since the right-hand square is an independent pullback, it is mapped by $P$ to a pullback in $\Set$. The left-hand square is mapped by $P$ to a commuting square in $\Set$ with an identity in a position that makes it a trivial pullback. Thus $P$ maps the composite square \eqref{equation:independent-square} to a composition of pullbacks, hence to a pullback.

For the \eqref{square-prop:b} $\Rightarrow$ \eqref{square-prop:a} implication, let $y \in P(Y)$ and $r \colon Y \rMap{} X$ in $\sCat{C}$ be such that
$y$ is $r$-invariant. Consider an independent pullback of $r$ along itself
\[
\vcenter{\hbox{\includegraphics[scale=0.18]{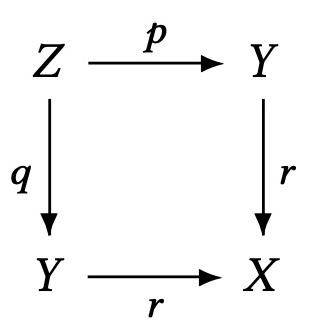}}}
\]
Because $y$ is $r$-invariant, $y \cdot p = y \cdot q$. By assumption, $P$ maps the above square to a pullback in $\Set$. Hence, there exists a unique $x \in P(X)$ such that $y = x \cdot r$, as required by the sheaf property.
\end{proof}

\begin{corollary}
\label{corollary:push-pull}
The functor $\,\AS\Yon: \sCat{C} \to \ShAt(\sCat{C})$ maps independent squares in $\sCat{C}$ to pushouts
in $\ShAt(\sCat{C})$.
\end{corollary}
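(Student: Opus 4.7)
The plan is to reduce the pushout claim to the characterisation of atomic sheaves provided by Theorem~\ref{theorem:sheaf-nice}, using the adjunction $\AS \dashv \iota$ (where $\iota : \ShAt(\sCat{C}) \hookrightarrow \Psh(\sCat{C})$ is the inclusion) together with the Yoneda lemma. Concretely, to check that a square is a pushout in $\ShAt(\sCat{C})$, it suffices to check, for every sheaf $\Sh{P}$, that applying $\mathrm{Hom}_{\ShAt(\sCat{C})}(-,\Sh{P})$ converts the square into a pullback in $\Set$.

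First I would fix an independent square
\[
\vcenter{\hbox{\includegraphics[scale=0.15]{diagA.jpg}}}
\]
in $\sCat{C}$ and an arbitrary sheaf $\Sh{P} \in \ShAt(\sCat{C})$. Combining the adjunction with the Yoneda lemma gives, natural in the object $C \in \sCat{C}$, the isomorphism
\[
\mathrm{Hom}_{\ShAt(\sCat{C})}(\AS\Yon C,\,\Sh{P}) ~\cong~ \mathrm{Hom}_{\Psh(\sCat{C})}(\Yon C,\,\iota\Sh{P}) ~\cong~ \Sh{P}(C) \enspace .
\]
Applying this to each corner of the square, the image under $\mathrm{Hom}_{\ShAt(\sCat{C})}(\AS\Yon(-),\Sh{P})$ of our square is isomorphic to the image of the original independent square under the presheaf $\Sh{P}$ (which, recall, is a contravariant functor, so squares in $\sCat{C}$ go to squares in $\Set$ with arrows reversed; this is exactly the right shape to ask for a pullback with apex $\Sh{P}(W)$).

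Next I would invoke Theorem~\ref{theorem:sheaf-nice}: because $\Sh{P}$ is an atomic sheaf, it sends independent squares in $\sCat{C}$ to pullbacks in $\Set$. Hence the image of our square under $\mathrm{Hom}_{\ShAt(\sCat{C})}(\AS\Yon(-),\Sh{P})$ is a pullback in $\Set$. Since this holds for every sheaf $\Sh{P}$, the universal property of the pushout is satisfied, i.e., $\AS\Yon$ sends the original independent square to a pushout in $\ShAt(\sCat{C})$.

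There is no real obstacle here, the proof is essentially a one-line assembly of the Yoneda lemma, the sheafification adjunction, and Theorem~\ref{theorem:sheaf-nice}. The only point requiring a little care is bookkeeping the variance: the contravariance of presheaves turns the independent \emph{square} into a \emph{co-square} (a cone) in $\Set$, which is precisely what is needed for the pullback condition on $\Sh{P}$ to match the pushout condition on $\AS\Yon$ of the original square.
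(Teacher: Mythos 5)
Your proposal is correct and follows essentially the same route as the paper: both combine the sheafification adjunction with the Yoneda lemma to identify $\mathrm{Hom}_{\ShAt(\sCat{C})}(\AS\Yon(C),\Sh{P})$ with $\Sh{P}(C)$, and then invoke Theorem~\ref{theorem:sheaf-nice} to conclude. The only (immaterial) difference is that you appeal to the general representability criterion for pushouts, while the paper verifies the pushout universal property directly by chasing the corresponding elements.
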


\begin{proof} This is a straightforward consequence of Theorem~\ref{theorem:sheaf-nice} on account of the bijections
\[
\ShAt(\AS\Yon(X), \Sh{A})  \cong  \Psh(\Yon(X), \Sh{A}) \cong \Sh{A}(X) \enspace ,
\]
natural in $X$ and $\Sh{A}$, given by the left-adjoint property of the associated sheaf functor and by the Yoneda lemma.

In more detail, consider any independent square in $\sCat{C}$ of the 
form~\eqref{equation:independent-square}. Suppose we have maps $\beta$ and $\gamma$ 
in  $\ShAt(\sCat{C})$ making the outside kite below commute.
\[
\vcenter{\hbox{\includegraphics[scale=0.18]{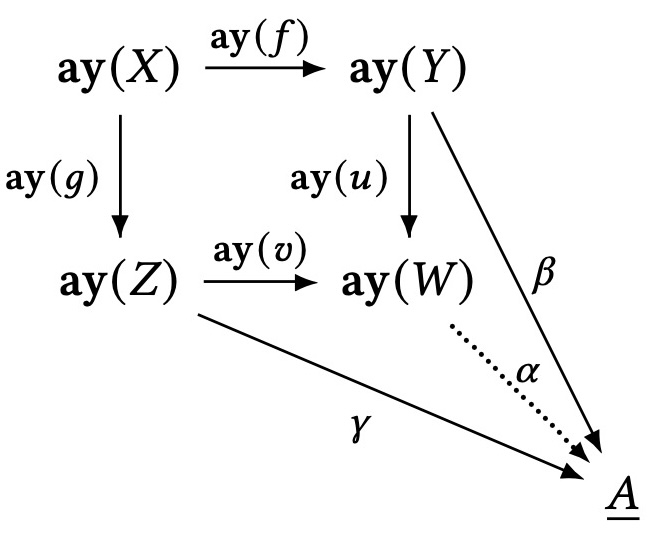}}}
\]
 The natural bijections above mean that $\beta$ and $\gamma$ correspond respectively to $y \in \Sh{A}(Y)$ and $z \in \Sh{A}(Z)$ satisfying $y \cdot f = g \cdot v$. Since, by 
 Theorem~\ref{theorem:sheaf-nice}, $\Sh{A}$ maps the square~\eqref{equation:independent-square} to a pullback in $\Set$, there exists a unique $w \in \Sh{A}(w)$ such that $w \cdot u = y$ and $w \cdot v = z$. Translating back along the natural bijections, there exists a unique map $\alpha \colon \AS\Yon(W) \rMap{} \Sh{A}$ such that
 $\alpha \cdot \AS\Yon(u) = \beta$ and $\alpha \cdot \AS\Yon(v) = \gamma$, as required.
\end{proof}

\begin{corollary} 
\label{corollary:are-pushouts}
The following are equivalent for a small category $\sCat{C}$ with independent pullbacks.
\begin{enumerate}
\item \label{item:subcanonical} Every representable presheaf 
is an atomic sheaf.
\item \label{item:independents-are-pushouts} Every independent square in $\sCat{C}$ is a pushout.
\end{enumerate}
\end{corollary}

\begin{proof} For the \eqref{item:subcanonical} $\implies$ \eqref{item:independents-are-pushouts} direction,
suppose every representable is an atomic sheaf. Then $\AS\Yon$ and $\Yon$ are naturally isomorphic, hence
$\AS\Yon: \sCat{C} \to \ShAt(\sCat{C})$ is full and faithful. As a fully faithful functor, $\AS\Yon$ reflects
(co)limits in general, and so pushouts in particular. Thus independent squares are pushouts in $\sCat{C}$ by
Corollary~\ref{corollary:push-pull}.

For the \eqref{item:independents-are-pushouts} $\implies$ \eqref{item:subcanonical} direction, 
it holds from the definition of  $\Yon(X)$ as $\sCat{C}(-,X)$ that
every representable presheaf $\Yon(X): \Op{\sCat{C}} \to \Set$ maps any  colimit of a  $D$-shaped  diagram in $\sCat{C}$  to 
a limit of the induced $\Op{D}$-shaped diagram in $\Set$. In particular, $\Yon(X)$ maps pushouts in $\sCat{C}$ to
pullbacks in $\Set$. So, if every independent square is a pushout in $\sCat{C}$, then representables 
map independent squares to pullbacks in $\Set$, and it follows from Theorem~\ref{theorem:sheaf-nice} that representables are
atomic sheaves.
\end{proof}


\section{Atomic conditional independence}
\label{section:atomic-conditional-independence}


A main goal of this section is to define a canonical subsheaf $\Indep_{\Sh{A},\Sh{B}|\Sh{C}} \subseteq \Sh{A} \times \Sh{B} \times \Sh{C}$
representing a conditional independence relation between sheaves $\Sh{A}, \Sh{B}, \Sh{C}$ in
atomic sheaf toposes $\ShAt(\sCat{C})$.
To achieve this, we shall require that
$\sCat{C}$ have independent pullbacks. We shall also need
to assume that the sheaves $\Sh{A}, \Sh{B}, \Sh{C}$ enjoy the special property of having  \emph{supports}, a notion that we now define. 



\begin{definition}[Supports]
\label{definition:supports}
A \emph{representable factorisation} of an element $x \in P(X)$, where $P \in \Psh(\sCat{C})$,
is given by a triple $(Y,q,y)$ such that:
$q : X \rMap{} Y$ is a map in $\sCat{C}$, we have $y \in P(Y)$ and $x = y \cdot q$. A \emph{morphism} from
one representable factorisation $(Y,q,y)$ of $x$ to another $(Y',q',y')$ is given by a map $r \colon Y \rMap{} Y'$ in $\sCat{C}$ such that
$r \circ q = q'$ and $y' \cdot r = y$. A representable factorisation $(Y,q,y)$ is called a \emph{support} for $x$ when it is a terminal object in the category  of representable factorisations of $x$. A presheaf $P \in \Psh(\sCat{C})$ is said to have \emph{supports} if, for every $X \in \sCat{C}$, it holds that every $x \in P(X)$ has a support. 
\end{definition}
For readers familiar with the \emph{category of elements} $\int \! P$ of a presheaf $P$, we remark that a support for $x \in P(X)$ is the same thing as a terminal object in the co-slice category $(X,x) / {\int \! P}$. This elegant formulation is used 
as the definition of support in~\cite{LAST} (there called \emph{minimal support}).

\begin{lemma} 
\label{lemma:weak-supports}
Suppose all maps in $\sCat{C}$ are epimorphic and  that $P \in \Psh(\sCat{C})$ has supports.
Then, for any $x \in P(X)$ and map $Y \rMap{q} X$  in $\sCat{C}$,
a representable factorisation $(Z, t, z)$ of $x$ is a support for $x$ if and only if  $(Z, t \circ q , z)$ is a support for $x \cdot q$.
\end{lemma}
\begin{proof} Suppose $(Z, t, z)$ is a support for $x$.  
Let $(W,u, w)$ be a support for $x \cdot q$.  Because 
$(Z, t \circ q, z)$ is a representable factorisation of $x \cdot q$, there exists a unique map $r : Z \rMap{} W$ that is a morphism from
$(Z, t \circ q, z)$ to $(W,u, w)$. 
Then $(W, r \circ t, w)$ is a representable factorisation of $x$. So there exists a unique map
$s: W \rMap{} Z$ that is a morphism from $(W, r \circ t, w)$ to $(Z, t, z)$. That is, $r$ is the unique map such that $r \circ t \circ q = u$ and $w \cdot r = z$, and $s$ is the unique map such that $s \circ r \circ t = t$ and $z \cdot s = w$. Since $t$ is an epi, 
the equation $s \circ r \circ t = t$ implies $s \circ r = \Id_Z$. Then we have $t \circ q = s \circ r \circ t \circ q = s \circ u$, which means that $s$ is a morphism of $x \cdot q$ factorisations from $(W,u, w)$ to $(Z, t \circ q, z)$. So $r \circ s$ is a morphism from $(W,u, w)$ to itself. Since  $(W,u, w)$ is the terminal $x \cdot q$ factorisation, $r \circ s = \Id_{W}$. Thus $r$ and $s$ are mutual inverses,
and $t$ is an isomorphism of $x \cdot q$ factorisations from $(W,u, w)$ to $(Z, t \circ q, z)$. Hence $(Z, t \circ q, z)$ is also a support for $x \cdot q$.

Conversely, suppose $(Z, t \circ q, z)$ is a support for  $x \cdot q$. Let $(V, v, w)$ be a representable factorisation of $x$.
Then $(V, v \circ q, w)$ is a representable factorisation of $x \cdot q$. So there exists a unique map $s : W \rMap{} Z$ that is a morphism from $(V, v \circ q, w)$ to $(Z, t \circ q, z)$, that is, $s \circ v \circ p = t \circ p$ and $z \cdot s = w$. Since $p$ is an epi, 
$s \circ v  = t$, and so $s$ is a (clearly unique) morphism from $(V, v, w)$ to $(Z, t, z)$. This shows that $(Z, t, z)$ is a support for $x$. 
\end{proof}

We shall also require  presheaves with  supports to be closed under finite products. This 
follows from a further property of the category $\sCat{C}$ (dual to the existence of \emph{$\mathcal{M}$-images} as defined  in \cite[{\S}5.1]{staton}).

\begin{definition}[Pairings]
A \emph{pair factorisation} of a span $Y \lMap{f} X \rMap{g} Z$ in a category $\sCat{C}$ is given by $(X',q', f',g')$ where
$q' : X \rMap{} X'$ and $Y \lMap{f'} X' \rMap{g'} Z$  are maps in $\sCat{C}$  that 
satisfy $f' \circ q' = f$ and $g' \circ q' = g$. 
A \emph{morphism} from a pair factorisation $(X',q', f',g')$  of $f,g$ to another $(X'',q'', f'',g'')$ is a map $r : X' \to X''$ in  $\sCat{C}$ such that 
$r \circ q' = q''$, $~f'' \circ r = f'$ and $g'' \circ r = g'$. A pair factorisation $(X',q', f',g')$  is said to be a \emph{pairing} for $f,g$
if it is a terminal object in the category of pair factorisations of $f,g$. We say that the category $\sCat{C}$ has \emph{pairings} if every span $f,g$ has a pairing.
\end{definition}

\begin{proposition} 
\label{proposition:product-supports}
Suppose all maps in $\sCat{C}$ are epimorphic and
that $\sCat{C}$ has pairings. If ${P}, {Q} \in \Psh(\sCat{C})$ both have  supports, then so does the product
${P} \times {Q}$.
\end{proposition}

\begin{proof}
Consider any element $(x,y) \in ({P} \times {Q})(X)$. Let $(U,u, x')$ be support for $x$ and $(V, v, y')$ support for $y$. 
Let $(W, w, u', v')$ be a pairing for $u,v$. We show that $(W,w, (x' \cdot u', y' \cdot v'))$ is support for $(x,y)$.

Let $(Z,t,(x'',y''))$ be any representable factorisation of $(x,y)$.
Then $(Z,t,x'')$ is a representable factorisation of $x$, so there exists a unique
map $r: Z \rMap{} U$ that is a morphism from $(Z,t,x'')$ to $(U,u, x')$, i.e., such that $r \circ t = u $ and $x' \cdot r = x''$. Similarly, 
there exists a unique map $s: Z \rMap{} V$ such that $s \circ t = v$ and $y' \cdot s = y''$. 
Since $(X, t,r,s)$ is a pair factorisation of $u,v$, there exists a unique $w' : Z \rMap{} W$ such that
$w' \circ t = w$ and $u' \circ w' = r$ and $v' \circ w' = s$. We claim that $w' : Z \rMap{} W$ is the unique
morphism from $(Z,t,(x'',y''))$ to $(W,w, (x' \cdot u', y' \cdot v'))$.
We have seen  that $w' \circ t = w$. Since $t$ is epimorphic, this determines $w'$ uniquely. 
It also holds  that
$x' \cdot u' \cdot w' =  x' \cdot r = x''$ and $y' \cdot v' \cdot w' = y' \cdot s = y''$.
So $w'$ is indeed a morphism of representable factorisations. 
\end{proof}

We explore the above properties in the case of our running example $\ShAt(\Sur)$. \begin{proposition}
\label{prop:NE-supports}
In $\ShAt(\Sur)$ every  sheaf of the form $\NEls(A)$ has supports. 
\end{proposition}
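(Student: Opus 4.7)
The plan is to show that every element $X \in \NEls(A)(\Omega)$ has a support given by its image factorisation in $\mathbf{Set}$, reinterpreted in $\Sur$.

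Given $X \colon \Omega \to A$, the image $X(\Omega) \subseteq A$ is nonempty (since $\Omega$ is) and finite (since $\Omega$ is), so it can be regarded as an object $\Omega_X$ of $\Sur$. The corestriction of $X$ is a surjection $q_X \colon \Omega \twoheadrightarrow \Omega_X$, hence a morphism in $\Sur$, and the set-theoretic inclusion $j_X \colon \Omega_X \hookrightarrow A$ is an element $j_X \in \NEls(A)(\Omega_X)$ (note: it is a function, it need not be a morphism of $\Sur$, which is fine since the second component of a representable factorisation only needs to live in $P$). By construction $j_X \circ q_X = X$, so $(\Omega_X, q_X, j_X)$ is a representable factorisation of $X$.

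To verify terminality, take an arbitrary representable factorisation $(\Omega', q', Y')$ of $X$, with $q' \colon \Omega \twoheadrightarrow \Omega'$ in $\Sur$, $Y' \colon \Omega' \to A$, and $Y' \circ q' = X$. Any morphism $r \colon \Omega' \to \Omega_X$ of representable factorisations must satisfy $j_X \circ r = Y'$, and since $j_X$ is injective this forces $r(\omega') = Y'(\omega')$ for every $\omega' \in \Omega'$; in particular $r$ is unique. Conversely, defining $r$ by this formula, one checks that the image of $r$ equals $Y'(\Omega') = Y'(q'(\Omega)) = X(\Omega) = \Omega_X$ (using surjectivity of $q'$), so $r$ is indeed a surjection and thus a morphism in $\Sur$; and $r(q'(\omega)) = Y'(q'(\omega)) = X(\omega) = q_X(\omega)$, so $r \circ q' = q_X$. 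Hence $(\Omega_X, q_X, j_X)$ is terminal.

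There is no real obstacle here; the only mild subtlety is remembering that in the definition of representable factorisation the element $j_X \in \NEls(A)(\Omega_X)$ does not have to be a morphism of $\Sur$, which is what allows the non-surjective inclusion of the image into $A$ to serve as the terminal ``maximally factored'' witness.
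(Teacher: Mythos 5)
Your proof is correct and is essentially the paper's own argument: the paper also takes the surjective--injective (image) factorisation $\Omega \twoheadrightarrow X(\Omega) \hookrightarrow A$ and declares it a support, leaving the terminality check as ``easy to verify''. You have simply spelled out that verification (uniqueness of the comparison map via injectivity of the inclusion, and its surjectivity via surjectivity of $q'$), which is exactly the intended computation.
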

\begin{proof}
Consider any $X \in \NEls(A)(\Omega)$, i.e., $X \colon \Omega \to A$. Factorise $X$ as a composite
$\Omega \rMap{p} \Omega' \rMap{X'} A$ where $p$ is surjective and $X'$ injective. 
It is easy to verify that $(\Omega',p,X')$ is a support for $X$.
\end{proof}

\begin{proposition}
\label{proposition:Sur-pairings}
The category $\Sur$ has pairings.
\end{proposition}
\begin{proof}
Consider a span $\Omega_Y \lMap{p} \Omega \rMap{q} \Omega_Z$ in $\Sur$. Factorise the function  
$(p,q) : \Omega \to \Omega_Y \times \Omega_Z$  as $\Omega \rMap{r} \Omega' \rMap{(p',q')} \Omega_Y \times \Omega_Z$ where
$r$ is surjective and $(p',q')$ injective. Then $(\Omega', r, p',q')$ is a pairing of $p,q$.
\end{proof}

Proposition~\ref{prop:NE-supports} is in fact subsumed by a much more general result, which however has a far more involved proof..
\begin{theorem}
\label{theorem:sur-supports}
In $\ShAt(\Sur)$ every sheaf has supports. 
\end{theorem}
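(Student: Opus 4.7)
The plan is to exploit finiteness of $\Omega$: select a representable factorisation $(\Omega_*, q_*, y_*)$ of $x \in \Sh{P}(\Omega)$ with $|\Omega_*|$ minimal, and verify that it is the sought support. Minimality makes sense because, for fixed $\Omega$, the collection of surjections out of $\Omega$ in $\Sur$ is finite up to isomorphism of codomain.

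Given any other representable factorisation $(\Omega', q', x')$ of $x$, I need to produce a unique morphism $r \colon \Omega' \to \Omega_*$ in $\Sur$ with $r \circ q' = q_*$ and $y_* \cdot r = x'$. Such $r$ (necessarily unique, since $q'$ is epi in $\Sur$) exists iff $\ker(q') \subseteq \ker(q_*)$, and the equation $y_* \cdot r = x'$ then follows from separatedness of $\Sh{P}$ applied to $(y_* \cdot r) \cdot q' = y_* \cdot q_* = x = x' \cdot q'$. If instead the kernel inclusion fails, the partition-theoretic join $\ker(q_*) \vee \ker(q')$ strictly coarsens $\ker(q_*)$, so the pushout $\tilde q \colon \Omega \to \tilde\Omega$ of $q_*$ and $q'$ in $\Sur$ has $|\tilde\Omega| < |\Omega_*|$; a descent of $x$ through $\tilde q$ would then contradict minimality of $|\Omega_*|$, forcing the inclusion. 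Thus the argument reduces to the following \emph{key joint descent lemma}: given descents $y_i \in \Sh{P}(\Omega_i)$ of $x$ through surjections $q_i \colon \Omega \to \Omega_i$ ($i = 1, 2$), there exists $\tilde y \in \Sh{P}(\tilde\Omega)$ with $\tilde y \cdot \bar q_i = y_i$, where $\bar q_i \colon \Omega_i \to \tilde\Omega$ are the pushout coprojections.

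The hard part is this key lemma. The natural idea is to apply Theorem~\ref{theorem:sheaf-nice} directly to the pushout square $\Omega, \Omega_1, \Omega_2, \tilde\Omega$: were it an independent square in $\Sur$ (i.e.\ a weak pullback in $\Set$), $\Sh{P}$ would send it to a pullback in $\Set$, and $\tilde y$ would arise immediately from the compatibility $y_1 \cdot q_1 = x = y_2 \cdot q_2$. However, pushouts of surjections in $\Set$ are generally not weak pullbacks (for instance, the pushout of $\{1,2,3\} \to \{1,2,3\}/\{1 \!\sim\! 2\}$ along $\{1,2,3\} \to \{1,2,3\}/\{2 \!\sim\! 3\}$ is not one), so this direct route fails. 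The substantive part of the proof is therefore a more delicate iterative amalgamation in which $\tilde y$ is assembled in stages—for instance, by incrementally identifying one element-pair at a time to build up $\tilde\Omega$ from $\Omega_1$—where at each stage an auxiliary square can be shown to be an independent square in $\Sur$ and Theorem~\ref{theorem:sheaf-nice} is applied to transport partial descent data across the pullback image in $\Sh{P}$. The construction relies essentially on the finiteness of the sets and on the specific combinatorial interplay between weak pullbacks in $\Set$ and the atomic sheaf structure of $\Sh{P}$.
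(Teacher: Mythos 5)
Your reduction is sound and mirrors the paper's overall architecture: the paper also obtains the support as a (joint) pushout of representable factorisations, of which there are finitely many since $\Omega$ has finitely many partitions, and everything hinges on the same amalgamation fact you isolate as the ``key joint descent lemma'' --- equivalently, that every atomic sheaf on $\Sur$ maps pushouts in $\Sur$ to pullbacks in $\Set$. You also correctly diagnose the obstacle: pushouts of surjections are in general not weak pullbacks in $\Set$, so Theorem~\ref{theorem:sheaf-nice} cannot be applied to the pushout square directly. Up to this point the proposal is a faithful, correct skeleton of the argument.

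The gap is that the key lemma --- which is the entire mathematical content of the theorem --- is not proved, only gestured at, and the gesture points in a direction that does not work as stated. You claim that $\tilde y$ can be assembled ``by incrementally identifying one element-pair at a time,'' with ``an auxiliary square \ldots shown to be an independent square in $\Sur$'' at each stage. But the intermediate squares arising in any such iteration are typically \emph{not} independent (not weak pullbacks in $\Set$): independence is only achieved once the identification process has closed up under the full zigzag/connected-component relation generated by the span $(q_1,q_2)$. The paper's proof makes this precise by iterating relations $R_0 := R$, $S_n := R^{-1}\circ R_n$, $R_{n+1} := R\circ S_n$ approximating the path relation from below; at every intermediate stage the transport of descent data is achieved not via independence but via \emph{separatedness} of $\Sh{P}$ (showing $x\cdot s_n = x\cdot s'_n$ and $x\cdot r_n = y\cdot r'_n$ by induction), and Theorem~\ref{theorem:sheaf-nice} is invoked exactly once, at the finite stage where $R_n$ stabilises and the resulting square genuinely is a pullback in $\Set$. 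There is also a nontrivial set-up issue your sketch glosses over: a single pair-identification on $\Omega_1$ does not by itself determine a commuting square over $\Omega$, so it is not even clear what ``auxiliary square'' you would test for independence. To complete the proof you would need to supply this iterative separatedness argument (or an equivalent one) in full; as it stands, the hard half of the theorem is missing.
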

\noindent
Because this theorem is not central to the development, we relegate its proof to Appendix~\ref{appendix:supports}

We henceforth impose global assumptions on our category $\sCat{C}$.
\begin{definition}
\label{definition:requisite}
We say that a small category $\sCat{C}$ has the \emph{requisite structure} if: every map in $\sCat{C}$ is an epimorphism, 
it has pairings, and it has independent-pullback structure  satisfying the descent property.
\end{definition}
\noindent
The reason for imposing the assumption that every map is an epimorphism is that it allows us to apply
Lemma~\ref{lemma:weak-supports} and Proposition~\ref{proposition:product-supports}. 
Because the role of $\sCat{C}$ is to serve as the gateway to the category $\ShAt{\sCat{C}}$ of atomic sheaves, this assumption is very mild.
As discussed at the end of Section~\ref{section:atomic-sheaves}, it is weaker than assuming that all representable presheaves are atomic sheaves. Moreover, every atomic sheaf topos is equivalent to
$\ShAt{\sCat{C}}$ for some coconfluent small category $\sCat{C}$ in which every map is an epimorphism.

Since it is obvious that every map in $\Sur$ is epimorphic, Propositions~\ref{proposition:Sur-pairings} 
and~\ref{proposition:Sur-indep-pullbacks} show that the category $\Sur$ has the requisite structure.

\textbf{For the remainder of the present section, let $\sCat{C}$ be a small category with the requisite structure.}

We define a general \emph{atomic conditional independence} relation for atomic sheaves
$\Sh{A}, \Sh{B}, \Sh{C}$ on $\sCat{C}$  with supports. For any $X \in \sCat{C}$, define
\begin{equation}
\label{equation:independence-subsheaf}
{\Indep}_{\Sh{A},\Sh{B}|\Sh{C}}(X) \subseteq (\Sh{A} \times \Sh{B} \times \Sh{C})(X)
\end{equation}
to consist of those triples $(x,y,z) \in (\Sh{A} \times \Sh{B} \times \Sh{C})(X)$ that satisfy the 
condition: there exists an independent square $r \circ p = s \circ q$ in $\sCat{C}$ (as in the diagram below),
and there exist elements $(x',u') \in (\Sh{A} \times  \Sh{C})(X_x)$, and
$(y',v') \in (\Sh{A} \times  \Sh{C})(X_y)$ and $z' \in  \Sh{C}(X_z)$ such that
$x' \cdot p = x$ and $y' \cdot q = y$ and $z' \cdot r = u'$ and $z' \cdot s = v'$ and 
$(X_z,\, r \circ p,\, z')$ is  support for $z$. The data in the condition above is illustrated by the hybrid diagram below,
where  the symbol $\Indep$ indicates that the square is independent.
\begin{equation}
\label{diagram:hybrid}
\vcenter{\hbox{\includegraphics[scale=0.18]{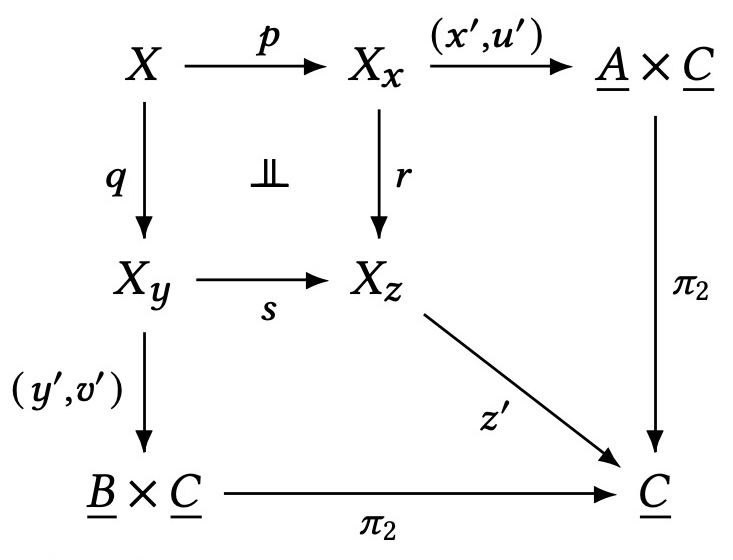}}}
\end{equation}
The above diagram is \emph{hybrid} in the sense that the arrows in it represent three distinct kinds of entity. Arrows of the form 
$X \rMap{} Y$ between objects of $\sCat{C}$ represent maps in $\sCat{C}$. Arrows of the form $X \rMap{} \Sh{A}$, from an object $X$ of $\sCat{C}$ to a sheaf $\Sh{A}$, represent elements of the set $\Sh{A}(X)$. Arrows of the form  $\Sh{A} \rMap{} \Sh{B}$ between sheaves represent maps in $\ShAt(\sCat{C})$. By the Yoneda lemma, such hybrid diagrams can equivalently be interpreted as 
ordinary diagrams in the presheaf category $\Psh(\sCat{C})$, with objects  $X$ of $\sCat{C}$ being interpreted as representable presheaves $\Yon X$.


\begin{lemma}
\label{lemma:use-supports}
In the definition of ${\Indep}_{\Sh{A},\Sh{B}|\Sh{C}}(X)$, we can, without loss of generality, choose the data so that
$(X_x, p, (x',u'))$ is support for $(x,z)$ and 
$(X_y, q, (y',v'))$ is support for $(y,z)$.
\end{lemma}
\begin{proof} Suppose we have:
\begin{equation}
\label{diagram:for-support-a}
\vcenter{\hbox{\includegraphics[scale=0.18]{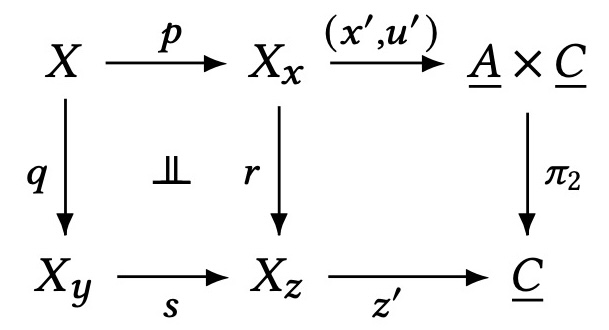}}}
\end{equation}
where $(X_z,\, r \circ p,\, z')$ is a support for $z$.
Let $(X'_x, t, (x'',u''))$ be support for $(x',u') \in (\Sh{A} \times \Sh{C})(X_x)$.
Then $(X'_x, t, u'')$ is a representable factorisation of $z' \cdot r$.
By Lemma~\ref{lemma:weak-supports}, $(X_z, r, z')$ is a support for $z' \cdot r$.
So there exists $r': X'_x \to X_z$ such that $r' \circ t = r$ and 
$z' \cdot r' = u''$. We have thus obtained the data in the hybrid diagram below.
\begin{equation}
\label{diagram:for-support-b}
\vcenter{\hbox{\includegraphics[scale=0.18]{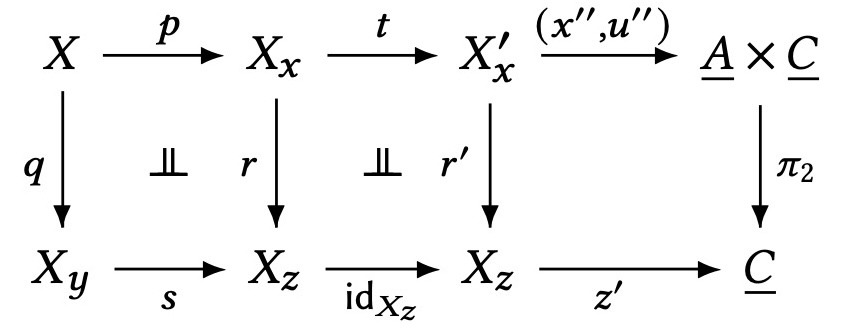}}}
\end{equation}
Morover, by Lemma~\ref{lemma:weak-supports}, it holds that
$(X'_x,\, t \circ p, (x'',u''))$ is support for $(x',u') \cdot p = (x,z)$.
We have thus shown how diagram~\eqref{diagram:for-support-a}, gives rise to
diagram~\eqref{diagram:for-support-b}, in which the composite independent square satisfies the desired support property for
$(x,z)$.

By starting with the new diagram and repeating the same argument in a vertical rather than horizontal direction, one similarly satisfies the required 
support property for $(y,z)$.
\end{proof}

\begin{theorem} 
\label{theorem:general-conditional-independence}
Suppose 
$\Sh{A}, \Sh{B}, \Sh{C}$ are atomic sheaves with supports. 
Then $\Indep_{\Sh{A},\Sh{B}|\Sh{C}}(X) \subseteq (\Sh{A} \times \Sh{B} \times \Sh{C})(X)$ defines a subsheaf via 
Prop.~\ref{proposition:subsheaf}.
\end{theorem}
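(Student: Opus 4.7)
The plan is to apply Proposition~\ref{proposition:subsheaf}, which reduces the claim to two checks: that $\Indep_{\Sh{A},\Sh{B}|\Sh{C}}$ is closed under restriction along morphisms in $\sCat{C}$ (the subpresheaf condition) and that restriction along any morphism reflects membership (the subsheaf condition). The first is largely bookkeeping; the second is where the structural hypotheses on $\sCat{C}$ do the real work.

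For the subpresheaf direction, suppose $(x,y,z) \in \Indep_{\Sh{A},\Sh{B}|\Sh{C}}(X)$ is witnessed by data $(X_x, X_y, X_z, p, q, r, s, x', u', y', v', z')$ as in diagram~\eqref{diagram:hybrid}, and let $f \colon Y \rMap{} X$. I would propose the precomposed witness: retain the same outer corners $X_x, X_y, X_z$, maps $r, s$, and elements, but replace the apex maps $p, q$ by $p \circ f, q \circ f$. The element equations, e.g.\ $x' \cdot (p \circ f) = x \cdot f$, are immediate, and the support condition that $(X_z, r \circ p \circ f, z')$ is a support for $z \cdot f$ is exactly an instance of Lemma~\ref{lemma:weak-supports}. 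The only substantive verification is that the precomposed square remains independent: this follows from pasting the given independent square with the identity-based square attached to $f$ that is furnished by (IP1), using the pasting axiom (IP3) (and (IP2) if needed to align orientations).

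For the subsheaf direction, suppose $(x, y, z) \in (\Sh{A} \times \Sh{B} \times \Sh{C})(X)$ with $(x \cdot f, y \cdot f, z \cdot f) \in \Indep_{\Sh{A},\Sh{B}|\Sh{C}}(Y)$. I build a candidate witness at $X$ canonically from supports. Because $\sCat{C}$ has pairings and the given sheaves have supports, the products $\Sh{A} \times \Sh{C}$ and $\Sh{B} \times \Sh{C}$ have supports; let $(X_x, p, (x', u'))$ be a support for $(x, z)$, let $(X_y, q, (y', v'))$ be a support for $(y, z)$, and let $(X_z, c_z, z')$ be a support for $z$. Applying Lemma~\ref{lemma:weak-supports} to $u' \in \Sh{C}(X_x)$, whose restriction along $p$ is $z$, yields a unique $r \colon X_x \rMap{} X_z$ with $r \circ p = c_z$ and $z' \cdot r = u'$; symmetrically one gets $s \colon X_y \rMap{} X_z$. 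The resulting square $r \circ p = s \circ q$ then commutes, and the element and support conditions in the definition of $\Indep_{\Sh{A},\Sh{B}|\Sh{C}}$ hold by construction.

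The main obstacle is to show this canonical square at $X$ is actually independent. Let $(Y_x, Y_y, Y_z, p'', q'', r'', s'', x'', u'', y'', v'', z'')$ witness the hypothesis at $Y$. The universal property of the three supports, combined with Lemma~\ref{lemma:weak-supports}, produces mediating maps $\tau_x \colon Y_x \rMap{} X_x$, $\tau_y \colon Y_y \rMap{} X_y$ and $\tau_z \colon Y_z \rMap{} X_z$ compatible with the two sets of witness data; moreover $\tau_z$ is an isomorphism, since $(Y_z, r'' \circ p'', z'')$ and $(X_z, c_z \circ f, z')$ are both supports of $z \cdot f$. Using that all maps in $\sCat{C}$ are epi, cancelling $p''$ and $q''$ gives $\tau_z \circ r'' = r \circ \tau_x$ and $\tau_z \circ s'' = s \circ \tau_y$. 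My strategy is then to paste the independent square at $Y$ with the two compatibility squares involving the $\tau$'s, assembling a larger composite with apex $Y$ and outer corners $X_x, X_y, X_z$ — precisely the outer kite of $f$ above the canonical square at $X$. Because $\tau_z$ is an iso, the two compatibility squares can be absorbed into identity-type squares governed by (IP1), so (IP2) and (IP3) deliver independence of the composite outer kite. Finally, the \emph{descent property} transfers independence from the outer kite across $f$ to the canonical square at $X$. The delicate point — and what I expect to be the hardest step — is rigorously establishing independence of the glue squares involving the $\tau$'s; this is where the isomorphism status of $\tau_z$ is essential, and without the descent assumption the last step could not be closed.
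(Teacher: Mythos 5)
Your subpresheaf direction coincides with the paper's: the same precomposed witness, the same appeal to Lemma~\ref{lemma:weak-supports} for the support condition, and the same pasting of the given independent square with the identity-based square on $f$ furnished by (IP1)/(IP2) via (IP3). For the subsheaf direction you take a genuinely different route. The paper does not build a fresh witness at $X$ and then compare it with the given one at $Y$; instead it first normalises the $Y$-witness by an auxiliary lemma (one may assume, without loss of generality, that $(X_x,p',(x',u'))$ and $(X_y,q',(y',v'))$ are themselves supports of $(x\cdot t,z\cdot t)$ and $(y\cdot t,z\cdot t)$), and then plays the terminality of those supports against the representable factorisations $(X,t,(x,z))$ and $(X,t,(y,z))$ to factor $p'=p\circ t$ and $q'=q\circ t$ through $X$. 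Epi-cancellation of $t$ gives commutativity of the candidate square at $X$, and the descent property applies immediately because the given independent square at $Y$ \emph{is} the $t$-precomposition of that candidate square --- no mediating maps $\tau_x,\tau_y,\tau_z$, no glue squares. Your route trades that normalisation lemma for the comparison maps and the extra pasting; both are legitimate, but the paper's is shorter and stays entirely inside the structure it has already set up.

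The one genuine gap is the step you yourself flag: independence of the glue squares $(\tau_x,r'',r,\tau_z)$ and $(\tau_y,s'',s,\tau_z)$. This does not follow from (IP1)--(IP3) as you suggest: those axioms only propagate independence \emph{to} a composite, whereas here you must recognise a \emph{component} of an independent rectangle as independent, which is exactly the job of (IP4). The argument that closes it is: attach to the glue square the square with both legs $\tau_z$ over the cospan $(\Id_{X_z},\Id_{X_z})$; the resulting composite has an identity as its far edge and the composite $r\circ\tau_x=\tau_z\circ r''$ as the parallel edge, so it is independent by (IP1) (after an (IP2) transposition if needed), and the attached square is an \emph{independent pullback} precisely because $\tau_z$ is an isomorphism (the mediating map is forced to be $\tau_z^{-1}$ composed with the given leg). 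Then (IP4) delivers independence of the glue square. With that supplied --- together with the observation that $\tau_z$ really is invertible because $(Y_z,r''\circ p'',z'')$ and $(X_z,c_z\circ f,z')$ are both supports of $z\cdot f$, and that the maps $r,s$ at $X$ come from terminality of the support of $z$ rather than from Lemma~\ref{lemma:weak-supports} as you state --- your pasting and the final appeal to descent are sound. As written, though, the decisive independence claim is asserted rather than proved, and the axioms you cite for it are not the ones that do the work.
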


\begin{proof}[Proof of Theorem~\ref{theorem:general-conditional-independence}]
We first show that $\Indep_{\Sh{A},\Sh{B}|\Sh{C}}$ is a subpresheaf. Suppose $(x,y,z) \in \Indep_{\Sh{A},\Sh{B}|\Sh{C}}(X)$ and
$t: Y \rMap{} X$ is a map in $\sCat{S}$; that is, we have the data 
in diagram~\eqref{diagram:hybrid} and $(X_z,\, r \circ p,\, z')$ is a support for $z$.
We need to show that $(x\cdot r,\,y \cdot r ,\,z \cdot r) \in \Indep_{\Sh{A},\Sh{B}|\Sh{C}}(Y)$.
This holds on account of the data illustrated below.
\[
\vcenter{\hbox{\includegraphics[scale=0.18]{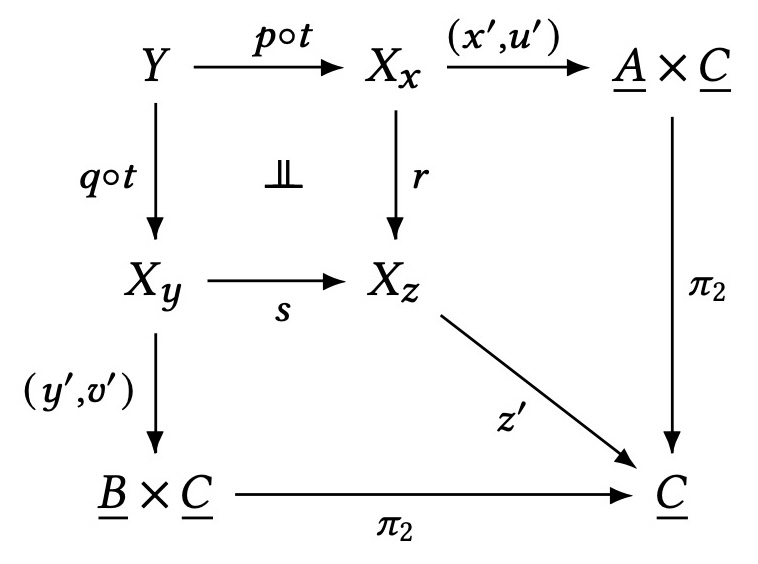}}}
\]
Indeed, $(X_z,\, r \circ p \circ t,\, z')$ is support for $z \circ r$ on account of Lemma~\ref{lemma:weak-supports}, and the marked square is 
independent  since it is a composition of two independent squares:
\[
\vcenter{\hbox{\includegraphics[scale=0.18]{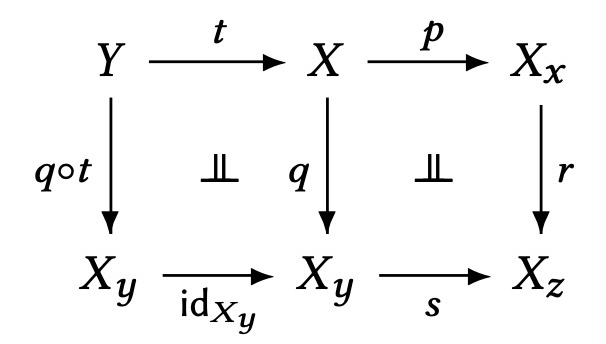}}}
\]

For the subsheaf property, suppose $(x,y,z) \in (\Sh{A}\times \Sh{B}\times\Sh{C})(X)$ and 
$(x\cdot t,\,y \cdot t ,\,z \cdot t) \in \Indep_{\Sh{A},\Sh{B}|\Sh{C}}(Y)$ where $t: Y \rMap{} X$ is a map in $\sCat{S}$. We need to show that
$(x,y,z) \in \Indep_{\Sh{A},\Sh{B}|\Sh{C}}(X)$.

The assumption gives us the data below
\[
\vcenter{\hbox{\includegraphics[scale=0.18]{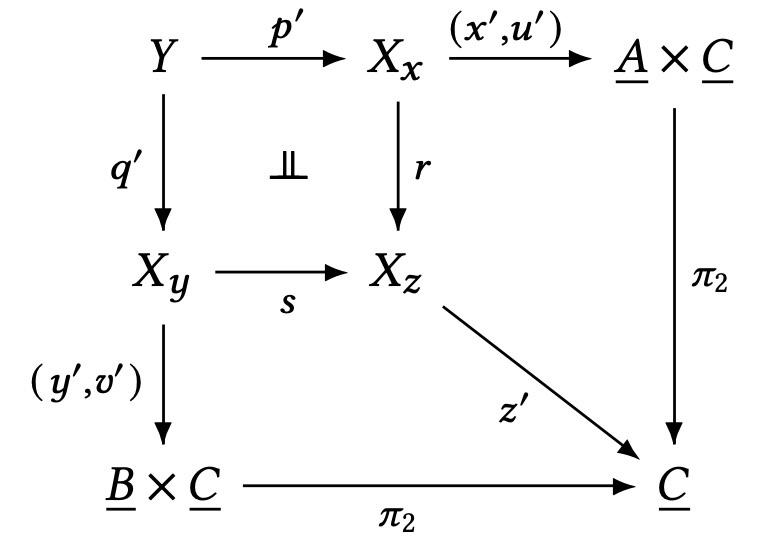}}}
\]
where, 
$x' \cdot p' = x \cdot t$ and $y' \cdot q' = y \cdot t$ and 
$(X_z,\, r \circ p',\, z')$ is support for $z \cdot t$. By Lemma~\ref{lemma:use-supports}, we can assume that
$(X_x, p', (x',u'))$ is support for $(x\cdot t ,z \cdot t)$ and 
$(X_y, q', (y',v'))$ is support for $(y\cdot t,z \cdot t)$.
Since $(X, t, (x,z))$ is a representable factorisation of $(x\cdot t ,z \cdot t)$, we have $p' = p \circ t$ and
$(x,z) = (x'\cdot p, \, u' \cdot p)$,  for some $p \colon X \rMap{} X_x$.
Similarly, $q' = q \circ t$ and
$(y,z) = (y'\cdot q, \, v' \cdot q)$,  for some $q \colon X \rMap{} X_y$.
Then
 \[
r \circ p \circ t = t \circ p' = s \circ q' = s \circ q \circ t \enspace .
\]
Since $t$ is epimorphic, $r \circ p = s \circ q$ is a commuting square, which is 
independent by the descent property. Accordingly, we have 
precisely the data in diagram~\eqref{diagram:hybrid}. Moreover, since
$(X_z,\, r \circ p \circ t,\, z') = (X_z,\, r \circ p',\, z')$ is support for $z \cdot t$, it follows from
Lemma~\ref{lemma:weak-supports} that $(X_z,\, r \circ p,\, z')$ is support for $z$, as required.
\end{proof}

In the special case of sheaves $\NEls(A)$ of nondeterministic variables in $\ShAt(\Sur)$,
the general atomic conditional independence 
defined above 
coincides with the multiteam conditional independence from
Proposition~\ref{proposition:indep-subsheaf}.

\begin{proposition} 
\label{proposition:indep-NEls}
The subsheaf 
\[{\Indep}_{\NEls(A),\NEls(B)|\NEls(C)} \subseteq \NEls(A)\times \NEls(B) \times \NEls(C)\]
in $\ShAt(\Sur)$ coincides with $\Indep_{A,B|C} \subseteq \NEls(A) \times \NEls(B) \times \NEls(C)$
from Proposition~\ref{proposition:indep-subsheaf}.
\end{proposition}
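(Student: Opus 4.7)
The plan is to establish both inclusions at an arbitrary stage $\Omega \in \Sur$, exploiting two facts that have already been verified in the excerpt: (i) by Proposition~\ref{prop:NE-supports}, the support of a nondeterministic variable $W\colon\Omega\to D$ is its image factorisation $\Omega \Epi W(\Omega) \hookrightarrow D$, so that any $z'$ witnessing the support condition for $Z$ is a mono (the image inclusion) and $r\circ p$ is the corresponding surjective corestriction; and (ii) an independent square in $\Sur$ is exactly a weak pullback in $\Set$, i.e.\ the element-wise existence property without uniqueness.

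For the inclusion $\Indep_{\NEls(A),\NEls(B)|\NEls(C)}(\Omega) \subseteq \Indep_{A,B|C}(\Omega)$, take data as in diagram~\eqref{diagram:hybrid} and suppose $\omega_1,\omega_2\in\Omega$ satisfy $X(\omega_1)=a,\,Z(\omega_1)=c$ and $Y(\omega_2)=b,\,Z(\omega_2)=c$. Setting $\alpha:=p(\omega_1)$ and $\beta:=q(\omega_2)$ we get $x'(\alpha)=a$ and $y'(\beta)=b$, while $z'(r(\alpha))=Z(\omega_1)=c=Z(\omega_2)=z'(s(\beta))$; injectivity of $z'$ (as the support's image inclusion) forces $r(\alpha)=s(\beta)$. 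The weak-pullback property of the independent square then produces $\omega\in\Omega$ with $p(\omega)=\alpha$ and $q(\omega)=\beta$, whence $X(\omega)=a,\,Y(\omega)=b,\,Z(\omega)=c$, as needed.

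For the converse, assume $X\Indep Y\Cond Z$ and build the witnessing diagram directly. Let $X_z:=Z(\Omega)$ with surjection $Z_0\colon\Omega\Epi X_z$ and inclusion $z'\colon X_z\hookrightarrow C$, giving the support of $Z$. Let $X_x\subseteq A\times X_z$ be the image of $(X,Z_0)\colon\Omega\to A\times X_z$, with corresponding surjection $p\colon\Omega\Epi X_x$, and let $x'$ and $r$ be the restrictions of the two projections (both surjective: $r$ because $Z_0$ is). Define $X_y,q,y',s$ analogously from $(Y,Z_0)$, and set $u':=z'\circ r$ and $v':=z'\circ s$. Then $r\circ p = Z_0 = s\circ q$, and $(X_z,r\circ p,z')$ is a support for $Z$, while $x'\circ p = X$, $y'\circ q = Y$ and the two equalities $z'\cdot r = u'$, $z'\cdot s = v'$ hold by construction. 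It remains to verify that the commuting square is independent, i.e.\ a weak pullback in $\Set$: unpacked, this says that whenever $(a,c)\in X_x$ and $(b,c)\in X_y$ there is $\omega\in\Omega$ with $X(\omega)=a,Y(\omega)=b,Z_0(\omega)=c$. But $(a,c)\in X_x$ provides $\omega_1$ with $X(\omega_1)=a,\,Z(\omega_1)=z'(c)$, and $(b,c)\in X_y$ provides $\omega_2$ with $Y(\omega_2)=b,\,Z(\omega_2)=z'(c)$, so $X\Indep Y\Cond Z$ delivers the required $\omega$.

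The only mildly delicate point is bookkeeping: one must make sure that $X_x, X_y$ are taken to live over the image $X_z$ of $Z$ (not over $C$), so that the surjectivity of $r$ and $s$, and the identification of $r\circ p = s\circ q$ with the support-surjection of $Z$, both fall out automatically. Once this is arranged, each direction is a direct matter of unpacking definitions, since atomic conditional independence in $\Sur$, stripped of its categorical packaging, is precisely the weak-pullback condition on the canonical square of image factorisations of $(X,Z_0)$ and $(Y,Z_0)$ over $X_z$, which in turn is precisely the element-wise conditional-independence axiom of Definition~\ref{definition:conditional-independence-NV}.
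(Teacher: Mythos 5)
Your argument is correct and follows essentially the same route as the paper's own proof: the first inclusion uses injectivity of $z'$ (forced by the support condition, via Proposition~\ref{prop:NE-supports}) together with the weak-pullback reading of independent squares to produce the witnessing $\omega$, and the converse builds the same canonical hybrid diagram from the images of $(X,Z)$, $(Y,Z)$ and $Z$. The only difference is cosmetic: you carve $X_x$, $X_y$ out of $A\times X_z$ and $B\times X_z$ rather than $A\times C$ and $B\times C$ as the paper does, which changes nothing since the second coordinate always lies in $Z(\Omega)$.
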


\begin{proof}
Suppose $X : \Omega \to A, \; Y: \Omega \to B,\; Z : \Omega \to C$ are
nondeterministic variables such that $X \Indep Y \Cond Z$, according to
Definition~\ref{definition:conditional-independence-NV}. Define
\begin{align*}
\Omega_X ~ := ~  & \{(x,z) \in A \times C \mid \exists \omega \in \Omega.\; \text{$x = X(\omega)$ and $z = Z(\omega)$}\} 
\\
\Omega_Y ~ := ~  & \{(y,z) \in B \times C \mid \exists \omega \in \Omega.\; \text{$y = Y(\omega)$ and $z = Z(\omega)$}\} 
\\
\Omega_Z ~ := ~ & Z(\Omega)
\end{align*}
Then the hybrid diagram below, shows that $(X,Y,Z)$ belongs to the atomic conditional independence ${\Indep}_{\NEls(A),\NEls(B)|\NEls(C)}(\Omega)$, since $(\Omega_Z, \,Z, \, z\mapsto z)$ is support for $Z$, by the definition of $\Omega_Z$.
\[
\vcenter{\hbox{\includegraphics[scale=0.18]{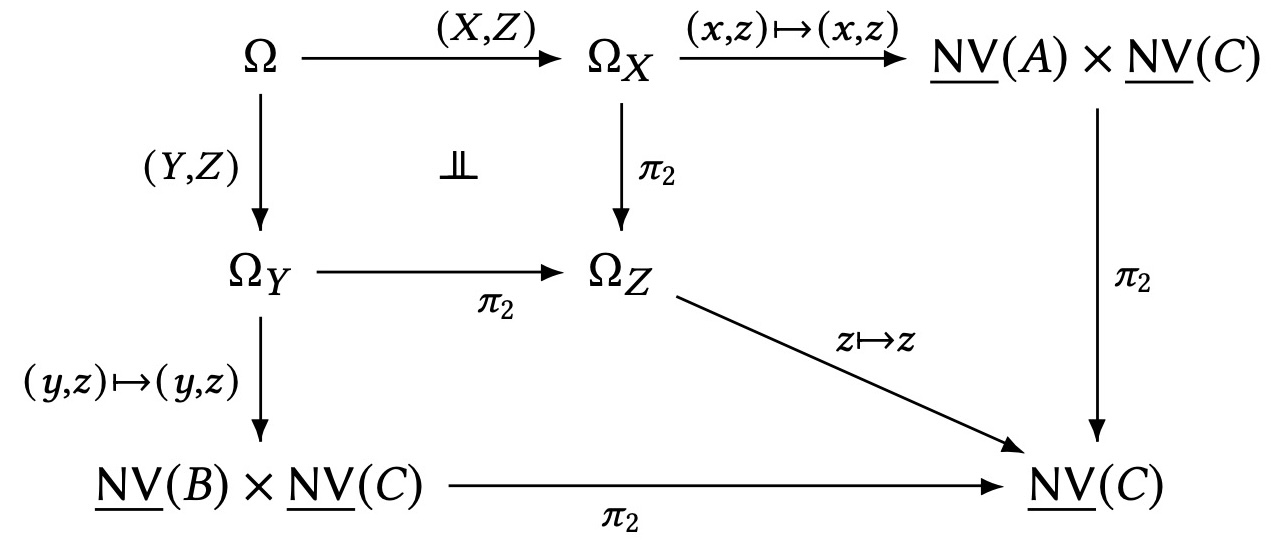}}}
\]

Conversely, suppose $(X,Y,Z) \in {\Indep}_{\NEls(A),\NEls(B)|\NEls(C)}(\Omega)$. That is, we have the data in the hybrid diagram
below, where $X' \cdot p = X$ and $Y' \cdot q = Y$ and $(\Omega_Z, \, r \circ p, \, Z')$ is support for $Z$.
\[
\vcenter{\hbox{\includegraphics[scale=0.18]{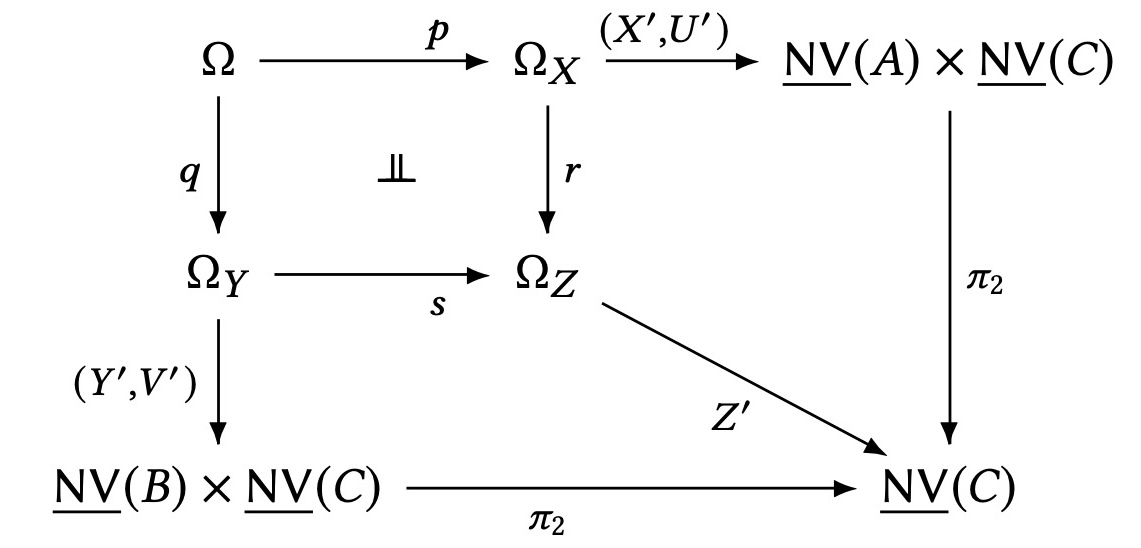}}}
\]
We show that $X \Indep Y \Cond Z$, according to
Definition~\ref{definition:conditional-independence-NV}.
Suppose we have $\omega', \omega'' \in \Omega$ such that $X(\omega') = a$ and $Z(\omega')= c$ and
$Y(\omega'') = b$ and $Z(\omega'') = c$. Then
\[
Z'(r(p(\omega'))) = Z(\omega') = Z(\omega'') = Z'(s(q(\omega''))) \enspace .
\]
Since $(\Omega_Z, \, r \circ p, \, Z')$ is support for $Z$, the function $Z' \colon \Omega_Z \to C$ is injective,
hence $r(p(\omega')) = s(q(\omega''))$. Since the top-left square is independent, there exists $\omega \in \Omega$
such that $p(\omega) = p(\omega')$ and $q(\omega) = q(\omega'')$. Then
$X(\omega) = X'(p(\omega)) = X'(p(\omega')) = X(\omega') = a$. Similarly, $Y(\omega) = b$, and $Z(\omega) = c$.
\end{proof}

We now turn to the extension of the atomic sheaf logic of Sections~\ref{section:atomic-sheaf-logic} and~\ref{section:atomic-equivalence} with
conditional independence formulas~\eqref{equation:indep-formula}.
Once again, we view this extension as being obtained by including a family of relation symbols.
In this case we add relations
$\sIndep_{\vec{\Syn{A}}, \vec{\Syn{B}} | \vec{\Syn{C}}}$, and require that each such relation  is interpreted as the subsheaf
\[{\Indep}_{\Sh{\vec{\Syn{A}}}, \Sh{\vec{\Syn{B}}} | \Sh{\vec{\Syn{C}}}}
 ~ \subseteq ~ \Sh{\vec{\Syn{A}}} \times \Sh{\vec{\Syn{B}}} \times  \Sh{\vec{\Syn{C}}} \enspace ,
\]
where we write, e.g., $\Sh{\vec{\Syn{A}}}$ for the product $\prod_{i = 1}^n \Sh{\Syn{A}_n}$, where
$\vec{\Syn{A}}$ is the vector of sorts $\Syn{A}_1, \dots, \Syn{A}_n$.
To ensure that ${\Indep}_{\Sh{\vec{\Syn{A}}}, \Sh{\vec{\Syn{B}}} | \Sh{\vec{\Syn{C}}}}$ is well defined, 
we require that every sort $\Syn{A}$ is interpreted as a sheaf 
$\Sh{\Syn{A}}$ with supports.

\begin{figure}
\begin{gather}
\vec{\Syn{x}} \sIndep \vec{\Syn{y}} \Cond \vec{\Syn{z}}
~ \to ~ \pi(\vec{\Syn{x}}) \sIndep \pi'(\vec{\Syn{y}}) \Cond \pi''(\vec{\Syn{z}})
\label{indep:P}
\\
\vec{\Syn{x}} \sIndep \vec{\Syn{y}} \Cond \vec{\Syn{y}}
\label{indep:A}
\\
\vec{\Syn{x}} \sIndep \vec{\Syn{y}} \Cond \vec{\Syn{z}} ~ \to ~ 
\vec{\Syn{y}} \sIndep \vec{\Syn{x}} \Cond \vec{\Syn{z}}
\label{indep:B}
\\
\vec{\Syn{x}} \sIndep \vec{\Syn{y}},\vec{\Syn{z}} \Cond \vec{\Syn{w}} ~ \to ~ 
\vec{\Syn{x}} \sIndep \vec{\Syn{y}} \Cond  \vec{\Syn{w}}
\label{indep:W}
\\
\vec{\Syn{x}} \sIndep \vec{\Syn{y}},\vec{\Syn{z}} \Cond \vec{\Syn{w}} ~ \to ~ 
\vec{\Syn{x}} \sIndep \vec{\Syn{y}} \Cond \vec{\Syn{z}}, \vec{\Syn{w}}
\label{indep:C}
\\
\vec{\Syn{x}} \sIndep \vec{\Syn{y}} \Cond \vec{\Syn{z}}, \vec{\Syn{w}}
~ \wedge ~ \vec{\Syn{x}} \sIndep  \vec{\Syn{z}}  \Cond \vec{\Syn{w}}
~ \to ~ 
\vec{\Syn{x}} \sIndep \vec{\Syn{y}}, \vec{\Syn{z}}\Cond \vec{\Syn{w}}
\label{indep:D}
\\[1ex]
\exists \vec{\Syn{y}}.~( \vec{\Syn{y}}, \vec{\Syn{w}} \Equiv \vec{\Syn{x}}, \vec{\Syn{w}} ~ \wedge ~
 \vec{\Syn{y}} \sIndep \vec{\Syn{z}} \Cond  \vec{\Syn{w}})
 \label{indep:Z}
\end{gather}
\caption{Axioms for conditional independence}
\label{figure:independence-axioms}
\end{figure}

Figure~\ref{figure:independence-axioms} lists formulas valid in this semantics that we single out
as a suitable list of axioms for reasoning about conditional  independence.
Axiom~\eqref{indep:P} asserts that conditional independence is preserved under permutations within each of the three lists of variables involved. This axiom, together with axioms~\eqref{indep:A}--\eqref{indep:D} are all standard axioms for conditional independence, appearing in closely related forms in~\cite[Theorem 3.1 and Lemmas 4.1--4.3]{dawidA}, in \cite[Theorem~1]{spohn} and in the work of Pearl, Paz and Geiger~\cite{PP,GPP,GP} (in which only conditional independence statements of the restricted form
$\vec{\Syn{x}} \sIndep \vec{\Syn{y}} \Cond \vec{\Syn{z}}$ for three disjoint sets of variables $\vec{\Syn{x}}$, $\vec{\Syn{y}}$ and $\vec{\Syn{z}}$ are considered). The axioms appear more explicitly in their present form in Dawid's axioms for the notion of \emph{separoid}~\cite{dawidB}.
We leave the straightforward verification of the soundness of axioms~\eqref{indep:P}--\eqref{indep:W} to the reader. 
The soundness of axioms~\eqref{indep:C} and~\eqref{indep:D} is more technical. To avoid encumbering the main development with these technical proofs, they are given in Appendix~\ref{appendix:CD}.

Whereas axioms~\eqref{indep:P}--\eqref{indep:D} concern conditional independence in isolation,
axiom~\eqref{indep:Z} captures the interaction between conditional independence and atomic equivalence.
Axiom~\eqref{indep:Z} makes essential use of the existential quantifier of atomic sheaf logic to
capture a key first-order  property:
given variables $\vec{\Syn{x}}, \vec{\Syn{z}}, \vec{\Syn{w}}$ one can always find variables $\vec{\Syn{y}}$
that are conditionally  independent from $\vec{\Syn{z}}$ given $\vec{\Syn{w}}$, but such that
$\vec{\Syn{y}},\vec{\Syn{w}}$ is jointly
equivalent to
$\vec{\Syn{x}},\vec{\Syn{w}}$.
 We call this property the \emph{independent existence principle}: independent  variables with any desired distribution always exist.
The validity of the principle 
of independent existence~\eqref{indep:Z} is established by 
Lemma~\ref{lemma:validity-Z} below.

\begin{lemma} 
\label{lemma:validity-Z}
Given $x \in \Sh{A}(X)$, $\,z \in \Sh{B}(X)$ and $w \in \Sh{C}(X)$,
there exist $p \colon  Y \rMap{} X$ and $y \in \Sh{A}(Y)$ such that
\begin{align}
\label{equation:blob}
((y,\, w \cdot p), \, (x \cdot p,\,  w \cdot p)) & \in \,{\Equiv_{\Sh{A} \times \Sh{C}}(Y)} \\
\label{equation:blobi}
(y, \, z \cdot p, \, w \cdot p) & \in \, {{\Indep}_{\Sh{A},\Sh{B} \Cond\Sh{C}}}(Y) \enspace .
\end{align}
\end{lemma}
\begin{proof}
Let $(Z, s,w')$ be support for $w$, and consider the independent pullback of $s: X \to Z$ along itself:
\[ 
\vcenter{\hbox{\includegraphics[scale=0.18]{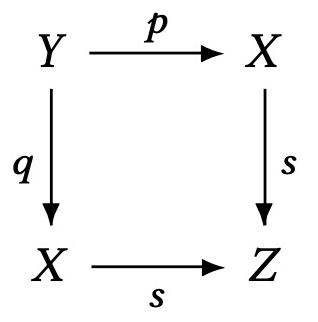}}}
\]
We have:
\[ w \cdot p = w' \cdot s \cdot p = w' \cdot s \cdot q = w \cdot q  \enspace .\]
 Define $y := x \cdot q$.
 
 By the independent pullback above, there is a unique map $t : Y \to Y$ such that $p \circ t = q$ and $q \circ t = p$.
 So:
  \[
 (y,\, w \cdot p)  =  (x \cdot q,\, w \cdot q) ~ = ~ (x \cdot p \circ t,\, w \circ p \circ t) \enspace .
 \]
 Thus the pair $\Id_Y, t \colon Y \rMap{} Y$ shows that~\eqref{equation:blob} holds. 
  
 For the independence statement, we have: 
 \[
\vcenter{\hbox{\includegraphics[scale=0.18]{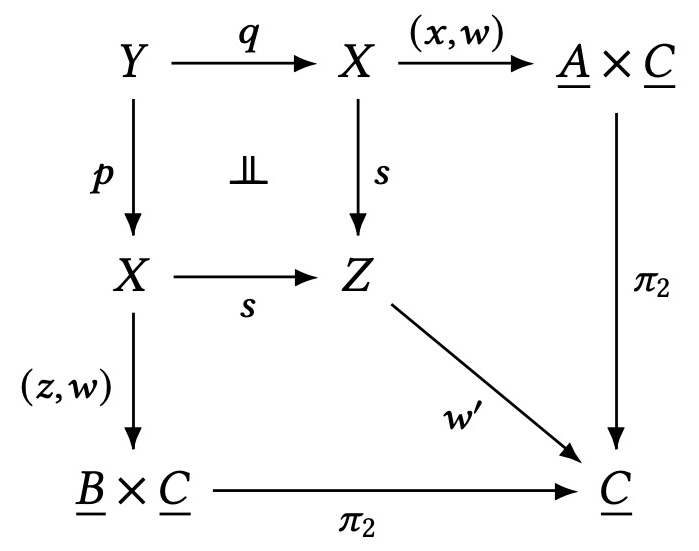}}}
  \]
 The first component of the top side is $x \cdot q  = y \in \Sh{A}(Y)$. The first component of the left side
 is $z \cdot p \in \Sh{B}(Y)$. Moreover, by Lemma~\ref{lemma:weak-supports},
 $(Z, \, s\circ p, \, w')$ is support for $w \cdot p$. Thus we indeed have \eqref{equation:blobi}.
\end{proof}

As an interesting consequence of the axioms, we prove that existence properties  are preserved under conditional independence, in the sense of the result below. This provides a first-order reasoning principle for conditional independence, whose scope potentially extends beyond atomic sheaf logic to more general contexts in which there is a conditional independence relation but no analogue of the
relation $\Equiv$ of atomic equivalence. 
\begin{theorem}[Existence preservation]
The schema below follows from the axioms in Figs.~\ref{figure:equivalence-axioms}  and~\ref{figure:independence-axioms}.
\begin{align*}
& (\exists \vec{\Syn{y}}.~ \Phi(\vec{\Syn{x}},\vec{\Syn{y}},\vec{\Syn{w}})) ~ \to ~
\forall \vec{\Syn{z}}. ~ (\, \vec{\Syn{x}} \sIndep \vec{\Syn{z}} \Cond \vec{\Syn{w}}
~ \to ~ \exists \vec{\Syn{y}}. ~(\,\vec{\Syn{x}}, \vec{\Syn{y}} \sIndep \vec{\Syn{z}} \Cond \vec{\Syn{w}} ~ \wedge ~
      \Phi(\vec{\Syn{x}},\vec{\Syn{y}},\vec{\Syn{w}})))
 \end{align*}
 Here we adopt the same convention as in the invariance principle. In $\Phi(\vec{\Syn{x}},\vec{\Syn{y}},\vec{\Syn{w}})$ every free variable in $\Phi$ has been substituted by one of the variables in $\vec{\Syn{x}},\vec{\Syn{y}},\vec{\Syn{w}}$.
\end{theorem}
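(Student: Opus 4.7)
The plan is to reduce the problem to a single, carefully chosen application of the independence principle~\eqref{indep:Z}, followed by routine manipulations of the conditional-independence axioms. From the two hypotheses, first fix a witness $\vec{\Syn{y}}_0$ so that $\Phi(\vec{\Syn{x}},\vec{\Syn{y}}_0,\vec{\Syn{w}})$ holds, and retain the assumption $\vec{\Syn{x}} \sIndep \vec{\Syn{z}} \Cond \vec{\Syn{w}}$.

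The decisive step is to apply~\eqref{indep:Z} not with conditioning set $\vec{\Syn{w}}$ but with the enlarged conditioning set $\vec{\Syn{x}},\vec{\Syn{w}}$, taking $\vec{\Syn{y}}_0$ in the role of the source vector and $\vec{\Syn{z}}$ unchanged. This yields some $\vec{\Syn{y}}'$ satisfying
\begin{equation*}
  \vec{\Syn{y}}',\vec{\Syn{x}},\vec{\Syn{w}} \;\Equiv\; \vec{\Syn{y}}_0,\vec{\Syn{x}},\vec{\Syn{w}}
  \qquad\text{and}\qquad
  \vec{\Syn{y}}' \sIndep \vec{\Syn{z}} \Cond \vec{\Syn{x}},\vec{\Syn{w}}.
\end{equation*}
Because this equivalence preserves $\vec{\Syn{x}}$ and $\vec{\Syn{w}}$ \emph{exactly} (not merely up to $\Equiv$), one application of the permutation axiom~\eqref{equiv:D} followed by the invariance principle~\eqref{equiv:F}, applied to the formula $\Phi$ (whose free variables lie among $\vec{\Syn{x}},\vec{\Syn{y}},\vec{\Syn{w}}$), transports $\Phi(\vec{\Syn{x}},\vec{\Syn{y}}_0,\vec{\Syn{w}})$ to $\Phi(\vec{\Syn{x}},\vec{\Syn{y}}',\vec{\Syn{w}})$.

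It then remains to derive the joint independence $\vec{\Syn{x}},\vec{\Syn{y}}' \sIndep \vec{\Syn{z}} \Cond \vec{\Syn{w}}$. Applying symmetry~\eqref{indep:B} to each of $\vec{\Syn{x}} \sIndep \vec{\Syn{z}} \Cond \vec{\Syn{w}}$ and $\vec{\Syn{y}}' \sIndep \vec{\Syn{z}} \Cond \vec{\Syn{x}},\vec{\Syn{w}}$ puts them in the exact form of the two premises of the contraction axiom~\eqref{indep:D}, yielding $\vec{\Syn{z}} \sIndep \vec{\Syn{y}}',\vec{\Syn{x}} \Cond \vec{\Syn{w}}$. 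A final use of symmetry~\eqref{indep:B} and permutation~\eqref{indep:P} gives $\vec{\Syn{x}},\vec{\Syn{y}}' \sIndep \vec{\Syn{z}} \Cond \vec{\Syn{w}}$, and $\vec{\Syn{y}} := \vec{\Syn{y}}'$ then witnesses the inner existential.

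The main obstacle to anticipate is the choice of conditioning set in the invocation of~\eqref{indep:Z}. A naive application conditioning only on $\vec{\Syn{w}}$ would produce a pair $\vec{\Syn{x}}',\vec{\Syn{y}}'$ jointly equivalent to $\vec{\Syn{x}},\vec{\Syn{y}}_0$ only up to the coarser relation $\Equiv$ on the $\vec{\Syn{x}}$ component; invariance would then only deliver $\Phi(\vec{\Syn{x}}',\vec{\Syn{y}}',\vec{\Syn{w}})$ with a fresh $\vec{\Syn{x}}'$ in place of the original, and the transfer principle~\eqref{equiv:G} does not repair this. Enlarging the conditioning set so that it contains $\vec{\Syn{x}}$ is precisely what forces $\vec{\Syn{x}}$ to be fixed on the nose in the resulting equivalence, and this observation is what makes the entire argument go through.
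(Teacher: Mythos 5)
Your proposal is correct and follows essentially the same route as the paper's proof: the key move in both is to invoke the independence principle~\eqref{indep:Z} with the enlarged conditioning set $\vec{\Syn{x}},\vec{\Syn{w}}$, then use invariance~\eqref{equiv:F} to transport $\Phi$ and the separoid axioms (symmetry and contraction) to assemble the joint independence. The paper leaves the final combination of \eqref{ep:C} and \eqref{ep:D} to ``the axioms for conditional independence''; your explicit derivation via \eqref{indep:B}, \eqref{indep:D} and \eqref{indep:P} fills in exactly that step.
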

\begin{proof}
Let  $\vec{\Syn{y}}$ be such that 
\begin{equation}
\label{ep:A}
\Phi(\vec{\Syn{x}},\vec{\Syn{y}},\vec{\Syn{w}})\, .
\end{equation}
Consider any $\vec{\Syn{z}}$. 
By the independent existence principle~\eqref{indep:Z}, there exists $\vec{\Syn{y}'}$ such that
\begin{equation}
\label{ep:B}
\vec{\Syn{y}'}, \vec{\Syn{x}}, \vec{\Syn{w}} \Equiv \vec{\Syn{y}}, \vec{\Syn{x}}, \vec{\Syn{w}}
\end{equation}
and 
\begin{equation}
\label{ep:C}
\vec{\Syn{y}'} \sIndep \vec{\Syn{z}} \Cond  \vec{\Syn{x}}, \vec{\Syn{w}} \, .
\end{equation}
Suppose 
\begin{equation}
\label{ep:D}
\vec{\Syn{x}} \sIndep \vec{\Syn{z}} \Cond \vec{\Syn{w}} \, .
\end{equation}
Then \eqref{ep:C} and \eqref{ep:D} combine to give 
$\vec{\Syn{x}},\vec{\Syn{y}'} \sIndep \vec{\Syn{z}} \Cond   \vec{\Syn{w}}$, by the axioms for conditional independence.

Further, \eqref{ep:A} and \eqref{ep:B} combine to give 
$\Phi(\vec{\Syn{x}},\vec{\Syn{y}'},\vec{\Syn{w}})$, by the invariance principle \eqref{equiv:F}.
 \end{proof}


\section{Probability sheaves}
\label{section:probability-sheaves}

In this long section, we present another instance of our axiomatic structure:  atomic sheaves over  \emph{standard Borel probability spaces}.
The idea is that such spaces take the role of  {sample spaces}, and random variables over such sample spaces collectively form an atomic sheaf. More precisely, for any standard Borel space $A$, we shall obtain a sheaf $\SRV(A)$ of all
$A$-valued random variables.  
For this aim, the standard-Borel assumption
 serves three purposes. Firstly, it is sufficiently general that it encompass both discrete and continuous probability.
Secondly, it 
provides a \emph{small} category of sample spaces to build atomic sheaves over.
Finally, it also provides useful technical machinery (such as \emph{disintegrations} of random variables), which would be unavailable in general if arbitrary probability and measurable spaces were used. This machinery is essential in showing that the category of sample spaces has independent pullback structure. 
When interpreted over the sheaves  of 
random variables $\SRV(A)$, 
atomic sheaf logic provides logical
principles governing  the relations of almost sure equality,
of equality in distribution and of conditional independence with its standard  probabilistic meaning, since these  three relations are respectively encapsulated as equality, atomic equivalence and atomic conditional independence in the logic.

In order to fully understand the technical development in the present section, it is necessary to have some background in probability and measure theory. Nevertheless, we try to also explain the main ideas informally, so help readers without the necessary background to follow the line of development at a high level. 

Standard Borel spaces will be the value spaces of random variables, and they will also be the structures over which we build sample spaces.
\begin{definition}[Standard Borel space]
\label{definition:sbs}
A \emph{standard Borel space} (SBS) is a measurable space $(A, \mathcal{B}_A)$ where $A$ is a Borel subset  of a Polish space $T$ (i.e., a complete separable metric space) and $\mathcal{B}_A$  is 
the $\sigma$-algebra $\{S \cap A \mid \text{$S \subseteq T$ is Borel}\}$.  A \emph{morphism} of standard Borel spaces from 
$(A ,\mathcal{B}_A)$ to $(B , \mathcal{B}_{B})$ is a function $f: A \to B$  that is \emph{measurable}, i.e.,
$f^{-1}(S) \in \mathcal{B}_A$ for all $S \in \mathcal{B}_{B}$.
\end{definition}
\noindent
When $(A ,\mathcal{B}_A)$ is a standard Borel space, we shall refer to the sets in $\mathcal{B}_A$ as the \emph{Borel subsets} of $A$, which is justified because $A$ can always itself be given a Polish topology in which  $\mathcal{B}_A$ is the Borel $\sigma$-algebra. As is well known,  the image $f(C)$ of a Borel subset $C \subseteq A$ under a measurable function  $f: A \to B$, where
$(B , \mathcal{B}_{B})$ is also standard Borel, need not itself be a Borel subset of $B$, but $f(C)$ is always an \emph{analytic} subset of $B$.

On the one hand, the collection of standard Borel spaces is very rich, as it incorporates most measurable spaces that arise naturally in mathematics. On the other, it is  also limited, since there are only two types of standard Borel spaces: (i) spaces $(A,\mathcal{P}(A))$, where $A$ is a \emph{countable} (possibly finite) set with its full powerset $\mathcal{P}(A)$ as the $\sigma$-algebra; and (ii)
spaces $(A, \mathcal{B}_A)$ that are isomorphic to the real numbers with the Borel $\sigma$-algebra $(\R, \mathcal{B})$. As a consequence of this classification, every standard Borel space has a measurable embedding into the interval $[0,1]$ with the Borel $\sigma$-algebra $\mathcal{B}_{[0,1]}$.

Standard Borel probability spaces will act as our sample spaces. As such, they will provide the objects of the category of sample spaces over which we shall consider atomic sheaves. 
\begin{definition}[Standard Borel probability space]
\label{definition:sbps}
 A \emph{standard Borel probability space} (SBPS) is a triple 
$(\Omega ,\mathcal{B}_\Omega, P_\Omega)$ where $(\Omega, \mathcal{B}_\Omega)$ is an SBS  and $P_\Omega \colon \mathcal{B}_\Omega \to [0,1]$ is a probability measure. A \emph{morphism} of standard Borel probability spaces from 
$(\Omega ,\mathcal{B}_\Omega, P_\Omega)$ to $(\Omega' , \mathcal{B}_{\Omega'}, P_{\Omega'})$ is an SBS morphism
$q$ from
$(\Omega, \mathcal{B}_\Omega)$ to $(\Omega' , \mathcal{B}_{\Omega'})$  that \emph{preserves measure}; i.e., $q_*(P_\Omega) = P_{\Omega'}$, where $q_*(P)$ is the \emph{pushforward measure} $S \mapsto P_{\Omega}(q^{-1}(S)) : \mathcal{B}_{\Omega'} \to [0,1]\,$. 
\end{definition}
As with standard Borel spaces, standard Borel probability spaces include the most common probability spaces that one  naturally encounters in mathematics. Any standard Borel probability space $(\Omega ,\mathcal{B}_\Omega, P_\Omega)$ can be decomposed uniquely into 
its \emph{discrete} and \emph{continuous} parts, moreover the continuous part has a very constrained form.
In detail, there exist unique Borel measures
$\delta, \mu : \mathcal{B}_\Omega \to [0,1]$ such that $P_\Omega = \delta + \mu$, the measure $\delta$ is \emph{discrete} (i.e.,
$\delta(B) = \sum_{x \in B} \delta(\{x\})$ for every $B \in \mathcal{B}_\Omega$), and 
either $\mu = 0$ or $(\Omega, \mathcal{B}_\Omega, \mu)$ is isomorphic, via measure-preserving functions,  to the interval $([0,c], \mathcal{B}_{[0,c]}, \lambda)$,
where $c := P_\Omega(\Omega)$,  with the
Borel $\sigma$-algebra $\mathcal{B}_{[0,c]}$ and the (Borel restriction of) Lebesgue measure $\lambda : \mathcal{B}_{[0,c]} \to [0,c]$.

In probability theory, a {random variable} is a measurable function from a probability space, called the {sample space}, to a measurable space, the {value space}. In this paper, we restrict ourselves to the case in which these spaces are both standard Borel. This is broad enough to incorporate both the discrete and continuous random variables arising most commonly in mathematics.
\begin{definition}[Random variable]
If $\Omega$ is an SBPS and $A$ is an SBS
(for notational convenience we here and henceforth abbreviate  $(A ,\mathcal{B}_A)$ as  $A$ and
$(\Omega ,\mathcal{B}_\Omega, P_\Omega)$ as $\Omega$),
a \emph{random variable} $X : \Omega \to A$ is a measurable function from
$(\Omega, \mathcal{B}_\Omega)$ to $(A,  \mathcal{B}_A)$. The SBPS $\Omega$ is called the \emph{sample space} of $X$, and
the SBS $A$ is called the \emph{value space}.
\end{definition}

We next define the three main relations between random variables we shall be interested in: 
\emph{almost-sure equality}, \emph{equidistribution} and \emph{conditional independence}.

In general, we say that a property of elements $\omega \in \Omega$ holds 
\emph{for $P_\Omega$-almost-all $\omega$} if there exists  $S \in \mathcal{B}_\Omega$ with $P_\Omega(S) = 1$
such that the property holds for every $\omega \in S$.
\begin{definition}[Almost-sure equality]
Two random variables $X,Y: \Omega \to A$ are \emph{almost surely equal} (notation $X \EqAE Y$) if
 $X(\omega) = Y(\omega)$ holds for $P_\Omega$-almost-all $\omega$.
 (Since $A$ is a standard Borel space, the set 
 $\{\omega \in \Omega \mid X(\omega) = Y(\omega)\}$ is measurable, and the above condition is 
 equivalent to asking that 
 $P_\Omega(\{\omega \in \Omega \mid X(\omega) = Y(\omega)\}) = 1$.)
 \end{definition}
 
The \emph{distribution} (or \emph{law}) of a random variable $X: \Omega \to A$ is the probability measure
$P_X : \mathcal{B}_A \to [0,1]$ defined as the pushforward $P_X := X_*(P_\Omega)$.
\begin{definition}[Equidistribution] Two random variables 
$X,Y : \Omega \to A$ are \emph{equidistributed} (notation $X \eqdist Y$) if 
$P_X = P_Y$. 
\end{definition}

An important consequence of only considering random variables  between standard Borel spaces is that random variables have \emph{disintegrations}. We state this property as Fact~\ref{fact:disintegrations} below. A proof of can be found in~\cite{DM}. We mention also that an equivalent statement to Fact~\ref{fact:disintegrations}  appears as Theorem~6 of~\cite{DDGK}.
\begin{fact}
\label{fact:disintegrations}
Every random variable  $X  \colon \Omega \to A$ has a \emph{disintegration}; that is, 
a Markov kernel $D_X : A \times \mathcal{B}_{\Omega} \to [0,1]$
\[(x, S)~ \mapsto~ P_{X^{-1}(x)}(S) ~: ~ A \times \mathcal{B}_{\Omega} \to [0,1] \]
satisfying the two properties below. 
\begin{description}
\item[(D1)] $P_{X^{-1}(x)} (X^{-1}(x)) = 1$ for $P_X$-almost all $x\in A$, and
\item[(D2)]
for every $S \in \mathcal{B}_{\Omega}$,
\[
  P_{\Omega} (S) ~ = ~ \int P_{X^{-1}(x)} (S)\;  \mathrm{d}P_X (x)\enspace .
\]
\end{description}
\end{fact}
\noindent
By the Markov kernel property, the function 
$S \mapsto P_{X^{-1}(x)}(S)$ is  a probability measure $P_{X^{-1}(x)} : \mathcal{B}_\Omega \to [0,1]$, for every
$x \in A$. By (D1),  $P_{X^{-1}(x)}$ can be thought of as a probability measure on the fibre set $X^{-1}(x) \in \mathcal{B}_\Omega$, which, by (D2), represents the conditional probability distribution on $\omega \in \Omega$ under the condition 
 $X(\omega) = x$. Properties (D1) and (D2) together characterise the mapping $x \mapsto P_{X^{-1}(x)}$ up to
 $P_\Omega$-almost-sure equality.


Exploiting disintegrations, we  give a definition of conditional independence that is a transparent generalisation of the elementary probabilistic definition of unconditional independence.

\begin{definition}[Conditional independence] 
\label{definition:conditional-independence-RV}
For random variables $X : \Omega \to A$, $~Y: \Omega \to B$ and $Z : \Omega \to C$,
we say that $X$ and $Y$ are \emph{conditionally independent given $Z$} (notation $X \Indep Y \Cond Z$) if,  for 
every $S \in \mathcal{B}_A$ and $T \in \mathcal{B}_B$, and for
$P_Z$-almost all $z \in C$, 
\begin{align*}
& P_{Z^{-1}(z)} (X^{-1}(S)  \cap Y^{-1}(T))  ~=  ~
P_{Z^{-1}(z)} (X^{-1}(S)) \cdot P_{Z^{-1}(z)} (Y^{-1}(T)) \enspace .
\end{align*}
\end{definition}

Our goal in this section is to recover the  three principal relations between random variables (almost-sure equality, equidistribution and conditional independence) 
as the relations of equality, atomic  equivalence and atomic conditional independence in a suitable atomic sheaf topos. 
In order to be able to construct sheaves of random variables, the category over which sheaves will be taken is a category of sample spaces. In fact we consider two such categories.

\begin{definition}[The categories $\sBP$ and $\sBP_0$]
We write  $\sBP$ for a small category of standard Borel probability spaces, that contains every such space up to isomorphism.
We write $\sBP_0$ for the quotient category, with the same objects, in which morphisms are equivalence classes $[p]$ of maps modulo
almost-sure equality $\EqAE$. 
\end{definition}

\noindent
It is an interesting fact that one can take the category of atomic sheaves over either category, $\sBP$ or $\sBP_0$, and in doing so one obtains equivalent categories of sheaves. Sheaves for the atomic topology on $\sBP$ were 
introduced in~\cite{simpsonB} as \emph{probability sheaves}.
In the present  paper, it will be convenient to  instead take  atomic sheaves over $\sBP_0$. 
Since the two categories of sheaves are equivalent, we shall continue to use the name \emph{probability sheaves}. The equivalence of the two categories will be shown in a separate paper. 

An important advantage of working with $\sBP_0$ is the property below, which fails for $\sBP$.
\begin{proposition}
\label{proposition:sBPZ-epi}
Every morphism in $\sBP_0$ is an epimorphism.
\end{proposition}
\begin{proof}
We first observe that every map $q : \Omega \rMap{} \Omega'$ in $\sBP$ is \emph{almost surjective} in the sense that,
for any $S \in \mathcal{B}_{\Omega}$ with $P_{\Omega}(S) = 1$, 
there exists $T \subseteq q(S)$ such that $T \in \mathcal{B}_{\Omega'}$ and $P_{\Omega'}(T) = 1$.
This holds because the image $q(S)$ is an analytic subset of $\Omega'$ with outer measure $1$. Since all analytic sets are measurable with respect to the completion of the Borel measure $P_{\Omega'}$, the image $q(S)$ also has inner measure $1$, meaning that there exists $T \subseteq q(\Omega)$ with the required properties.

To prove that every morphism in $\sBP_0$ is epimorphic, suppose we have $[q]: \Omega \rMap{} \Omega'$ and 
$[r],[r']: \Omega' \rMap{} \Omega''$ such that $[r] \circ [q] = [r'] \circ [q]$; i.e., $r \circ q \EqAE r' \circ q$. 
Let $S \subseteq \Omega$ be Borel such that  $P_\Omega(S) = 1$ and
$(r \circ q) \! \restriction_S = (r' \circ q) \! \restriction_S$. 
By the almost surjectivity of $q$, let $T \subseteq q(S)$ be such that $T \in \mathcal{B}_{\Omega'}$
and $P_{\Omega'}(T) = 1$. Then $r  \! \restriction_T = r'  \! \restriction_T$; i.e.,
$r \EqAE r'$. Equivalently $[r] = [r']$ as required.
\end{proof}

\begin{proposition}
\label{proposition:sBP-pairing}
The category $\sBP_0$ has pairings.
\end{proposition}
\begin{proof}
Given any  span $\Omega_Y \lMap{[p]}  \Omega_X  \rMap{[q]} \Omega_Z$ in $\sBP_0$,
its pairing is given by $(\Omega, \, [(p,q)],\, \pi_1, \pi_2)$, where
$
\Omega := (\Omega_Y \times \Omega_Z, \mathcal{B}_{\Omega_Y \times\Omega_Z}, P_{(p,q)})\,$,
using the product standard Borel space and the probability distribution of the paired random variables $p$ and $q$.
The properties of a pairing are easily verified, using Proposition~\ref{proposition:sBPZ-epi} for uniqueness.
\end{proof}

\begin{definition}[Independent square in $\sBP_0$]
\label{definition:indep-square-sbp}
Define a commuting square in $\sBP_0$
\begin{equation}
\label{equation:square-sbp}
\vcenter{\hbox{\includegraphics[scale=0.18]{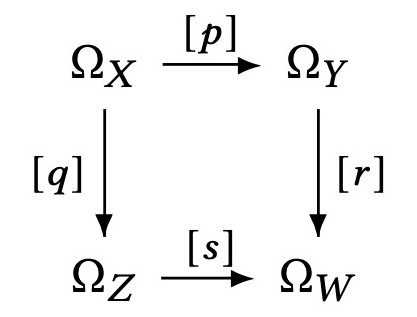}}}
\end{equation}
to be \emph{independent} if $p \Indep q \Cond r \circ p$, using conditional independence of random variables (Definition~\ref{definition:conditional-independence-RV}). 
\end{definition}

\begin{proposition}
\label{proposition:sBP-indep-pullbacks}
Definition~\ref{definition:indep-square-sbp} endows $\sBP_0$ with independent pullback structure satisfying the descent property. 
\end{proposition}

The proof of Proposition~\ref{proposition:sBP-indep-pullbacks}, which is intricate, can be found in Appendix~\ref{appendix:sbp}. In the present section, we content ourselves with exhibiting the construction
needed to complete any cospan
$\Omega_Y   \rMap{[r]}   \Omega_W  \lMap{[s]}  \Omega_Z$ 
to an independent pullback.
Using the disintegrations for $r$ and $s$, we endow  
the standard Borel product $(\Omega_Y \times \Omega_Z, \mathcal{B}_{\Omega_Y \times \Omega_Z})$
with the probability measure $P$ defined as:
\begin{equation}
\label{equation:pullback-measure}
  U ~  \mapsto ~ \int (P_{r^{-1}(\omega)} \otimes 
      P_{s^{-1}(\omega)}) (U) ~ \mathrm{d}P_{\Omega_W}(\omega)\, ,
\end{equation}
where $P_{r^{-1}(\omega)} \otimes 
      P_{s^{-1}(\omega)}$ is the product probability measure. Then 
\[
(\Omega_Y \times \Omega_Z, \mathcal{B}_{\Omega_Y \times\Omega_Z}, P)
\]
together with the two projections, which are measure preserving, gives the required independent pullback.
We write the resulting independent pullback square as
\[
\vcenter{\hbox{\includegraphics[scale=0.18]{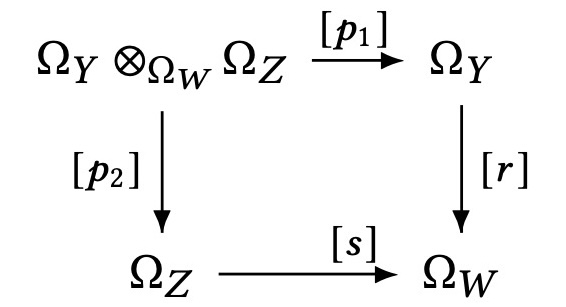}}}
\]

In combination, Propositions~\ref{proposition:sBPZ-epi}, \ref{proposition:sBP-pairing} and~\ref{proposition:sBP-indep-pullbacks} show that the category $\sBP_0$ has the requisite structure (Definition~\ref{definition:requisite}).

We next define the anticipated sheaves of random variables, first by defining them as  presheaves, 
and then subsequently verifying the atomic sheaf property.
\begin{definition}[Presheaf of random variables $\SRV(A)$]
Let $A$ be a standard Borel space.
Define a presheaf
$\SRV(A) \in \Psh(\sBP_0)$  of $A$=valued random variables (modulo $\EqAE$) by:
\begin{itemize}
\item $\SRV(A)(\Omega) :=$ equivalence classes of random variables $X : \Omega \to A$  modulo $\EqAE$. 

\item For $[X] \in \SRV(A)(\Omega)$ and $[q] \colon \Omega' \to \Omega$, define
$[X] \cdot [q ]:= [X \circ q]$.
\end{itemize}
\end{definition}

\noindent
We remark that a similar definition can be used to define a presheaf of $A$-valued random variables
modulo $\EqAE$ over the base category $\sBP$. In the case that $\sBP$ is used as the base category, one can also define 
an alternative presheaf of random variables, in which random variables are not quotiented modulo $\EqAE$, an option which is not available when $\sBP_0$ is used as the base category. The $\sBP$ -presheaf of unquotiented $A$-valued random variables is not, however, an atomic sheaf. In contrast, irrespective of the choice of base category, $\sBP$ or $\sBP_0$, the presheaf of random variables modulo $\EqAE$
 does form  a sheaf. We prove this in the case  of our chosen base category, $\sBP_0$.

\begin{proposition}
For any standard Borel space $A$, it holds that $\SRV(A)$ is an atomic sheaf.
\end{proposition}
\begin{proof}
Suppose $[Y] \in \SRV(A)(\Omega')$ is $[q]$-invariant where $\Omega' \rMap{q} \Omega$.
is a map in  $\sBP_0$. Consider the independent pullback square
\[
\vcenter{\hbox{\includegraphics[scale=0.18]{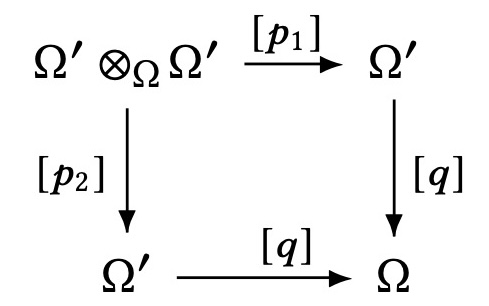}}}
\]
By $[q]$-invariance, $[Y] \cdot [p_1] = [Y] \cdot [p_2]$, i.e., $Y \circ p_1 \EqAE Y \circ p_2$.
That is, the measure of \[U ~ := ~  \{(\omega'_1,\omega'_2) \in \Omega' \times \Omega' \mid Y(\omega_1) = Y(\omega_2)\}\]
in $\Omega' \otimes_{\Omega} \Omega'$ is $1$. Equivalently, using~\eqref{equation:pullback-measure},
\[
\int (P_{q^{-1}(\omega)} \otimes 
      P_{q^{-1}(\omega)}) (U) ~ \mathrm{d}P_{\Omega}(\omega) ~ = ~ 1 .
\]
So, for $P_{\Omega}$-almost all $\omega \in \Omega$, we have
\[ (P_{q^{-1}(\omega)} \otimes 
      P_{q^{-1}(\omega)}) (U) ~ = ~ 1 \enspace .\]
 For any such $\omega$, by the definition of product measure,
 \[
 \int \int \, \One_{U}(\omega'_1,\omega'_2) \, \mathrm{d}P_{q^{-1}(\omega)}(\omega'_1)\, \mathrm{d}P_{q^{-1}(\omega)}(\omega'_2) ~ = ~ 1 \enspace  \enspace ,
 \]
 where $\One_U$ is the indicator function for the set $U$.
 So for $P_{q^{-1}(\omega)}$-almost all $\omega'_1$ and $P_{q^{-1}(\omega)}$-almost all $\omega'_2$, we have $(\omega'_1,\omega'_2) \in U$, i.e., $Y(\omega'_1) = Y(\omega'_2)$. 
 By arguing using the decomposability property of $P_{q^{-1}(\omega)}$ discussed beneath Definition~\ref{definition:sbps},
 it follows there exists a Borel subset $C_\omega \subseteq \Omega'$ with 
 $P_{q^{-1}(\omega)}(C_\omega) = 1$
 such that  $Y$ is constant on $C_\omega$.
 By the first property of disintegrations, $P_{q^{-1}(\omega)}(q^{-1}(\omega)) = 1$.
 Defining $D_\omega := C_\omega \cap q^{-1}(\omega)$, it holds that 
 $P_{q^{-1}(\omega)}(D_\omega) = 1$, the function $q$ has constant value $\omega$ on $D_\omega$, and
 $Y$ is also constant on $D_\omega$. Let $d_\omega$ be the constant value of $Y$ on $D_\omega$. Note that we have obtained such  $d_\omega$ and $D_\omega$, for $P_\Omega$-almost-all $\omega$.
 
Next we show that there exists a measurable function $X : \Omega \to A$ such that $X(\omega) = d_\omega$, for 
$P_\Omega$-almost all $\omega$. We first show this in the special case that $A \subseteq \R$ is a closed bounded interval,
so all $A$-valued random variables are integrable with their integrals taking values in $A$. Using integrability, we define
\begin{align}
\label{eqn:X-as-d}
X(\omega) ~ & : = ~ \int  Y(\omega') \,\mathrm{d}P_{q^{-1}(\omega)}(\omega') \enspace .
\end{align}
For $P_\Omega$-almost all $\omega$, we have 
\begin{equation}
\label{eqn:Y-as-d}
\int Y(\omega') \,\mathrm{d}P_{q^{-1}(\omega)}(\omega')  ~ = ~ d_\omega \enspace ,
\end{equation}
because $Y(\omega') = d_\omega$, for $P_{q^{-1}(\omega)}$-almost all $\omega' \in D_\omega$.
So we indeed have the required measurable function $X$ in the case of a closed bounded interval $A$.
In the case of an arbitrary standard Borel space $A$, one takes some measurable embedding of $A$ into $[0,1]$
(see the discussion after Definition~\ref{definition:sbs}), and then the definition of $X$ given above can be used to obtain a measurable function  $\Omega \to [0,1]$ that lands with probability $1$ in the 
image of the embedding of $A$ in $[0,1]$, meaning that it restricts (modulo redefining it on a null set) to the required map $X : \Omega \to A$.

We next verify that  $X \circ q \EqAE Y : \Omega' \to A$. Consider the Borel set
$E := \{\omega' \in \Omega' \mid X(q(\omega')) = Y(\omega')\}$. 
We claim that, for $P_\Omega$-almost-every $\omega$, it holds that  $D_\omega \subseteq E$.
Indeed, for $P_\Omega$-almost-all $\omega$, we have that
$\omega' \in D_\omega$ implies both $q(\omega') = \omega$ and $Y(\omega') = d_\omega$,
hence $X(q(\omega')) = Y(\omega')$ follows, i.e., $\omega' \in E$. 
Because $D_\omega \subseteq E$, we have
\[
P_{q^{-1}(\omega)}( E) 
~ = ~ P_{q^{-1}(\omega)}(D_\omega) 
~ = 1 \enspace .
 \]
By the definition of disintegrations,
\begin{align*}
P_{\Omega'}(E) ~ & = ~ \int  
    P_{q^{-1}(\omega)}( E) ~ \mathrm{d}P_{\Omega}(\omega) ~ = ~ \int 1 \, \mathrm{d}P_{\Omega}(\omega)
    ~ = ~ 1 \enspace .
\end{align*}
So indeed $X \circ q \EqAE Y : \Omega' \to A$. That is, $[X] \cdot [q] = [Y]$. So
$[X]$ is a $[q]$-descendent of $[Y]$.

That $[X]$ is the unique $[q]$-descendent of $[Y]$ holds because $q$ is almost surjective, as in the proof of 
Proposition~\ref{proposition:sBPZ-epi}.
\end{proof}

\begin{corollary}
For any SBPS $\Omega$ the representable presheaf $\Yon\Omega$ is an atomic sheaf,.
\end{corollary}
\begin{proof}
For any SBPS $\Omega'$, we have that $(\Yon\Omega)(\Omega') \subseteq \SRV(\Omega)(\Omega')$; indeed it is the subset
of measure-preserving functions. It is then easily verified using Proposition~\ref{proposition:subsheaf} that
$\Yon\Omega$ is a subsheaf of  $\SRV(\Omega)$. In particular, $\Yon\Omega$ is a sheaf.
\end{proof}

We end this section by showing as promised that the three atomic forms of atomic  formula of our general atomic sheaf logic are, in the case that sorts are interpreted as sheaves of random variables,
correctly interpreted as the expected probabilistic relations between random variables. Firstly, that equality in the logic corresponds to almost sure equality of random variables is immediate from the definition of the sheaf $\SRV(A)$, in which random variables are explicitly identified modulo $\EqAE$. Secondly, Proposition~\ref{proposition:sbs-equivalence} below shows that atomic equivalence is interpreded as the equidistribution relation $\eqdist\,$.

\begin{proposition}
\label{proposition:sbs-equivalence}
For any SBS $A$, the atomic equivalence subsheaf 
$\Equiv_{\SRV(A)}\, \subseteq\,  \SRV(A) \!\times\! \SRV(A)$
 from Theorem~\ref{theorem:atomic-equiv}  
satisfies:
\[
\Equiv_{\SRV(A)}(\Omega)  = 
\{([X],[X']) \in (\SRV(A)\times \SRV(A))(\Omega) \mid X \eqdist X' \} \enspace .
\]
\end{proposition}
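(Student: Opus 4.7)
The plan is to prove the two inclusions separately, using the explicit description of $\Equiv_{\SRV(A)}$ from Theorem~\ref{theorem:atomic-equiv}. Unfolding that definition for $\SRV(A)$, we have $([X],[X']) \in \Equiv_{\SRV(A)}(\Omega)$ iff there exist an SBPS $Z$ and morphisms $[u],[u'] : Z \rMap{} \Omega$ in $\sBP_0$ such that $X \circ u \EqAE X' \circ u'$.

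For the forward inclusion, I would take $[u],[u']$ witnessing atomic equivalence and push forward the measure $P_Z$ along each leg. Since $u$ and $u'$ are morphisms in $\sBP_0$ they are measure-preserving, so $u[P_Z]=P_\Omega$ and $u'[P_Z]=P_\Omega$, whence $P_{X\circ u} = X[u[P_Z]] = X[P_\Omega] = P_X$ and similarly $P_{X'\circ u'}=P_{X'}$. The assumption $X \circ u \EqAE X' \circ u'$ then gives $P_X = P_{X\circ u} = P_{X'\circ u'} = P_{X'}$, i.e.\ $X \eqdist X'$.

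For the backward inclusion, suppose $X \eqdist X'$, and write $\mu := P_X = P_{X'}$. The key idea is to endow the standard Borel space $A$ with $\mu$, obtaining an SBPS $(A,\mathcal{B}_A,\mu)$; because $\sBP$ contains every SBPS up to isomorphism, we may assume this object lies in $\sBP_0$. With respect to this measure, both $X$ and $X'$ are measure-preserving functions $\Omega \to (A,\mu)$, hence $[X],[X'] : \Omega \rMap{} (A,\mu)$ are morphisms in $\sBP_0$. I would then apply Proposition~\ref{proposition:sBP-indep-pullbacks} to the cospan $\Omega \rMap{[X]} (A,\mu) \lMap{[X']} \Omega$ to obtain an independent pullback completion by some SBPS $Z$ with measure-preserving projections $[u],[u'] : Z \rMap{} \Omega$ satisfying $[X] \circ [u] = [X'] \circ [u']$. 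Unfolding this equation in $\sBP_0$ yields $X \circ u \EqAE X' \circ u'$, which is exactly the witness needed for $([X],[X']) \in \Equiv_{\SRV(A)}(\Omega)$.

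The only real subtlety is the first move in the backward direction: recognising that the common law $\mu$ on $A$ can be used to make $X$ and $X'$ into parallel morphisms of SBPSs, so that the independent pullback machinery developed in Proposition~\ref{proposition:sBP-indep-pullbacks} applies. Once this is in place, the argument is essentially a direct appeal to the fact that the independent pullback of $X,X'$ over $(A,\mu)$ produces a single sample space on which almost-sure equality of $X \circ u$ and $X' \circ u'$ is automatic. No further calculation (e.g.\ with disintegrations) is needed beyond what is already packaged into the existence of independent pullbacks for $\sBP_0$.
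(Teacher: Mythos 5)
Your proof is correct and follows exactly the intended argument: it is the direct analogue of the paper's proof of Proposition~\ref{proposition:equiv-NEls} for $\Sur$, with the pushforward law $\mu = P_X = P_{X'}$ on $A$ playing the role of the image $X(\Omega)$, so that $[X],[X']$ become a cospan in $\sBP_0$ to which coconfluence (via the independent-pullback completion of Proposition~\ref{proposition:sBP-indep-pullbacks}) applies. Both inclusions are handled correctly, and no further calculation is needed.
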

\begin{proof}
Consider any $[X],[X'] \in \SRV(A)(\Omega)$.

Suppose we have $[u],[u'] \colon \Omega' \rMap{} \Omega$ with
$[X] \cdot [u] = [X'] \cdot [u']$, i.e., $X \circ u  \EqAE X' \circ u'$. Then $(X \circ u)_*(P_{\Omega'}) = (X' \circ u')_*(P_{\Omega'})$.
Whence 
\[
X_* (P_\Omega) ~ = ~ X_*(u_*(P_{\Omega'})) ~ = ~ X'_*(u'_*(P_{\Omega'}))  ~ = ~ X'_* (P_\Omega) \enspace ,
\]
which shows $X \eqdist X'$.

Conversely, suppose $X \eqdist X'$; i.e., $X_*(P_\Omega)  = X'_*(P_\Omega)$. We write $\Omega_A$ for the
SBP space given by $A$ together with the probability measure
$P_S := X_*(P_\Omega)$. With this probability measure, the functions $X \colon \Omega \rMap{} \Omega_A$ and $X' \colon \Omega \rMap{} \Omega_A$ are morphisms in $\sBP$. By coconfluence, there exist $p,q \colon \Omega' \rMap{} \Omega$ such that $X \circ p \EqAE X' \circ q$, which implies  $[X] \cdot [p] = [X'] \cdot [q]$. So indeed $([X],[X']) \in\,  {{\Equiv_{\SRV(A)}}(\Omega)}$.
\end{proof}

The remaining form of atomic formula in our logic is conditional independence. Proposition~\ref{proposition:sbps-ci} below
shows that atomic conditional independence is indeed interpreted as the probabilistic relation of conditional independence
(Definition~\ref{definition:conditional-independence-RV}). Before this, in order to be able to make sense of the relation of atomic conditional independence, we need to verify that the sheaves $\SRV(A)$ have supports (Definition~\ref{definition:supports}).
\begin{proposition}
For any standard Borel space $A$, it holds that $\SRV(A)$ has supports.
\end{proposition}
\begin{proof}
Consider any $[X] \in \SRV(A)(\Omega)$. Define a standard Borel probability space by
\[\Omega_X ~ := ~  \text{$A$ with probability measure $P_{\Omega_X} := X_*(P_\Omega)$}\,.\]
It is easily checked that 
$(\Omega_X, \,[X], \, [x\mapsto x])$ is a support for $[X]$, using the almost surjectivity of 
$[X]: \Omega \rMap{} \Omega_X$, as in the proof of Proposition~\ref{proposition:sBPZ-epi}, for uniqueness.
\end{proof}

\begin{proposition}
\label{proposition:sbps-ci}
For any SBSs $A,B,C$, the atomic conditional independence subsheaf 
\[{\Indep}_{\SRV(A),\SRV(B)|\SRV(C) }\, \subseteq\,  \SRV(A) \!\times\! \SRV(B)  \!\times\! \SRV(C )\]
from  Theorem~\ref{theorem:general-conditional-independence}  satisfies:
\begin{align*}
& {\Indep}_{\SRV(A),\SRV(B)|\SRV(C) }(\Omega) ~  = ~ 
\{([X],[Y],[Z]) \in (\SRV(A)\! \times \! \SRV(B) \! \times \! \SRV(C)) (\Omega)\mid X \Indep Y \Cond Z \} \enspace .
\end{align*}
\end{proposition}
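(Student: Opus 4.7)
The plan is to establish both inclusions directly, exploiting the characterization of independent squares in $\sBP_0$ from Proposition~\ref{proposition:sBP-indep-pullbacks} together with an explicit description of supports in $\SRV(C)$.

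The preparatory step is to identify supports. I would verify that the canonical factorization $(\Omega_Z, Z, \mathrm{id}_C)$, where $\Omega_Z := (C, \mathcal{B}_C, P_Z)$, is a support for $Z \in \SRV(C)(\Omega)$: the map $Z \colon \Omega \to \Omega_Z$ is a morphism in $\sBP_0$ by definition of the image law, and terminality follows because any other factorization $(\Omega', q, Z'')$ with $Z'' \circ q =_{\text{a.s.}} Z$ forces $Z''$ to be measure-preserving into $(C, P_Z)$ and yields the unique mediating morphism $Z'' \colon \Omega' \to \Omega_Z$. The key consequence, used in both directions below, is that for \emph{any} support $(\Omega_Z, t, Z')$ of $Z$ one has $\sigma(t) = \sigma(Z)$ as sub-$\sigma$-algebras of $\mathcal{B}_\Omega$ modulo null sets: $Z' \circ t =_{\text{a.s.}} Z$ gives one inclusion, while applying terminality of $(\Omega_Z, t, Z')$ to the canonical factorization produces a mediating morphism $h \colon (C, P_Z) \to \Omega_Z$ with $h \circ Z =_{\text{a.s.}} t$, yielding the other.

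For $\supseteq$, given $X \Indep Y \Cond Z$ probabilistically, I would exhibit the hybrid diagram~\eqref{diagram:hybrid} by setting $\Omega_X := (A \times C, P_{(X,Z)})$, $\Omega_Y := (B \times C, P_{(Y,Z)})$, $\Omega_Z := (C, P_Z)$, with $p := (X,Z)$, $q := (Y,Z)$, $r := s := \pi_2$, $X' := Y' := \pi_1$, $u' := v' := \pi_2$, and $Z' := \mathrm{id}_C$. All factorization identities hold definitionally, $r \circ p = Z = s \circ q$, and $(\Omega_Z, Z, Z')$ is the canonical support identified above. The only non-trivial check is that the square is independent: using Proposition~\ref{proposition:sBP-indep-pullbacks}, this amounts to verifying $(X,Z) \Indep (Y,Z) \Cond Z$, which is immediate from $X \Indep Y \Cond Z$ since conditioning on $Z = z$ makes the second coordinates deterministic constants.

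For $\subseteq$, given a hybrid diagram witnessing atomic conditional independence, the independent-square condition yields $p \Indep q \Cond (r \circ p)$ probabilistically. Since $X = X' \circ p$ is $\sigma(p)$-measurable and $Y = Y' \circ q$ is $\sigma(q)$-measurable, standard inheritance of conditional independence gives $X \Indep Y \Cond (r \circ p)$, and the preparatory lemma converts this to $X \Indep Y \Cond Z$ via $\sigma(r \circ p) = \sigma(Z)$. The main obstacle throughout is the support lemma: verifying terminality of the canonical factorization in $\sBP_0$ and making the $\sigma$-algebra identification precise modulo almost-sure equality. Once that is in place, both directions reduce to routine probabilistic manipulations.
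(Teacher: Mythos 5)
Your proposal is correct, and it is essentially the probabilistic transposition of the argument the paper gives for Proposition~\ref{proposition:indep-NEls} in $\ShAt(\Sur)$: the forward direction builds the hybrid diagram from the joint laws $P_{(X,Z)}$, $P_{(Y,Z)}$, $P_Z$ exactly as that proof uses the joint images, and the converse direction replaces the injectivity of the support map $Z'$ there by the identification $\sigma(r \circ p) = \sigma(Z)$ modulo null sets, which is the right analogue. The only points needing care are the ones you already flag --- terminality of the canonical factorisation $((C,\mathcal{B}_C,P_Z), [Z], [\mathrm{id}_C])$ and the fact that conditional independence in the sense of Definition~\ref{definition:conditional-independence-RV} depends only on $\sigma(Z)$ up to null sets (a standard monotone-class/Doob--Dynkin argument for standard Borel spaces) --- and both go through as you describe.
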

\begin{proof}
Suppose $[X] \in \SRV(A)(\Omega),\, [Y] \in \SRV(B)(\Omega)$ and $[Z] \in \SRV(C)(\Omega)$ are
such that $X \Indep Y \Cond Z$ according to
Definition~\ref{definition:conditional-independence-RV}. Define
\begin{align*}
\Omega_X ~ := ~  & \text{$A \times C$ with probability measure $P_{\Omega_X} := (X,Z)_*(P_\Omega)$}
\\
\Omega_Y ~ := ~  & \text{$B \times C$ with probability measure $P_{\Omega_Y} := (Y,Z)_*(P_\Omega)$}
\\
\Omega_Z ~ := ~ & \text{$C$ with probability measure $P_{\Omega_Z} := Z_*(P_\Omega)$}\,.
\end{align*}
Then the hybrid diagram below, shows that the triple $([X],[Y],[Z])$ belongs to the atomic conditional independence 
relation ${\Indep}_{\SRV(A),\SRV(B)|\SRV(C) }(\Omega)$.
\[
\vcenter{\hbox{\includegraphics[scale=0.18]{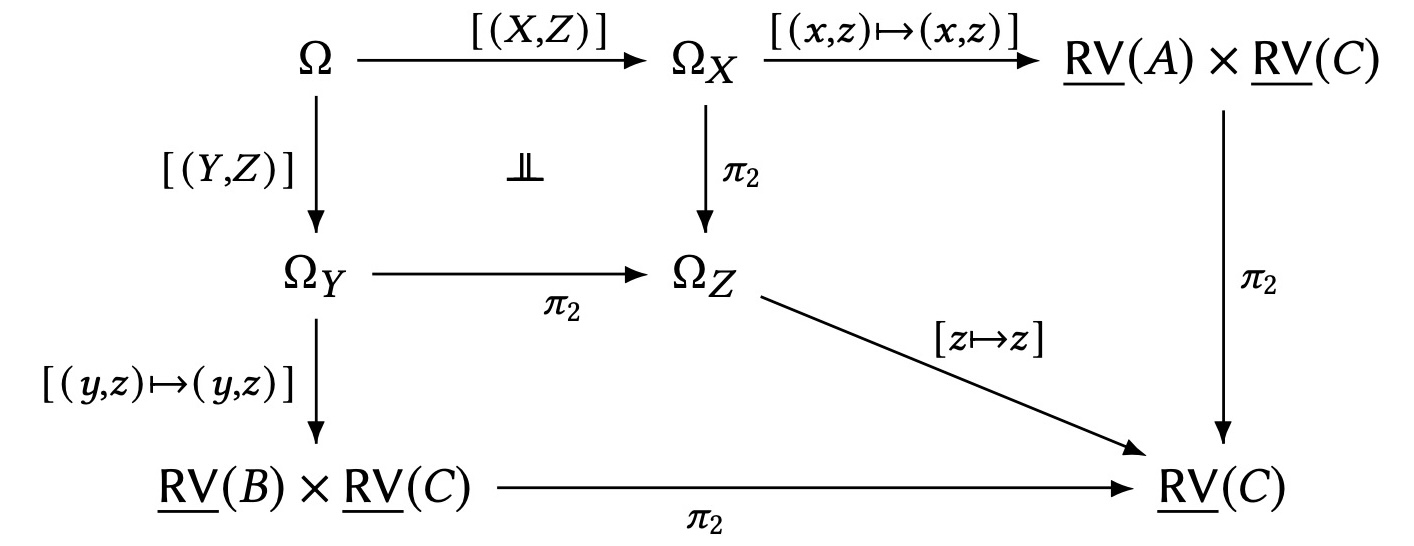}}}
\]
In this diagram, $(\Omega_Z, \,[Z], \, [z\mapsto z])$ is support for $[Z]$, by the definition of $\Omega_Z$, and the top-left square is independent, because $(X,Z) \Indep (Y,Z)  \Cond Z$ holds, which follows from $X \Indep Y \Cond Z$.

Conversely, suppose $([X],[Y],[Z]) \in {\Indep}_{\SRV(A),\SRV(B)|\SRV(C) }(\Omega)$. Defining $\Omega_X, \Omega_Y$ and $\Omega_Z$ as above, we have that
$(\Omega_X, \, [(X,Z)] , \, [(x,z) \mapsto (x,z)])$ is support for $[(X,Z)]$ and
$(\Omega_Y, \, [(Y,Z)] , \, [(y,z) \mapsto (y,z)])$ is support for $[(Y,Z)]$ 
 $(\Omega_Z, \, [Z], \, [z \mapsto z])$ is support for $[Z]$. So, by Lemma~\ref{lemma:use-supports}, these supports fit into the hybrid diagram above. Since the top-left square is independent, we have
 $(X,Z) \Indep (Y,Z)  \Cond Z$. From this, $X \Indep Y \Cond Z$ follows, as required.
\end{proof}


\section{The Schanuel topos}
\label{section:Schanuel}

We give a very condensed outline, without proofs, of one more example in which we have an atomic sheaf logic of equivalence and conditional independence: the Schanuel topos, which is equivalent to the category of nominal sets
of Gabbay and Pitts~\cite{GabP,pitts}.

Let $\I$ be (a small version of) the category whose objects are finite sets and whose morphisms are injective functions.
We consider the topos of atomic sheaves over the category $\Iop$. Since all maps in $\I$ are obviously monomorphic, all maps in $\Iop$ are epimorphic.
\begin{proposition}
\label{theorem:Iop-indep-pullbacks}
The category $\Iop$ carries independent pullback structure satisfying the descent property. 
and it has pairings.
\end{proposition}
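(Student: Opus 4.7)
The plan is to declare a commuting square in $\I$ of the shape $Z \hookrightarrow X \hookrightarrow W$ and $Z \hookrightarrow Y \hookrightarrow W$ (composites agreeing) to be \emph{independent} precisely when it is a pullback in $\Set$; equivalently, the images of $X$ and $Y$ inside $W$ meet exactly in the image of $Z$. The independent squares in $\Iop$ are then the opposites of these. The first step is to observe that for any cospan $X \hookleftarrow Z \hookrightarrow Y$ in $\I$, the finite set-theoretic pushout $X \sqcup_Z Y$ has injective legs and its canonical square is a pullback in $\Set$; dually this produces an independent square in $\Iop$. Given any further independent completion of the cospan by some $W$, the condition $\text{im}(X) \cap \text{im}(Y) = \text{im}(Z)$ inside $W$ is exactly what is needed for the induced map from the set-theoretic pushout to $W$ to be well-defined and injective in $\I$. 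This gives both the universal property of the independent pullback and axiom (IP5).

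Axioms (IP1)--(IP4) then reduce to standard facts about pullbacks in $\Set$: (IP1) and (IP2) hold because identity-bordered squares are always $\Set$-pullbacks; (IP3) is the pasting lemma for pullbacks; and (IP4) is the 2-out-of-3 half of the pasting lemma, applied to the underlying set-theoretic rectangles. The descent property is a further instance of the pasting lemma, and goes through cleanly because every map in $\I$ is monic so the relevant factorisations through a common object are uniquely determined.

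For pairings in $\Iop$, a pair factorisation of a span $Y \leftarrow X \rightarrow Z$ in $\Iop$ corresponds in $\I$ to a factorisation of the cospan $Y \hookrightarrow X \hookleftarrow Z$ through an intermediate $X' \hookrightarrow X$ containing the images of both $Y$ and $Z$. A terminal pair factorisation in $\Iop$ (initial in $\I$) is witnessed by the union of these two images inside $X$, which is always finite and inherits an injection into $X$.

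The main technical obstacle will be the careful bookkeeping of arrow directions through the duality $\I \leftrightarrow \Iop$, particularly in transposing the kite diagram for the descent property. Once that translation is carried out, each axiom verification is reduced to a standard fact about pullbacks in $\Set$ combined with the monicity of $\I$-maps.
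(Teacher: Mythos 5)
Your structure is exactly the paper's: independent squares in $\Iop$ are those whose underlying squares of injections are pullbacks in $\Set$, independent pullbacks are the pushouts in $\Set$ of the associated spans of injections, and the pairing of a span in $\Iop$ is the union of the two images inside the common codomain (the paper phrases this as the pushout of the pullback of the associated cospan in $\I$, which is the same set). Your verifications of (IP1), (IP2), (IP3), (IP5), the universal property of the canonical pushout completion, the descent property and the pairings are all sound.

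The gap is at (IP4), which is \emph{not} an instance of the pasting lemma for pullbacks. Passing to $\I$ reverses all arrows, so the $\Set$-pullbacks witnessing independence of $(A)$, $(B)$ and the rectangle $(AB)$ in \eqref{equation:AB} have their common base at the apex of the $\Iop$-rectangle; in that orientation the square adjacent to the base is $(A)$, so the two-out-of-three half of the pasting lemma yields ``$(AB)$ and $(A)$ independent imply $(B)$ independent'', which is not (IP4). Indeed the statement your argument would establish, namely that $(AB)$ and $(B)$ independent imply $(A)$ independent, is false in $\Iop$: write the rectangle as $A_0 \to B_0 \to C_0$ over $A_1 \to B_1 \to C_1$ in $\Iop$, identify each object with its image in $A_0$ and take all maps to be the evident inclusions with $A_0 = B_0 = A_1 = \{1,2\}$ and $C_0 = B_1 = C_1 = \{1\}$; then $(B)$ and $(AB)$ are independent (since $C_0 \cap B_1 = C_1$ inside $B_0$ and $C_0 \cap A_1 = C_1$ inside $A_0$) while $(A)$ is not (since $B_0 \cap A_1 = \{1,2\} \neq \{1\} = B_1$). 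What rescues (IP4) is precisely the half of its hypothesis that your argument never uses: $(B)$ is an independent \emph{pullback}, so its $\I$-square is a pushout in $\Set$, giving the covering condition $C_0 \cup B_1 = B_0$ in addition to $C_0 \cap B_1 = C_1$. With that in hand the verification is a one-line Boolean computation inside $A_0$: $B_0 \cap A_1 = (C_0 \cup B_1) \cap A_1 = (C_0 \cap A_1) \cup (B_1 \cap A_1) = C_1 \cup B_1 = B_1$, which is the independence of $(A)$. You should replace the appeal to pullback pasting at (IP4) by this (or an equivalent) use of the joint surjectivity of the legs of $(B)$.
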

\begin{proof}[Description of structure] 
Define a commuting square in $\Iop$ to be \emph{independent} if the associated square (with opposite orientation) 
of functions in $\I$ is a pullback
in $\I$ (or equivalently in $\Set$). A commuting square in $\Iop$ is then an independent pullback if and only if the associated square of functions in $\I$ is a pushout in $\Set$ (but not necessarily in $\I$). Every cospan in $\Iop$ completes to an independent pullback by taking the pushout in $\Set$ of the associated 
span of functions in $\I$. 
\end{proof}

\begin{proposition}
\label{theorem:Iop-pairing}
The category $\Iop$ has pairings.
\end{proposition}
\begin{proof}[Description of structure] 
A span in $\Iop$ gives rise to a cospan of functions in $\I$. The pairing in $\Iop$ is given by the pushout in $\Set$ of the pullback in $\I$ (or $\Set$) of this cospan of functions.
\end{proof}

A presheaf $P \in \Psh(\Iop)$ is just a covariant functor $P \colon \I \to \Set$. 
The description of independent squares above, means that Theorem~\ref{theorem:sheaf-nice},
in the case of $\sCat{C} = \Iop$, specialises to the well-known characterisation that a presheaf $P \in \Psh(\Iop)$ is an atomic sheaf 
if and only if the  covariant functor $P \colon \I \to \Set$ preserves pullbacks (see, e.g.,  \cite[A 2.1.11(h)]{johnstone}). This property enables the result below to be established 
by constructing supports in $\Iop$ as a multiple pullbacks in $\I$ over all representable factorisations, of 
which there are only finitely many.

\begin{proposition}
\label{proposition:schanuel-supports}
Every atomic sheaf in $\ShAt(\Iop)$ has supports.
\end{proposition}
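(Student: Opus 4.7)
The plan is to construct the support of any element $x \in P(X)$ as the intersection, taken inside $\I$, of all subobjects of $X$ through which $x$ factors. The hint in the paper already points in this direction, and the key ingredient is the characterisation from Theorem~\ref{theorem:sheaf-nice} combined with the description of independent squares in $\Iop$ given in the proof of Proposition~\ref{theorem:Iop-indep-pullbacks}. Together these identify atomic sheaves in $\ShAt(\Iop)$ with covariant functors $P \colon \I \to \Set$ that preserve pullbacks.

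First I would record the consequences of separation. Every map in $\I$ is a monomorphism, so every map in $\Iop$ is an epimorphism, and the separatedness of the atomic sheaf $P$ translates to the statement that $P$ sends each injection $i \colon Y \hookrightarrow X$ in $\I$ to an injective function $P(i) \colon P(Y) \to P(X)$. Consequently, for each subobject $i \colon Y \hookrightarrow X$ in $\I$ there is \emph{at most one} $y \in P(Y)$ with $P(i)(y) = x$. Hence representable factorisations of $x$, up to the (necessarily unique) isomorphisms in their preorder, are determined by the family
\[
\mathcal{S} \;:=\; \{\, Y \hookrightarrow X \mid x \in \mathrm{image}\,P(Y \hookrightarrow X) \,\}
\]
of subobjects of $X$ through which $x$ factors. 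Since $X$ is finite, $\mathcal{S}$ is finite and nonempty (it contains $\mathrm{id}_X$).

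The main step is to show that $\mathcal{S}$ is closed under pairwise intersection in $\I$. Given $Y_1, Y_2 \in \mathcal{S}$ with unique witnesses $y_1 \in P(Y_1)$ and $y_2 \in P(Y_2)$ mapping to $x$, form the intersection $Y_1 \cap Y_2$, which is the pullback of $Y_1 \hookrightarrow X \hookleftarrow Y_2$ in $\I$. Pullback preservation of $P$ makes $P(Y_1 \cap Y_2)$ the pullback of $P(Y_1) \to P(X) \leftarrow P(Y_2)$ in $\Set$, so the compatible pair $(y_1,y_2)$ lifts uniquely to $z \in P(Y_1 \cap Y_2)$ whose image in $P(X)$ is $x$; hence $Y_1 \cap Y_2 \in \mathcal{S}$. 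Iterating over the finitely many elements of $\mathcal{S}$ yields a smallest element $Y^* := \bigcap_{Y \in \mathcal{S}} Y$ of $\mathcal{S}$, together with its canonical witness $y^* \in P(Y^*)$.

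It remains to verify that $(Y^*, q^*, y^*)$, with $q^*$ the inclusion $Y^* \hookrightarrow X$ regarded as a map $X \to Y^*$ in $\Iop$, is terminal in the preorder of representable factorisations of $x$. For any representable factorisation $(Y, q, y)$, minimality of $Y^*$ supplies an inclusion $j \colon Y^* \hookrightarrow Y$ in $\I$, i.e.\ a morphism $r \colon Y \to Y^*$ in $\Iop$, and the triangle $r \circ q = q^*$ holds on the nose because the composite $Y^* \hookrightarrow Y \hookrightarrow X$ equals $Y^* \hookrightarrow X$ in $\I$. The identity $y^* \cdot r = y$ is then forced: $P(j)(y^*) = y^* \cdot r$ and $y$ are two elements of $P(Y)$ both sent to $x$ by the injective map $P(Y \hookrightarrow X)$, hence equal. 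Uniqueness of $r$ is automatic since $j$ is the only arrow in $\I$ commuting with the ambient inclusions into $X$. The only real obstacle I anticipate is the bookkeeping around the contravariant-versus-covariant viewpoints, but the conceptual content is compact: separation bounds the factorisations by subobjects of $X$, pullback preservation closes this family under intersections, and finiteness of $X$ delivers the minimum.
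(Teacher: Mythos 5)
Your proof is correct and follows exactly the route the paper indicates: it sketches the construction of the support as the (finite) multiple pullback in $\I$ over all representable factorisations, using the characterisation of atomic sheaves on $\Iop$ as pullback-preserving covariant functors $\I \to \Set$, which is precisely your intersection-of-subobjects argument together with the separatedness observation. Your write-up simply supplies the details (injectivity from separation, closure of $\mathcal{S}$ under binary intersections, finiteness, terminality) that the paper leaves implicit as ``well known''.
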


\noindent For a sheaf $\Sh{A}$ in $\ShAt(\Iop)$, the support of an element $x \in \Sh{A}(X)$ corresponds to a
 smallest subset $\mathsf{supp}(x) \subseteq X$ for which there exists $y \in \Sh{A}(\mathsf{supp}(x))$ such that 
$x = y \cdot i$, where $i: X \rMap{} \mathsf{supp}(x)$ in $\Iop$ is given by the inclusion function $\mathsf{supp}(x) \to X$.
Proposition~\ref{proposition:schanuel-supports} is well known.
For example, it plays a key role in Fiore's presentation of $\ShAt(\Iop)$ as a Kleisli category~\cite{menni,FM}. 
An analogous property is also
prominent in presentations of 
the equivalent category of nominal sets~\cite{GabP,pitts}.

\begin{proposition}
\label{proposition:schanuel-equivalence}
For any  $\Sh{A}$ in $\ShAt(\Iop)$, the atomic equivalence subsheaf 
$\Equiv_{\Sh{A}}\, \subseteq\,  \Sh{A} \!\times\! \Sh{A}$
 from Theorem~\ref{theorem:atomic-equiv}  
satisfies:
\[
\Equiv_{\Sh{A}}(X)  = 
\{(x,y) \in (\Sh{A} \times \Sh{A})(X) \mid  \exists X \rMap{i} X .\;\;\; y = x \cdot i\}  \enspace .
\]
\end{proposition}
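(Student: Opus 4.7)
The plan is to prove both inclusions of the displayed equality. The inclusion $\supseteq$ is immediate: given $i \colon X \rMap{} X$ in $\Iop$ with $y = x \cdot i$, the pair $u := i$, $u' := \Id_X$, both viewed as maps $X \rMap{} X$ in $\Iop$, satisfies $x \cdot u = y = y \cdot u'$, witnessing $(x,y) \in \Equiv_{\Sh{A}}(X)$ via Theorem~\ref{theorem:atomic-equiv}.

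For the inclusion $\subseteq$, assume $(x,y) \in \Equiv_{\Sh{A}}(X)$, so there are $Z$ and $u, u' \colon Z \rMap{} X$ in $\Iop$ with $x \cdot u = y \cdot u'$. Invoke Proposition~\ref{proposition:schanuel-supports} to pick supports $(S, s, x_0)$ for $x$ and $(T, t, y_0)$ for $y$; by the description preceding Proposition~\ref{proposition:schanuel-equivalence} we may take $S, T \subseteq X$ with $s, t$ the maps in $\Iop$ corresponding to the inclusions $S \hookrightarrow X$ and $T \hookrightarrow X$. By Lemma~\ref{lemma:weak-supports}, the triples $(S, s \circ u, x_0)$ and $(T, t \circ u', y_0)$ are both supports for the common element $x \cdot u = y \cdot u'$ of $\Sh{A}(Z)$. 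Since supports are terminal objects in the preorder of representable factorisations, they are uniquely isomorphic: there is an iso $\rho \colon T \rMap{} S$ in $\Iop$ satisfying $\rho \circ t \circ u' = s \circ u$ and $x_0 \cdot \rho = y_0$. Dualising $\rho$ back into $\I$ gives a bijection $\hat{\rho} \colon S \to T$, so in particular $|S| = |T|$.

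The remaining task is to extend $\hat{\rho}$ to a bijection of $X$. Since $|X \setminus S| = |X \setminus T|$, choose any bijection $\beta \colon X \setminus S \to X \setminus T$, and define the permutation $\pi \colon X \to X$ whose restriction to $S$ is $\hat{\rho}$ and to $X \setminus S$ is $\beta$; let $i \colon X \rMap{} X$ in $\Iop$ correspond to $\pi$. One then verifies $y = x \cdot i$ by writing $x \cdot i = x_0 \cdot (s \circ i)$ and $y = y_0 \cdot t = (x_0 \cdot \rho) \cdot t = x_0 \cdot (\rho \circ t)$, so that the required equality reduces to $s \circ i = \rho \circ t$ in $\Iop$. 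Under the contravariant dictionary between $\Iop$ and $\I$, this translates into the identity $\pi \circ j_S = j_T \circ \hat{\rho}$ of maps $S \to X$ in $\I$ (where $j_S, j_T$ denote the inclusions), which holds by construction of $\pi$.

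The genuine content of the argument is the combination of Lemma~\ref{lemma:weak-supports} with the uniqueness of supports, which forces $|S| = |T|$ and supplies the bijection $\hat{\rho}$; the main nuisance to guard against is the bookkeeping arising from the contravariance between $\I$ and $\Iop$ when composing the various maps and verifying the final identity.
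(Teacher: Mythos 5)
Your proof is correct. The paper states Proposition~\ref{proposition:schanuel-equivalence} without proof (it is not among the selected proofs in the appendix), but your argument is the natural one: the $\supseteq$ direction via the witnesses $u=i$, $u'=\Id_X$ is immediate, and for $\subseteq$ you correctly combine Proposition~\ref{proposition:schanuel-supports}, Lemma~\ref{lemma:weak-supports} (giving that $(S, s\circ u, x_0)$ and $(T, t\circ u', y_0)$ are both supports of the common element $x\cdot u = y\cdot u'$), and terminality of supports to extract the bijection $\hat{\rho}\colon S\to T$ with $x_0\cdot\rho = y_0$, after which extending $\hat{\rho}$ by any bijection $X\setminus S\to X\setminus T$ yields the required permutation; the contravariance bookkeeping in the final identity $s\circ i = \rho\circ t$ checks out.
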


\begin{proposition}
\label{proposition:schanuel-conditional-independence}
For any  $\Sh{A},\Sh{B},\Sh{C}$ in $\ShAt(\Iop)$, the atomic conditional independence subsheaf 
\[{\Indep}_{\Sh{A},\Sh{B}|\Sh{C}}\, \subseteq\,  \Sh{A} \!\times\! \Sh{B}  \!\times\! \Sh{C}\]
from Theorem~\ref{theorem:general-conditional-independence}  satisfies:
\begin{align*}
& {\Indep}_{\Sh{A},\Sh{B}|\Sh{C}}(X) ~  = ~ 
\{(x,y,z) \in (\Sh{A} \times \Sh{B}\times \Sh{C}) (X)\mid \mathsf{supp}(x) \cap \mathsf{supp}(y) \subseteq \mathsf{supp}(z)\ \} \enspace .
\end{align*}
\end{proposition}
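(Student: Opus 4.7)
The plan is to translate the hybrid diagram~\eqref{diagram:hybrid} defining atomic conditional independence into a concrete statement about subsets of the finite set $X$. A map $p \colon X \rMap{} X'$ in $\Iop$ corresponds to an injection $X' \hookrightarrow X$ in $\I$, so any such $X'$ may be identified with a subset of $X$. Under this identification, Proposition~\ref{theorem:Iop-indep-pullbacks} tells us that a commuting square in $\Iop$ with $X$ in the corner where the two maps originate and with further corners $X_x, X_y, X_z$ is independent if and only if $X_z = X_x \cap X_y$ as subsets of $X$. Moreover, the discussion preceding Proposition~\ref{proposition:schanuel-conditional-independence} shows that $\mathsf{supp}(a)$ is the smallest subset of $X$ through which $a \in \Sh{A}(X)$ factors; in particular $a$ factors through any subset $Y \subseteq X$ with $\mathsf{supp}(a) \subseteq Y$, by restricting the support factorisation along the $\Iop$-morphism $Y \rMap{} \mathsf{supp}(a)$ induced by the inclusion $\mathsf{supp}(a) \hookrightarrow Y$.

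For the forward direction, I would begin with a triple $(x,y,z) \in \Indep_{\Sh{A},\Sh{B}|\Sh{C}}(X)$ and unpack the witnessing data from~\eqref{diagram:hybrid}. Viewing $X_x, X_y, X_z$ as subsets of $X$ as above, the existence of $x'$ with $x' \cdot p = x$ forces $\mathsf{supp}(x) \subseteq X_x$ by minimality of the support, and similarly $\mathsf{supp}(y) \subseteq X_y$. The requirement that $(X_z, r \circ p, z')$ be a support for $z$ identifies $X_z$, under its inclusion into $X$, with $\mathsf{supp}(z)$. Combining this with $X_z = X_x \cap X_y$ yields $\mathsf{supp}(x) \cap \mathsf{supp}(y) \subseteq X_x \cap X_y = \mathsf{supp}(z)$.

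For the backward direction, assume $\mathsf{supp}(x) \cap \mathsf{supp}(y) \subseteq \mathsf{supp}(z)$. I would set $X_z := \mathsf{supp}(z)$, $X_x := \mathsf{supp}(x) \cup \mathsf{supp}(z)$ and $X_y := \mathsf{supp}(y) \cup \mathsf{supp}(z)$, and observe that the hypothesis gives
\[
X_x \cap X_y ~=~ (\mathsf{supp}(x) \cap \mathsf{supp}(y)) \cup \mathsf{supp}(z) ~=~ \mathsf{supp}(z) ~=~ X_z,
\]
so the associated square in $\Iop$ is independent. Because $\mathsf{supp}(x) \subseteq X_x$, the observation above yields $x' \in \Sh{A}(X_x)$ with $x' \cdot p = x$; similarly one obtains $y' \in \Sh{B}(X_y)$ with $y' \cdot q = y$, and $z' \in \Sh{C}(X_z)$ with $z = z' \cdot (r \circ p)$, and by construction $(X_z, r \circ p, z')$ is a support for $z$. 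Setting $u' := z' \cdot r$ and $v' := z' \cdot s$ completes the hybrid diagram, witnessing $(x,y,z) \in \Indep_{\Sh{A},\Sh{B}|\Sh{C}}(X)$.

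The main obstacle is careful bookkeeping of the duality between $\Iop$-morphisms and $\I$-injections, which governs the orientation of ``independent square'' as a pullback in $\Set$, together with correctly invoking the minimality and factorisation properties of supports. Once those translations are fixed, both directions reduce to the set-theoretic identity $(A \cup C) \cap (B \cup C) = (A \cap B) \cup C$ under the assumed inclusion.
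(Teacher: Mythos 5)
Your argument is correct, and since the paper states Proposition~\ref{proposition:schanuel-conditional-independence} without proof (Section~\ref{section:Schanuel} is only a condensed outline), it is exactly the evident intended argument: identify the objects $X_x, X_y, X_z$ of the witnessing hybrid diagram with subsets of $X$ via the corresponding $\I$-injections, use that independence of the square means $X_z = X_x \cap X_y$ and that the support condition pins $X_z$ to $\mathsf{supp}(z)$, and in the converse direction take $X_x = \mathsf{supp}(x) \cup \mathsf{supp}(z)$, $X_y = \mathsf{supp}(y) \cup \mathsf{supp}(z)$, $X_z = \mathsf{supp}(z)$. The only point worth making explicit is that the factorisation of $z$ through the minimal subset $\mathsf{supp}(z)$ is automatically terminal among representable factorisations (uniqueness of the factorising element follows from separatedness, since all maps in $\Iop$ are epimorphic), so $(X_z, r \circ p, z')$ is indeed a support in the sense of Definition~\ref{definition:supports}.
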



\section{Discussion and related work}
\label{section:related}

\subsection{Relationship with (multi)team semantics}
\label{subsection:relationship-multiteam}

Our main running example throughout the paper has been the category of atomic sheaves over the category $\Sur$, in which the 
interpretations of atomic equivalence and conditional independence, when applied to the
sheaves $\NEls(A)$ of nondeterministic variables, coincide with the multiteam interpretations of those relations 
from the \emph{(in)dependence logics} of~\cite{Vaananen,GV,DHKM}.
For our logic, we use  the canonical 
internal logic of an atomic sheaf topos, whose semantics is provided by the forcing relation
of Figure~\ref{figure:atomic-sheaf-semantics}, and whose underlying logic is ordinary classical logic.

In our route to atomic sheaf logic 
in Sections~\ref{section:multiteams}--\ref{section:atomic-sheaf-logic},
the use of multiteams 
 seems essential. Indeed, it is the presentation of multiteams as finite-fibred functions in Section~\ref{section:multiteams}
that forms the basis for the connection with the category $\Sur$, whence with atomic sheaves. This contrasts with the 
majority of work on (in)dependence logic, from \cite{Vaananen,GV} onwards, which is largely  based on teams
rather than on multiteams. It is accordingly worth observing, that it is possible to reformulate the atomic sheaf logic of 
Figure~\ref{figure:atomic-sheaf-semantics} directly in terms of teams. 
To see this, note that any finite team trivially gives rise to a canonical finite multiteam, in which every assignment  has multiplicity $1$. 
Conversely, the support of any finite multiteam is a team. Under the correspondence between $\mathcal{V}$-multiteams,
and $\mathcal{V}$-assignments of nondeterministic variables, discussed in Section~\ref{section:multiteams}, we can reformulate these two statements in the following way. Every finite $\mathcal{V}$-team gives rise to 
$\Sh{\rho} \colon \mathcal{V} \to ( \Omega  \to A)$ enjoying the \emph{team property}: 
for all $\omega,\omega' \in \Omega$, if 
$\Sh{\rho} (\Syn{x})(\omega) = \Sh{\rho}_S(\Syn{x})(\omega')$, for all $ \Syn{x} \in \mathcal{V}$,  
then $\omega = \omega'$. Moreover, for every $\mathcal{V}$-multiteam 
$\Sh{\rho'} \colon \mathcal{V} \to ( \Omega'  \to A)$ there exists a unique up to isomorphism $q : \Omega' \to \Omega$ and
$\Sh{\rho} \colon \mathcal{V} \to ( \Omega  \to A)$ such that $\Sh{\rho}$ satisfies the team property and
$\Sh{\rho'} = \Sh{\rho} \cdot q$. It thus follows from the sheaf property of forcing (Proposition~\ref{proposition:forcing-sheaf}) that
the behaviour of the relation $\Omega \Forces_{\Sh{\rho}} \Phi$ in $\ShAt(\Sur)$, for any formula $\Phi$, is determined
entirely by its behaviour on teams $\Sh{\rho}$. Moreover, it is easy to unwind the clauses in 
Figure~\ref{figure:atomic-sheaf-semantics} and to reformulate them directly in terms of ordinary teams qua sets of assignments.
Thus atomic sheaf logic over $\Sur$ could equivalently be presented in terms of teams rather than multiteams.

If one carries out such a reformulation in the case of conjunction and of the existential quantifier, one obtains the
standard team interpretation of the former~\cite{Vaananen}, and the \emph{lax} interpretation of the latter, which is 
often the preferred team interpretation~\cite{HK,GV}. The clauses for the other connectives and for the universal quantifier are different however. Whereas the clauses in Figure~\ref{figure:atomic-sheaf-semantics} validate the laws of classical logic, it is well known
that the standard team semantics of the other connectives and the universal quantifier leads to
some logically exotic behaviour. For example, disjunction is not an idempotent operation. 
Abramsky and V\" a\" an\" anen~\cite{AV} provide an illuminating explanation for such behaviour, by showing that 
the dependence logic connectives and quantifiers can be naturally understood as fitting into the framework of 
Pym and O'Hearn's  \emph{logic of bunched implications (BI)}~\cite{OHP,POHY}. We now review this perspective and then discuss how it might be adapted to atomic sheaf logic. 

The approach of~\cite{AV}  is based on Lawvere's notion of  \emph{hyperdoctrine}~\cite{lawvere,pittsCL}.
Recall that the contravariant poswerset functor $\mathsf{P}$ on sets, can be viewed as a functor $\mathsf{P} \colon \Op{\Set} \to \mathbf{Pos}$, where $\mathbf{Pos}$ is the category of partially ordered sets and monotone functions. Specifically,  $\mathsf{P}$  maps any set $X$ to its
set of subsets  partially  ordered by subset inclusion. The functor
$\mathsf{P} \colon \Op{\Set} \to \mathbf{Pos}$ is then a hyperdoctrine. 
  Propositional logic for propositions over a set $X$ is modelled by the boolean algebra structure on $\mathsf{P}(X)$. 
For any function $f \colon  X \to Y$, the \emph{reindexing function}
$\mathsf{P}(f) := f^{-1} \colon  \mathsf{P}(Y) \to  \mathsf{P}(X)$ preserves the boolean algebra structure.
The quantifiers $\exists
: \mathsf{P}(X \times Y) \to \mathsf{P}(X)$ and $\forall
: \mathsf{P}(X \times Y) \to \mathsf{P}(X)$, quantifying over a set $Y$, are modelled as left and right adjoints respectively to the monotone function (considered qua functor) $\pi_1^{-1} : \mathsf{P}(X) \to  \mathsf{P}(X \times Y)$, where $\pi_1 \colon X \times Y \to X$ is the projection map. 

The main construction in~\cite{AV}, adapts the above hyperdoctrine for classical logic
to team semantics, by composing $\mathsf{P} \colon \Op{\Set} \to \mathbf{Pos}$
with the functor $\mathcal{L}\colon \mathbf{Pos} \to \mathbf{Pos}$ given by the 
operation $\mathcal{L}$ that maps any partial order $B$ to its 
lattice $\mathcal{L}(B)$ of down-closed sets. The composite functor $\mathcal{L} \mathsf{P} \colon \Op{\Set} \to \mathbf{Pos}$
then has the following properties. For every set $X$, the fibre poset $\mathcal{L} \mathsf{P}(X)$ is, in a canonical way, a \emph{BI algebra}, that is an algebraic model of the \emph{logic of bunched implications} BI~\cite{OHP,POHY}. In the case 
$X = A^\mathcal{V}$, the elements of $\mathcal{L} \mathsf{P}(A^\mathcal{V})$ are precisely down-closed (in the subset ordering) sets of 
$A$-valued teams with variable set $\mathcal{V}$. Each connective of BI is modelled algebraically as a function of appropriate arity 
on $\mathcal{L} \mathsf{P}(A^\mathcal{V})$. For example, the  \emph{multiplicative conjunction} $\otimes$, is modelled as a certain canonically generated function
$\otimes : \mathcal{L} \mathsf{P}(A^\mathcal{V}) \times \mathcal{L} \mathsf{P}(A^\mathcal{V}) \to \mathcal{L} \mathsf{P}(A^\mathcal{V})$.
Writing $\Phi$ and $\Psi$ for elements of $\mathcal{L} \mathsf{P}(A^\mathcal{V})$ (which can be thought of as an abstract set of propositions), and writing $S \Forces \Phi$ to mean $S \in \Phi$, the function $\otimes$ can be characterised  by
\[
S \Forces \Phi \otimes \Psi ~ \Iff ~ \exists T, U, ~ S = T \cup U ~ \text{and}~ T \Forces \Phi ~ \text{and} ~ U \Forces \Psi \, .
\]
This is exactly the semantic clause for the \emph{disjunction} connective of team semantics.
The exotic behaviour of 
the disjunction of dependence logic is thus nicely explained 
as a manifestation of the expected behaviour of the multiplicative conjunction of BI, whose multiplicative connectives have a natural resource-sensitive interpretation. 
A further consequence of the hyperdoctrine construction in~\cite{AV} is that the embedding of dependence logic in BI  enriches the former with additional logical connectives, such as both additive (intuitionistic) and multiplicative implications. 
Lastly, the hyperdoctrine formulation of dependence logic provides an elegant explanation for the 
team semantics interpretation of the quantifiers $\exists$ and $\forall \colon  \mathcal{L} \mathsf{P}(A^{\mathcal{V}\uplus \{x\}}) \to 
\mathcal{L}\mathsf{P}(A^{\mathcal{V}})$, which are characterised in the desired way~\cite{lawvere,pittsCL} as respectively left and right adjoints  to
$\mathcal{L} \mathsf{P}(\rho \mapsto \rho|_{\mathcal{V}}) : \mathcal{L}\mathsf{P}(A^{\mathcal{V}}) \to 
\mathcal{L} \mathsf{P}(A^{\mathcal{V}\uplus \{x\}})$.

The above hyperdoctrine construction from~\cite{AV} works for the original dependence logic~\cite{Vaananen}, but not for 
independence logic~\cite{GV}, because teams satisfying independence atoms are not down-closed in the subset order. This means that the $\mathcal{L}$ functor cannot be used to interpret formulas involving independence. An alternative is to combine the contravariant powerset functor 
$\mathsf{P}$ with the covariant powerset functor $\mathsf{P}_{!}$ (with direct image as its functorial action). It turns out that if one considers the  composition in the order 
$\mathsf{P} \mathsf{P}_{!} \colon \Op{\Set} \to \mathbf{Pos}$, then the left and right adjoints to the monotone function
$\mathsf{P} \mathsf{P}_{!}(\rho \mapsto \rho|_{\mathcal{V}}) : \mathsf{P} \mathsf{P}_{!}(A^{\mathcal{V}}) \to 
\mathsf{P} \mathsf{P}_{!}(A^{\mathcal{V}\uplus \{x\}})$ correspond respectively to the existential and universal quantifier with 
(the team version of) the forcing clauses from Figure~\ref{figure:atomic-sheaf-semantics}. Further, the boolean algebra structure on 
$\mathsf{P} \mathsf{P}_{!}(A^{\mathcal{V}})$ corresponds to (the team version of) the forcing clauses 
for the propositional connectives in Figure~\ref{figure:atomic-sheaf-semantics}, and this structure is  preserved by
all \emph{reindexing maps} $\mathsf{P} \mathsf{P}_{!}(f)$. The hyperdoctrine $\mathsf{P} \mathsf{P}_{!} \colon \Op{\Set} \to \mathbf{Pos}$ thus recovers the team version of atomic sheaf logic
as in~Figure~\ref{figure:atomic-sheaf-semantics}. It would be interesting to investigate  this construction in more detail, for example to explore how independence and equivalence formulas interact with the hyperdoctrine formulation, and also  the extent to which the logic BI logic is relevant in this picture.  Both points are potentially subtle. The standard hyperdoctrine desideratum 
that logical structure should be 
preserved by reindexing maps  provides a constraint on which atomic primitives are admissible. Moreover, the 
relevance of BI logic is less \emph{a priori} apparent than in~\cite{AV}, because the switch in the order of composition ($\mathsf{P} \mathsf{P}_{!}$ has the covariant functor as the inner functor, whereas $\mathcal{L} \mathsf{P}$ has its covariant functor as the outer functor) means that the outermost functor is no longer given by a canonical BI-algebra construction. 

A different source of exotic behaviour in (in)dependence  logics concerns
interaction between the universal quantifier and (in)dependence atoms. 
One particularly striking example is provided by the sentence below.
\begin{equation}
\label{eqn:forall-example}
 \forall \Syn{x^A},  \forall\Syn{y^B}. \;\; (\Syn{x^A} \sIndep \Syn{y^B}) 
\end{equation}
According to the usual team semantics 
of the universal quantifier, the above sentence is valid. Nevertheless, one can easily exhibit example teams $S$ for which it is not the case that 
$S \Forces \Syn{x^A} \sIndep \Syn{y^B}$, and rightly so, because there would be little point in independence logic if independence were a universally valid relation. 
We view the validity of \eqref{eqn:forall-example} (and other examples like it) as showing that if one is to use (in)dependence logic
 as a basis for reasoning about (in)dependence properties then
 the associated rules of inference will have to be unusual.
 
 Nevertheless,  independence logics and their team semantics have been successfully applied in the direction of
reasoning about dependence and conditional independence. For example, Hannula and Kontinen axiomatise the valid implications involving
\emph{inclusion}  and \emph{embedded multivalued dependencies} in database theory in terms of
inclusion and
conditional independence formulas with their team semantics~\cite{HK}. 
An interesting observation about this work is that it takes place in the fragment 
of independence logic comprising conjunction and (lax) existential quantification as the only logical operators.
Since these are exactly the logical operators for which the semantic interpretations in independence logic and
atomic sheaf logic coincide, the same development can be imported verbatim into atomic sheaf logic
in $\ShAt(\Sur)$ extended with the inclusion relation (which indeed defines a subsheaf
of $\NEls(A) \times \NEls(A)$). One advantage of such a reformulation is that
the axiomatised rules of inference in~\cite{HK}
can be expressed as individual formulas, using the general implication connective of
atomic sheaf logic, rather than left as entailments.
For example, the rule of \emph{inclusion introduction}, which concerns the inclusion relation, has an obvious (derivable) analogue for the equivalence (equiextension) relation, namely: if one has already derived an equivalence
formula
$\vec{\Syn{x}} \Equiv \vec{\Syn{x'}}$
then one can infer the formula
$
\exists \Syn{y'^{\,A}}. ~  (\vec{\Syn{x}},\Syn{y^{\,A}} \Equiv \vec{\Syn{x'}},\Syn{y'^{\,A}})
$.
In atomic sheaf semantics, this rule can be formulated as an implication. Indeed, it is none other than  the 
\emph{transfer principle}~\eqref{equiv:G} from Figure~\ref{figure:equivalence-axioms}, valid in any 
atomic sheaf topos.
The same transfer principle can also be found in mainstream probability theory. The interpretation of~\eqref{equiv:G} in the category $\ShAt(\sBP_0)$
 of probability sheaves   is very close to the \emph{transfer theorem} of \cite[Theorem 5.10]{kallenberg}, and arguably captures the essence of that theorem in logical form. 
 
 The interpretation of atomic sheaf logic in $\ShAt(\sBP_0)$ also connects with a body of work on adapting team semantics to probability-based scenarios. For example, an $A$-valued  \emph{measure team} in~\cite{HPV} is a measurable map $\Omega \to (\mathcal{V} \to A)$, for some  probability space $\Omega$ and set of variables $\mathcal{V}$. This can  
 equivalently be presented as a map $\mathcal{V} \to (\Omega \to A)$, which  is almost the same thing as 
 a variable assignment in  atomic sheaf logic over $\sBP_0$, i.e., a mapping from variables to elements of $\SRV(A)(\Omega)$.
 There are however two key differences: 
 random variables in  $\SRV(A)$ are identified up to almost sure equality, and objects in $\sBP_0$
 are restricted to probability spaces $\Omega$ that are standard Borel.
 Although these differences may seem minor, they are crucial to the interpretation of atomic sheaf logic in $\ShAt(\sBP_0)$. For example, it is because of the restriction to standard Borel spaces that the category  $\ShAt(\sBP_0)$ is coconfluent.  The failure of coconfluence for general probability spaces makes it difficult to
 extend the measure-team semantics of atomic formulas in~\cite{HPV} to include the logical connectives and atoms of independence logic. In the literature, such extensions have   been given only for probabilistic teams based on discrete probability~\cite{DHKMV}.  It is worth remarking that discrete probability fits in equally well with the approach of the present paper. One can consider atomic sheaves over the category of finite probability spaces, or alternatively over the category of countable probability spaces, both of which are full subcategories of $\sBP_0$. Such examples further substantiate our thesis  that atomic sheaf categories provide a unifying framework  configurable to diverse settings for conditional independence.  It would be interesting to compare our approach with the semiring-based framework
 of~\cite{BHKPV}, which provides a different unifying approach to varieties of team semantics, which encompasses both ordinary teams and discrete probabilistic teams.

\subsection{Computer science applications}
\label{subsection:cs-applications}

In this section we outline possible computer science applications for atomic sheaf logic. Rather than trying to be comprehensive, we instead focus on a few illustrative examples, beginning with reasoning about probabilistic programs. 

An almost surely terminating imperative probabilistic program $C$ can be modelled as a probabilistic map between states, that is  a function $\Sem{C}_S : \mathsf{State} \to \mathcal{D}(\mathsf{State})$, where $\mathcal{D}(\mathsf{State})$ is the set of probability distributions over states. Alternatively, but equivalently, it can be viewed as a transformation
$\Sem{C}_T: \mathcal{D}(\mathsf{State}) \to \mathcal{D}(\mathsf{State})$ mapping a probability distribution on initial states to the induced probability distribution on final states~\cite{Kozen}. 
There is also a third related possibility. One can view the program as a transformation $\Sem{C}_R$ mapping an initial \emph{random state} $\Sigma : \Omega \to \mathsf{State}$, for some sample space $\Omega$, to a final random state $\mathrm{T}$~\cite{JS}. However,
because the program $C$ may make use of randomness not present in
$\Omega$, the sample space for $\mathrm{T}$ has to be, in general,  an \emph{extension} of $\Omega$, meaning that 
 $\mathrm{T} :  \Omega' \to \mathsf{State}$  for some suitable sample space $\Omega'$ equipped with a probability preserving map
$q: \Omega' \to \Omega$. While the idea of modelling programs as random-state transformers is very natural, some
 careful bookkeeping is required to deal with the change of sample space. Such bookkeeping can be avoided entirely if one uses the alternative approach of defining the random-state-transformer semantics in the atomic sheaf logic 
 of $\ShAt(\sBP_0)$. Under this approach $\Sem{C}_R$ is formulated as a relation
 $\Sem{C}_R \subseteq \SRV(\mathsf{State}) \times \SRV(\mathsf{State})$ satisfying:
 for any random initial state $\Sigma$ on which $C$ terminates, there exists a random final state
 $\mathrm{T}$ such that $\Sigma\Sem{C}_R\mathrm{T}$ and, for any
 random state $\mathrm{T}'$, it holds that $\Sigma\Sem{C}_R\mathrm{T}'$  implies $\Sigma,\mathrm{T} \Equiv \Sigma,\mathrm{T}'$.
  The key point here is that no sample spaces need to be specified, because, from the viewpoint of atomic sheaf logic, sample spaces are implicit, and the extension of sample spaces is  likewise taken care of implicitly by the semantics of the existential quantifier. 
  Not only is such an implicit-sample-space  style of manipulating random variables  intuitive, it also avoids the bookkeeping required when dealing with explicit sample space extensions. For example, in~\cite{JS}, a property called \emph{relative tightness} is identified as  useful property of probabilistic Hoare-triple-like specifications. Such a  specification $\{\Phi\} C \{\Psi\}$ asserts that, if the precondition $\Phi$ holds for a random initial state $\Sigma$, and if $C$ terminates from $\Sigma$, then the postcondition $\Psi$ holds for the induced random final state 
  $\mathrm{T}$. The property of relative tightness asserts that the probabilistic behaviour of the random state $\mathrm{T}$ on 
  the variables $\mathsf{FV}(\Psi)$ relevant to determining the truth of $\Psi$, depends only on the value of the initial state $\Sigma$ on $\mathsf{FV}(\Phi)$. This can be formulated in a simple way as the statement about conditional independence on the left below
\[
\mathrm{T}_{\mathsf{FV}(\Psi)} \sIndep \Sigma  \, \Cond \, \Sigma_{\mathsf{FV}(\Phi)}
\qquad \qquad
\mathrm{T}_{\mathsf{FV}(\Psi)} \sIndep \Sigma \circ q \, \Cond \, \Sigma_{\mathsf{FV}(\Phi)} \circ q \, ,
\]
where $\Sigma_{\mathsf{FV}(\Psi)}$ and $\mathrm{T}_{\mathsf{FV}(\Psi)}$ denote the initial and final random states restricted to the specified variable
sets. For contrast, we include on the right above the statement of relative tightness that appears in~\cite{JS}, which shows the need for bookkeeping (in this case, composition with $q$)
when the standard mathematical formulation of random variables with explicit sample spaces is used. For a  more involved example of the efficiency afforded by the implicit-sample-space approach of atomic sheaf logic, we consider how the while statement on the left below is 
approximated by iterating the conditional statement on the right.
\[
\mathsf{while}~B~\mathsf{do}~C \qquad \qquad \mathsf{if}~B~\mathsf{then}~C~\mathsf{else}~\mathsf{skip}
\]
Working within atomic sheaf logic, suppose the while statement terminates in random final state $\mathrm{T}$ from a random initial state $\Sigma$. Then defining $\Sigma_0 = \Sigma$ and letting $\Sigma_{n+1}$ be such that $\Sigma_n\,\Sem{\mathsf{if}~B~\mathsf{then}~C~\mathsf{else}~\mathsf{skip}}_R\,\Sigma_{n+1}$, we obtain a sequence 
$(\Sigma_n)_{n \geq 0}$ of random states that converges almost surely to the random state $T$. 
The resulting convergence property  $\Sigma_n \to \mathrm{T}$ is used in ongoing work extending~\cite{JS} 
to prove the correctness of a partial correctness rule for while loops in a probabilistic program logic.
The formulation of the same convergence statement with explicit sample states is unwieldy as it involves a sequence 
$\Omega_0 \xleftarrow{q_0} \Omega_1 \xleftarrow{q_1} \Omega_2 \xleftarrow{q_2} \cdots$ of sample space extensions for the random states $(\Sigma_n)_n$,
as well as a cone (in the category-theoretic sense) $(\Omega_n \xleftarrow{r_n} \Omega')_n$ for the sequence, where $\Omega'$ is the
sample space for $\mathrm{T}$. With this scaffolding in place, the convergence property can be stated as $\Sigma_n \circ r_n  \to \mathrm{T}$.


We have outlined above how atomic sheaf logic might be applied to formulate a
random-state-based operational  semantics for imperative  probabilistic programs.
Another potential application is to the assertion logics of Hoare-like program logics for probabilistic programs, in particular to 
\emph{probabilistic separation logic (PSL)}. 
PSL was first introduced in~\cite{BHL} as an approach to verifying probabilistic programs using  a version of the \emph{separating conjunction} of separation logic~\cite{OHRY,YOH} to reason about
probabilistic  independence. The modular style  of reasoning is supported by a version of the \emph{frame rule} of separation logic, which, in the case of probabilistic separation logic, allows  certain statements about probabilistic independence to be inferred.
The paper~\cite{BHL}  presents several applications to the verification of cryptographic protocols. Subsequent work has extended the approach to reason about {negative dependencies}~\cite{BGHT}, adapted it to a probabilistic functional language~\cite{LAS} and
incorporated conditional independence~\cite{BDHS,LAS}.
In all the aforementioned works, the assertion logic has been given as an instance of the \emph{logic of bunched implications (BI)} 
 with a Kripke-style semantics defined over a partially ordered \emph{resource monoid}~\cite{POHY}. This leads to an intuitionistic but not classical assertion logic. 
 It seems likely  that one can obtain a classical assertion logic, by replacing the  Kripke-style semantics of BI
 in a partially ordered resource monoid with a category-based semantics utilising the forcing clauses of 
atomic sheaf logic.\footnote{One version of such a classical assertion logic appears in~\cite{JS}. 
However, the very simple setting of abstract \emph{semantic assertions} with no explicit quantifiers in \emph{op.\ cit.}, enables the  category-theoretic genesis of the logic to be hidden. Its one remaining trace is the set of \emph{footprint variables}, which corresponds to the notion of support in the present paper.}

 Another connection with the logic of bunched implications comes from a fact that we have not developed in the present paper:
 every category $\sCat{C}$ with independent pullbacks and terminal object is symmetric monoidal, and its category
 $\ShAt(\sCat{C})$ of atomic sheaves carries, in addition to its cartesian closed structure,
 a second symmetric monoidal closed structure $\otimes_{\mathsf{Sh}}$ derived, using the methods of~\cite{day}, from the modoidal structure  of $\sCat{C}$. Categories with two such closed structures are category-theoretic models of BI~\cite{OHP}. In the case of $\ShAt(\sCat{C})$, the monoidal structure is furthermore \emph{affine}, hence it has projections $\Sh{A} \xleftarrow{} \Sh{A} \otimes_{\mathsf{Sh}} \Sh{B}
 \xrightarrow{} \Sh{B}$. In the case that $\Sh{A}$ and  $\Sh{B}$ have supports, then the projections are jointly monic and the resulting
 monomorphism
 \[ 
 \vcenter{\hbox{\includegraphics[scale=0.18]{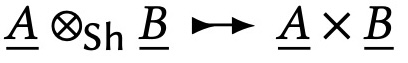}}}
 \]
 is in fact isomorphic to ${\Indep}_{\Sh{A},\Sh{B}} \subseteq \Sh{A} \times \Sh{B}$ given by the unconditional 
 version of~\eqref{equation:independence-subsheaf} (i.e.,  in which $C$ is a terminal object). That is, the unconditional independence relation of the present paper is recovered as an instance of  monoidal structure. 
 This connection will be elaborated in a follow-up paper, where also
 the relationship with the monoidal category setting of~\cite{simpsonA} will be discussed. Indeed the 
 notion of 
\emph{local independence structure with local independent products} in \emph{op.\ cit.}\ is 
equivalent to the independent pullback structure of Section~\ref{section:independent-pullbacks}, but with a much more involved axiomatisation in terms of fibred monoidal structure. 
The monoidal structure of $\sCat{C}$ provides another connection between the work of the present paper and 
varieties of separation logic including probabilistic separation logic,
as elaborated by Li \emph{et.\ al.}~\cite{LAST}.
In their work, the Day monoidal product on presheaves~\cite{day} is used to model the separation of state into independent segments,  whose probabilistic independence can be superimposed using a probability monad. As in our work, the notion of sheaf with supports, which was introduced independently in~\cite{LAST},  plays a crucial role.

The category $\mathbf{Nom}$ of nominal sets of Gabbay and Pitts~\cite{GabP,pitts} has found applications to reasoning about \emph{names} in computer science. The monograph~\cite{pitts} presents many 
examples of such applications, together with pointers to the literature.
One prominent application area is reasoning about abstract syntax for languages involving 
operators that bind variables. 
 
As mentioned in Section~\ref{section:Schanuel}, the category $\mathbf{Nom}$ is equivalent to the Schanuel topos, and so the relations of equivalence and conditional independence defined in Section~\ref{section:Schanuel} can 
be transferred to $\mathbf{Nom}$. 
In $\mathbf{Nom}$, the atomic equivalence relation of Proposition~\ref{proposition:schanuel-equivalence}
is  the equivalence relation of being in the same orbit. The special case of Proposition~\ref{proposition:schanuel-conditional-independence} corresponding to the relation of unconditional independence $x \Indep y$ is the relation of \emph{separatedness} ($\mathsf{supp}(x) \cap \mathsf{supp}(y) = \emptyset$), which is a central relation of interest in the literature on nominal sets.
The full conditional independence relation $x \Indep y \Cond z$ is  then a \emph{relative} notion of separatedness ($\mathsf{supp}(x) \cap \mathsf{supp}(y) \subseteq \mathsf{supp}(z)$), which 
first appeared in~\cite{simpsonA}.
We believe that the atomic logic of equivalence and conditional independence developed in the present paper may,
when transported to $\mathbf{Nom}$,
provide a convenient setting for reasoning  about syntax with variable binding. Let us illustrate this using  the untyped $\lambda$-calculus as an example.

There are several approaches to reasoning about syntax with variable binding. 
The first is to reason about \emph{raw terms}, in which, for example, $\lambda x.\, x$ is 
distinguished from $\lambda y.\, y$ because the variable name differs. This leads to an awkward definition of substitution $M[x :=N]$ that involves a non-canonical choice of bound-variable renaming, and  does not provide a good foundation on which to base structured reasoning principles. Some abitrariness can be avoided by imposing canonicity on bound-variables names, for example using
de Bruijn indices. However, syntactic manipulations then involve arithmetic operations on indices, which means that proofs of syntactic properties are entangled with arithmetic proofs that are an artefact of the choice of representation and have no intrinsic connection to the syntactic properties being proved. An alternative, favoured in many informal expositions of syntax, is to work with \emph{equivalence classes} of terms modulo $\alpha$-equivalence instead of raw terms. This leads to a canonical definition of substitution, but has 
two drawbacks that are particularly significant if one wishes to formalise proofs. The first drawback is that
 all term manipulations need to be proved compatible with the equivalence relation. Such proofs are often omitted  from informal expositions, but of course need to be given in a formal setting. The second drawback is  that
  one loses the structural-induction principle on terms that is derived from the inductive definition of raw terms..
 These two issues can be given a very elegant solution by defining syntax in the category 
 of nominal sets. There is a functor called \emph{name abstraction} that can be used to give a direct inductive definition of the nominal set of terms modulo $\alpha$-equivalence. This definition comes with an associated principle of structural induction for reasoning about
 terms modulo $\alpha$-equivalence, and a principle of structural recursion that allows one to define functions
 that are automatically well-defined on $\alpha$-equivalence classes.
 This approach 
 is  more fully described in the monograph~\cite{pitts}, which also contains pointers to the wider literature.
 It seems fair to say, however, that this approach does not solve all 
the practical difficulties of  reasoning about  binding operators. For example,
 the structural induction and recursion principles can be cumbersome to work with, due to their side conditions involving 
 concepts such as separatedness and freshness.

We propose here an alternative approach to reasoning about syntax with binding operators in the category of nominal sets. 
The idea is to reason directly about  raw terms rather than about $\alpha$-equivalence classes of terms, but to use 
properties of the atomic-sheaf-logic equivalence and conditional independence relations to enable definitions and reasoning to be carried out in an elegant structural way.
To illustrate the proposal, let us consider  untyped $\lambda$-terms  presented in the form $\Gamma \vdash M$, where $\Gamma$ is a finite sequence of distinct names
that are treated as free variables in term $M$. The rules for generating such terms are:
\[
\vcenter{\hbox{\includegraphics[scale=0.18]{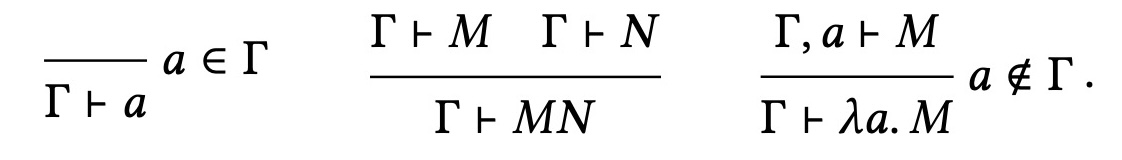}}}
\]
Then the $\Gamma$-indexed relation $\{\equiv_\Gamma \, \subseteq \mathsf{Term}_\Gamma \times \mathsf{Term}_\Gamma\}_\Gamma$ of $\alpha$-equivalence, where 
\[\mathsf{Term}_\Gamma := \{ \Gamma \vdash M \mid \text{$\Gamma \vdash M$ is a term}\}\, ,\] can be defined as the
smallest $\Gamma$-indexed congruence relation containing atomic equivalence 
$\{\sim_\Gamma \, \subseteq \mathsf{Term}_\Gamma \times \mathsf{Term}_\Gamma\}_\Gamma$ (i.e., orbit equality).\footnote{This characterisation depends on the use of terms with explicit contexts and on the restriction to contexts in which all names are distinct.} 
Substitution $\Gamma \vdash M[a := N]$ can
be specified as a function defined on any pair of terms $\Gamma, a  \vdash M$ and $\Gamma \vdash N$ for which 
the conditional independence (i.e., relative separation property) 
\[\Gamma,a \vdash \!M ~\sIndep ~\Gamma \!\vdash \! N ~\mid ~ \Gamma\, \] 
holds, by simple structural recursion on the structure of the raw term $\Gamma , a \vdash M$. 
Of course, one would like substitution to be defined on \emph{all} suitable terms, not just on sufficiently separated ones.
This is achieved, by defining substitution as a ternary \emph{relation} $\mathsf{Sub}_{\Gamma,a} \subseteq 
\mathsf{Term}_{\Gamma, a}  \times \mathsf{Term}_\Gamma  \times \mathsf{Term}_\Gamma$, by specifying
that
\[\mathsf{Sub}_{\Gamma,a} (\Gamma \!, \!a \!\vdash \!M, \, \Gamma \!\vdash \!N, \, \Gamma \!\vdash\! L) \]
holds precisely when there exists $\Gamma \vdash N'$ such that $\Gamma\!\vdash \!N' \sim \Gamma\!\vdash \!N$ and
$\Gamma\!,\!\{a\}\!\vdash\!M \sIndep \Gamma\!\vdash\!N' \mid \Gamma$ and $L = M[a := N']$. By the  independent existence principle~\eqref{indep:Z},
this relation is total in the sense that, for any $M, N$ (for brevity we omit the contexts) there exists $L$ 
such that $\mathsf{Sub}(M,N,L)$. The relation is also single-valued up to equivalence:
if  $\mathsf{Sub}(M,N,L)$ and $\mathsf{Sub}(M,N,L')$ then
it holds that $\Gamma\!\vdash\!L \sim \Gamma\!\vdash\!L'$. Preservation of $\alpha$-equivalence, then follows in the form:
if $M \equiv_{\Gamma,a} M'$ and $N \equiv_{\Gamma} N'$
and $\mathsf{Sub}(M,N,L)$ and $\mathsf{Sub}(M',N',L')$ then $L \equiv_{\Gamma} L'$, which can be established elegantly and abstractly using the 
characterisation of $\alpha$-equivalence given above.

A  high-level summary of the above outlined approach is that one reasons  with raw terms, making use of atomic sheaf logic and its  equivalence and conditional independence relations to systematically subsume the necessary renaming of bound variables
as instances of general logical principles. 

The fact that atomic sheaf logic applies both to nominal sets (via the equivalence with $\ShAt(\Iop)$) and to 
probability (via $\ShAt(\sBP_0)$) means that one can compare the two approaches to nominal syntax, the standard one in which terms are $\alpha$-equivalence classes and the proposed one using raw terms, using an analogy with probability theory.
When terms (with explicit context) are considered as $\alpha$-equivalence classes, they are, in particular, equated up to atomic equivalence (orbit equality). In the probabilistic setting of  $\ShAt(\sBP_0)$, atomic equivalence is equality in distribution. So reasoning with $\alpha$-equivalence classes is analogous to doing probability with 
probability distributions. In contrast, our proposal to reason with raw terms and make use of the atomic 
equivalence and conditional independence relations is analogous to, in probability theory,
reasoning with random variables and exploiting the relations of equality in law and conditional independence between them.
Certainly, in mainstream probability theory, reasoning with random variables is usually considered 
 more convenient than reasoning with probability distributions. It therefore seems worth investigating  whether our  proposed approach to reasoning about syntax will have similar practical
advantages over the $\alpha$-equivalence-class-based approach. 
It is intended to carry out some case studies in this direction as future research.

\subsection{Further work}

We end the paper with two questions   for potential further investigation on the theory side, of which the second was  suggested by one of the journal referees. 
The first is to obtain a completeness theorem for the logic of equivalence and conditional independence valid in atomic toposes.
The second is to investigate whether atomic sheaf logic enjoys a similar relationship to second-order logic as that enjoyed by dependence logic~\cite{Vaananen}.

\section*{Acknowledgements}

I thank Angus Macintyre for drawing my attention to dependence logic, and Andr\'e Joyal, Paul-Andr\'e Melli\`es, Dario Stein and the anonymous reviewers of both conference and journal versions for  helpful suggestions. I also thank 
Terblanche Delport, Willem Fouch\'e and Paul and Petrus Potgieter for their hospitality in Pretoria in January 2023, where half this paper was written. Paul Taylor's diagram macros were used.

\bibliographystyle{plain}
\bibliography{EquivIndepArXiV}



\appendix

\section{Proof of Theorem~\ref{theorem:sur-supports}}
\label{appendix:supports}

Recall that Theorem~\ref{theorem:sur-supports} states that every sheaf 
in $\ShAt(\Sur)$ has supports. 
The main tool needed to  prove this is Theorem~\ref{theorem:push-pull} below.

\begin{theorem}
\label{theorem:push-pull}
Every atomic sheaf $\Sh{P} \in \ShAt(\Sur)$ maps pushouts in $\Sur$ to pullbacks in $\Set$.
\end{theorem}

\noindent
The proof of Theorem~\ref{theorem:push-pull} given below builds on Theorem~\ref{theorem:sheaf-nice}. 
When I discussed this work with André Joyal, he told me that he already knew  Theorem~\ref{theorem:push-pull}, and he kindly showed me  his own proof, which is somewhat different in structure from the argument given below.

Observe first that the category $\Sur$ has pushouts, and that these are defined as in $\Set$.
Observe also that, in any commuting diagram in $\Sur$ of the form below, the outer kite is a pushout if and only if the right-hand square is a pushout (because all maps in $\Sur$ are epimorphic).
\[
\vcenter{\hbox{\includegraphics[scale=0.18]{diag06.jpg}}}
\]

\begin{lemma}
\label{lemma:very-easy}
Suppose we have a commuting diagram as above in $\Sur$. Let  $P \in \Psh(\Sur)$ be a separated presheaf.
Then $P$ maps the right-hand square to a pullback in $\Set$ if and only if it maps the outer kite to a pullback in $\Set$.
\end{lemma}
\begin{proof} Easy. 
\end{proof}

\begin{proof}[Proof of Theorem~\ref{theorem:push-pull}]
A relation $R \subseteq \Omega_X \times \Omega_Y$ is 
said  to be \emph{bitotal} if:
\[\forall \omega_X \in \Omega_X.\, \exists \omega_Y \in \Omega_Y,\; \omega_XR \omega_Y~\text{and}~\forall \omega_Y \in \Omega_Y.\, 
\exists \omega_X \in \Omega_X,\; xRy \enspace .\]

Let $R \subseteq \Omega_X \times \Omega_Y$ be a bitotal relation. Then the projections 
 $\Omega_X \lMap{r} R \rMap{r'} \Omega_Y$ form a span in $\Sur$. Construct the pushout
 \begin{equation}
 \label{diagram:useful}
 \vcenter{\hbox{\includegraphics[scale=0.18]{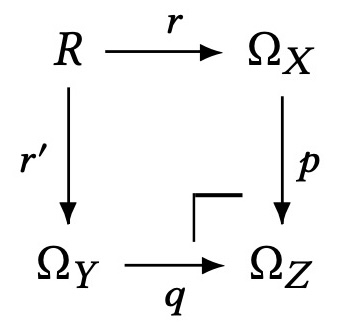}}}
\end{equation}

For any $n \geq 0$ define  $R_n  \subseteq \Omega_X \times \Omega_X$ and 
 $S_n \subseteq \Omega_X \times \Omega_X$ 
 by: $R_0 := R$ and $S_n = R^{-1} \circ R_n$ and $R_{n+1} := R \circ S_n$.
 Let $r_n \colon R_n \rMap{} \Omega_X$ and $r'_n \colon R_n \rMap{} \Omega_Y$ be the first and second projections
 and similarly for $s_n \colon S_n \rMap{} \Omega_X$ and $s'_n \colon S_n \rMap{} \Omega_X$ .
 Alternatively, we can formulate this in diagrammatic terms, taking pullbacks for both top-left squares below,
 \begin{equation*}
 \label{diagram:R-S-pullbacks}
 \vcenter{\hbox{\includegraphics[scale=0.18]{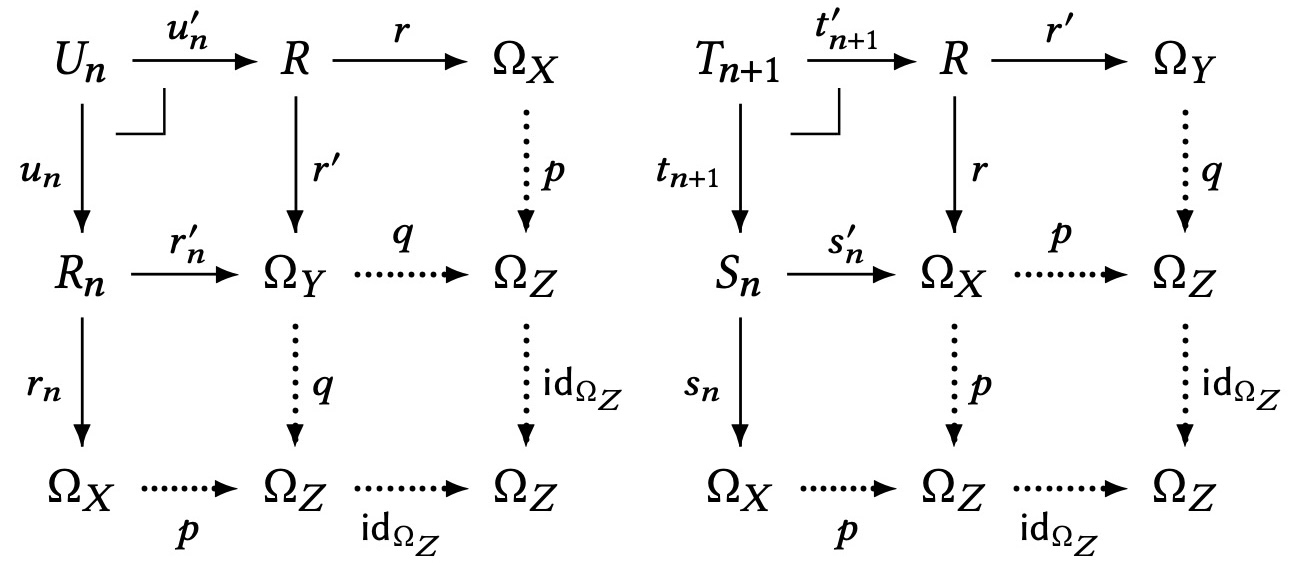}}}
 \end{equation*}
 and defining the relations $(s_n,s_n') : S_n \rMon \Omega_X \times \Omega_X$ and 
 $(r_{n+1},r'_{n+1}) : R_n \rMon \Omega_X \times \Omega_X$ as the following epi-mono factorisations in $\Set$ 
 \[
 \vcenter{\hbox{\includegraphics[scale=0.18]{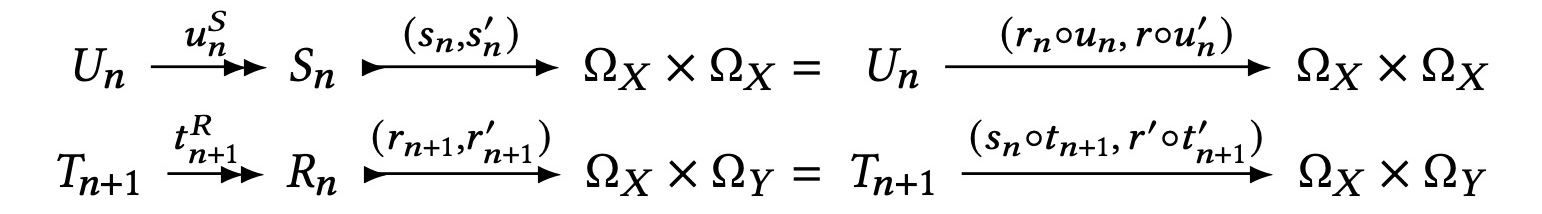}}}
 \]
 
We first claim that, for any $n \geq 0$, both diagrams below commute.
\begin{equation}
  \label{diagram:twin}
 \vcenter{\hbox{\includegraphics[scale=0.18]{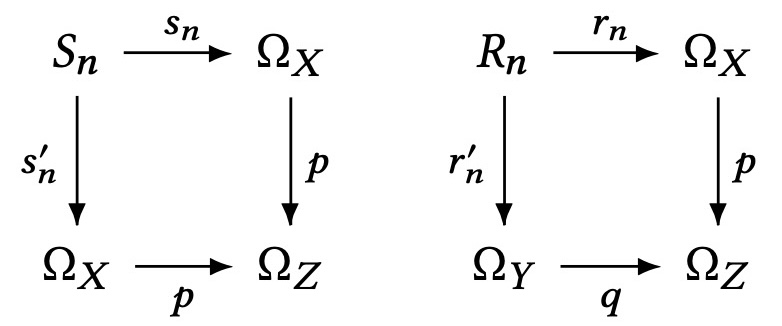}}}
 \end{equation}
This first claim is proved by a straightforward induction on $n$. For example, one can use the induction hypothesis to complete
the diagrams involving $U_n$ and $T_{n+1}$ above 
with the dotted arrows.
%

Our second claim is that, for some $n \geq 0$, the right-hand square of~\eqref{diagram:twin} is a pullback in $\Set$. (The same holds for the left-hand square, but we shall not need this.) This holds because the
 fibres of the pushout maps $p$ and $q$ from \eqref{diagram:useful}
 are the connected components in the bipartite
graph $R \subseteq \Omega_X \times \Omega_Y$ restricted to $\Omega_X$ and $\Omega_Y$
respectively, and the $R_n$ construction approximates the path relation from below, necessarily
reaching a fixed point at some finite $n$. 

Our third claim is that every atomic sheaf $\Sh{P} \in \ShAt(\Sur)$
maps the pushout diagram~\eqref{diagram:useful} to a pullback in $\Set$.
For this, let $x \in \Sh{P}(\Omega_X)$ and $y \in \Sh{P}(\Omega_Y)$ be such that 
$x \cdot r = y \cdot r'$. We prove, by induction on $n$ that
$x \cdot r_n = y \cdot r'_n$ and $x \cdot s_n = x \cdot s'_n$ for all $n$.
For $n = 0$, we have $r_0 = r$ and $r'_0 = r'$ so indeed $x \cdot r_0 = y \cdot r'_0$.
Next, assuming $x \cdot r_n = y \cdot r'_n$, we show $x \cdot s_n = x \cdot s'_n$.
For this, 
we have $x \cdot s_n \cdot u^S_n = x \cdot r_n \cdot u_n 
= y \cdot r'_n \cdot u_n = y \cdot r' \cdot u'_n = x \cdot r \cdot  u'_n = 
x \cdot s_n' \cdot u^S_n$; whence by separatedness  $x \cdot s_n = x \cdot s_n'$.
Similarly, assuming $x \cdot s_n = x \cdot s'_n$, we show 
$x \cdot r_{n+1} = y \cdot r'_{n+1}$.
For this, we have $x \cdot r_{n+1} \cdot t^R_{n+1} = x \cdot s_n \cdot t_{n+1} = 
x \cdot s'_n \cdot t_{n+1} = x \cdot r \cdot t'_{n+1}= y \cdot r' \cdot t'_{n+1}=
y \cdot r'_{n+1} \cdot t^R_{n+1}$; whence by separatedness $x \cdot r_{n+1} = y \cdot r'_{n+1}$.
This completes the argument by induction. 
The second claim above now gives us $n$ such that the right-hand square of~\eqref{diagram:twin} is a pullback,
 in $\Set$, hence an 
independent square in $\Sur$.
By Theorem~\ref{theorem:sheaf-nice}, the square is
mapped by $\Sh{P}$ to a pullback in $\Set$. By the statement proved by induction,
$x \cdot r_n = y \cdot r'_n$. So, by the pullback property in $\Set$, 
there exists a unique $z \in \Sh{P}(\Omega_Z)$
such that $z \circ p = x$ and $z \circ q = y$, which is what we needed to show to establish the third claim.

We now establish the property asserted by the theorem.  Consider any pushout diagram in $\Sur$.
\[
\vcenter{\hbox{\includegraphics[scale=0.18]{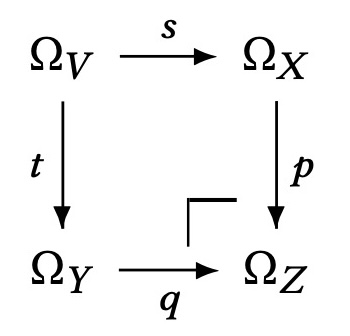}}}
 \]
 Define $R \subseteq \Omega_X \times \Omega_Y$ to be the 
 image $(s,t)(\Omega_V)$. Since $s,t$ are surjective, $R$ is bitotal. 
 By the observations at the start of this section,~\eqref{diagram:useful} is also a pushout.
 By the third claim above, $\Sh{P}$ maps \eqref{diagram:useful} to a pullback in $\Set$.
 This property transfers to the original pushout, by Lemma~\ref{lemma:very-easy}.
\end{proof}

\begin{proof}[Proof of Theorem~\ref{theorem:sur-supports}]
Given a sheaf $\Sh{P} \in \ShAt(\Sur)$ and element $x \in \Sh{P}(\Omega_X)$, the support
is obtained by taking a joint pushout in $\Sur$ of all (inequivalent) representable factorisations 
of $x$, of which there are only finitely many (because there are only  finitely many 
partitions of $\Omega_X$). By Theorem~\ref{theorem:push-pull}, this joint pushout is itself
a representable factorisation of $x$.
\end{proof}

\section{Validity of  axioms~\eqref{indep:C} and  \eqref{indep:D} from Figure~\ref{figure:independence-axioms}}
\label{appendix:CD}

The lemma below establishes the validity of axiom~\eqref{indep:C}. 
\begin{lemma}
Suppose $(x, (y,z), w) \in {\Indep}_{\Sh{A},\,\Sh{B}\times \Sh{C} \Cond\Sh{D}}(X)$ 
then $(x, y, (z, w)) \in {\Indep}_{\Sh{A},\Sh{B} \Cond \Sh{C}\times \Sh{D}}(X)$.
\end{lemma}
\begin{proof}
If $(x, (y,z), w) \in {\Indep}_{\Sh{A},\,\Sh{B}\times \Sh{C} \Cond\Sh{D}}(X)$
then we have a hybrid diagram
\[
\vcenter{\hbox{\includegraphics[scale=0.18]{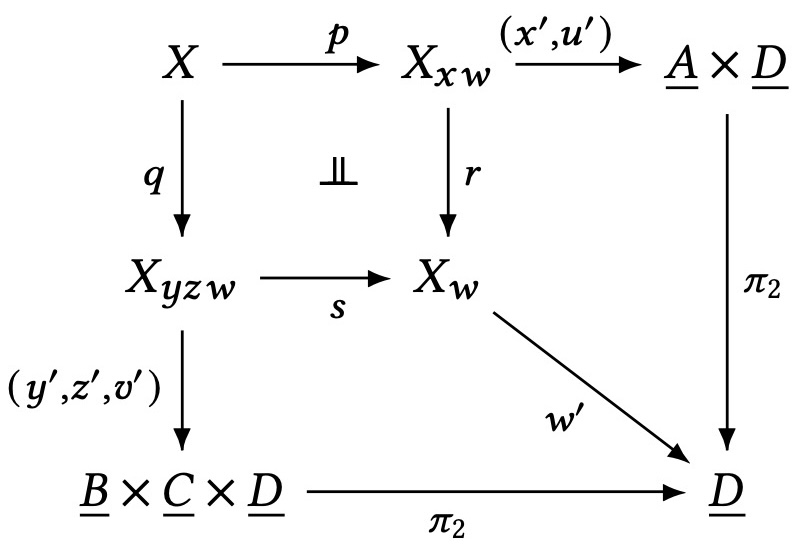}}}
\]
where $x' \cdot p = s$ and $y' \cdot q = y$ and $z' \cdot q = z$ and
$(X_w,\, r \circ p,\, w')$ is support for  $w$ and, without loss of generality,
$(X_{xw}, p, (x',u'))$ is support for $(x,z)$ and 
$(X_{yzw},q,(y',z',v'))$ is support for $(y,z,w)$. 

The independent square in the diagram  above
can be factorised as a composite of two commuting squares as in the top row below
\begin{equation}
\label{equation:ugly}
\vcenter{\hbox{\includegraphics[scale=0.18]{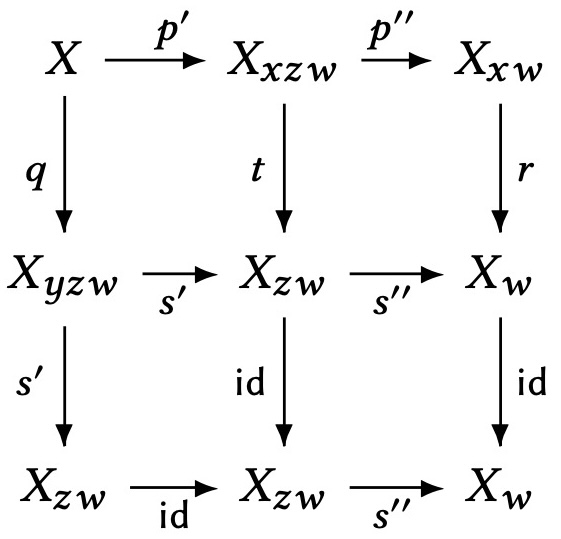}}}
\end{equation}
where all objects are defined as the supports indicated by their names. For example,
$(X_{zw}, \, s' \circ q, \,(z'',w''))$ is support for $(z,w)$ and 
$(X_{xzw},\, p', (x' \cdot p'',\, z'' \cdot t,\, u' \cdot p''))$ is support for $(x,z,w)$.
We show that the top-right square is an independent pullback.

To see it is independent, observe that the full composite square \eqref{equation:ugly} above is a composite of an independent top-row rectangle with the  two independent squares in the bottom row. So  \eqref{equation:ugly} is independent. That is, the
square
\[
\vcenter{\hbox{\includegraphics[scale=0.18]{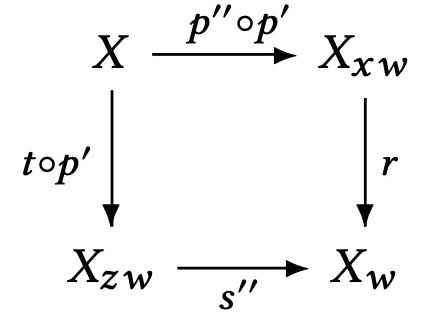}}}
\]
is independent. It thus follows from the descent property that the top-right square in 
\eqref{equation:ugly} is independent.

For the independent pullback property, consider any independent pullback of $r$ along $s''$
\[
\vcenter{\hbox{\includegraphics[scale=0.18]{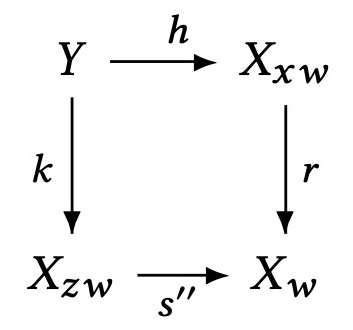}}}
\]
Since the top-right square of \eqref{equation:ugly} is independent, there exists $j : X_{xzw} \to Y$
such that $k \circ j = t$ and $j \circ h = p''$. This gives us a representable factorisation
$(Y, \, j \circ p', \, (x' \cdot h, \,z'' \cdot k, \,u' \cdot h)) $
of $(x,z,w)$. Since $(X_{xzw},\, p', \,(x' \cdot p'',\, z'' \cdot t, \, u' \cdot p''))$ is support for $(x,z,w)$,
we obtain a map $i : Y \to X_{xzw}$  of representable factorisations. However $j$ is also a map of representable factorisations in the opposite direction, so $i$ and $j$ are mutual inverses. Thus the top-right square in 
\eqref{equation:ugly} is indeed an independent pullback.

Since the top-row rectangle of \eqref{equation:ugly} is independent and the top-right square an independent pullback it follows that the top-left square is independent. Using this, we form the hybrid diagram
\[
\vcenter{\hbox{\includegraphics[scale=0.18]{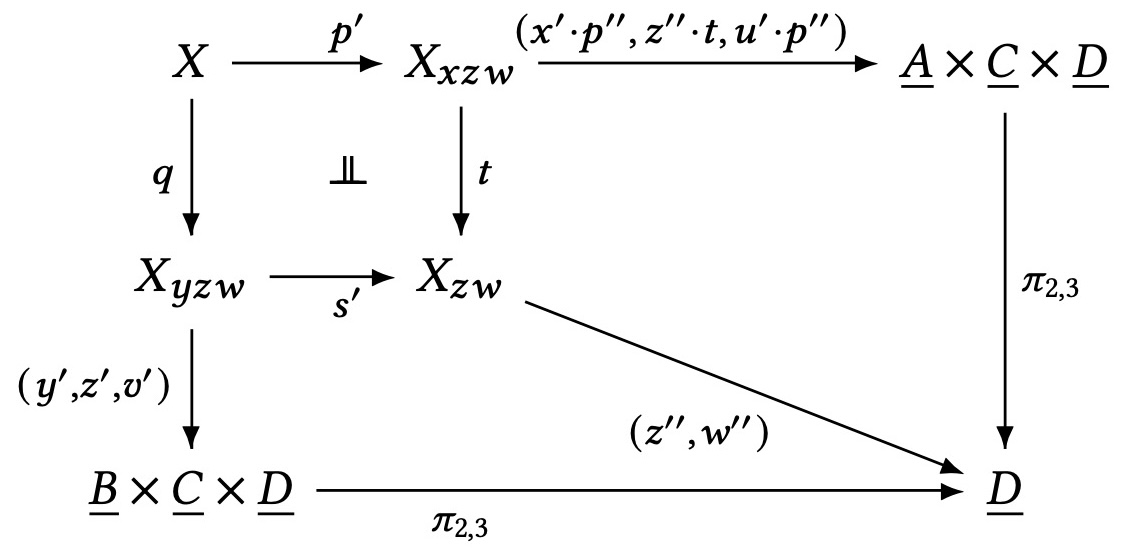}}}
\]
showing that indeed $(x, y, (z, w)) \in {\Indep}_{\Sh{A},\Sh{B} \Cond \Sh{C} \times \Sh{D}}(X)$.
\end{proof}

The lemma below establishes the validity of axiom \eqref{indep:D}.
\begin{lemma}
Suppose $(x, y, (z, w)) \in {\Indep}_{\Sh{A},\Sh{B} \Cond \Sh{C} \times \Sh{D}}(X)$
and $(x, z, w) \in {\Indep}_{\Sh{A}, \Sh{C} \Cond \Sh{D}}(X)$
then 
$(x, (y, z), w) \in {\Indep}_{\Sh{A},\,\Sh{B}\times \Sh{C} \Cond\Sh{D}}(X)$.
\end{lemma}
\begin{proof}
The assumption $(x, y, (z, w)) \in {\Indep}_{\Sh{A},\Sh{B} \Cond \Sh{C} \times \Sh{D}}(X)$
gives us:
\begin{equation}
\label{equation:blib}
\vcenter{\hbox{\includegraphics[scale=0.18]{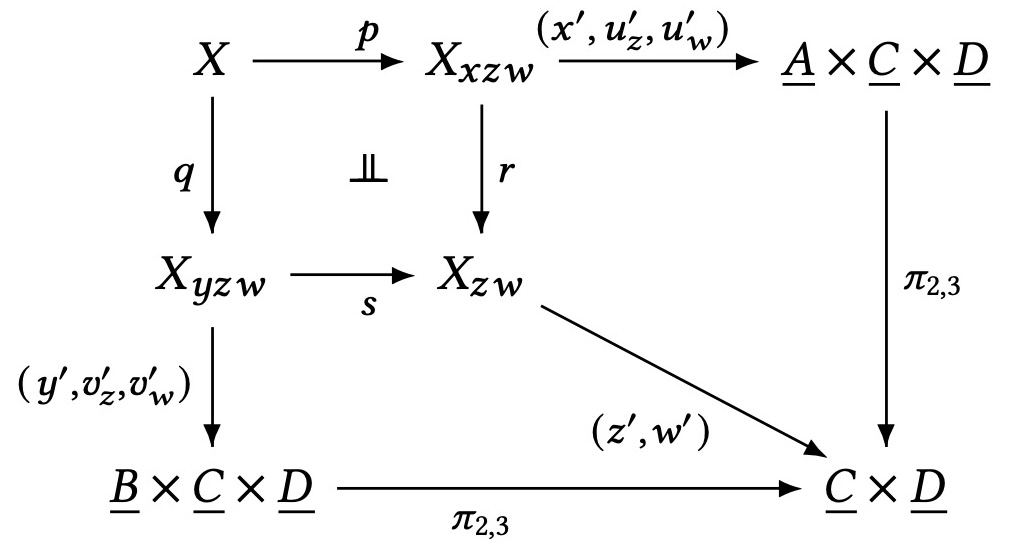}}}
\end{equation}
where $x' \cdot p = x$ and $y' \cdot q = y$ and 
$(X_{zw}, \, r \circ p,\, (z',w'))$  is support for $(z,w)$ and, without loss of generality,
$(X_{xzw}, p, (x',u'_z,u'_w))$ is support for $(x,z,w)$ and
$(X_{yzw}, q, (y',v'_z,v'_w))$ is support for $(y,z,w)$.

Similarly, the assumption $(x, z, w) \in {\Indep}_{\Sh{A}, \Sh{C} \Cond \Sh{D}}(X)$ gives us:
\begin{equation}
\label{equation:bliba}
\vcenter{\hbox{\includegraphics[scale=0.18]{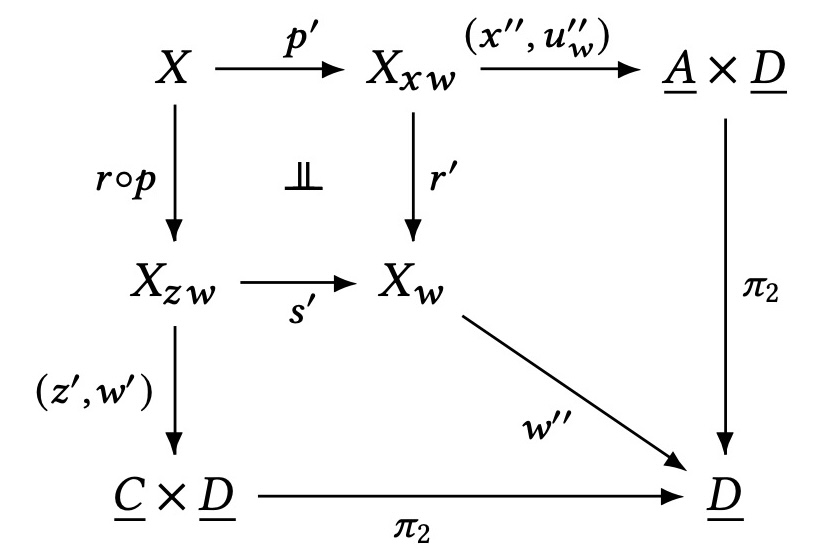}}}
\end{equation}
where $x'' \cdot p' = x$ and $z'' \cdot q' = z$ and 
$(X_{w},\, r' \circ p', \, w'')$ is support for $w$ and, without loss of generality,
$(X_{xw}, \, p',\, (x'',\, u''_w))$ is support for $(x,w)$ and we can use
$r\circ p $ because 
$(X_{zw}, \, r \circ p,\, (z',w'))$ is support for $(z,w)$. 

Exploiting the support property of $X_{xw}$, we obtain $p''$ in
\[
\vcenter{\hbox{\includegraphics[scale=0.18]{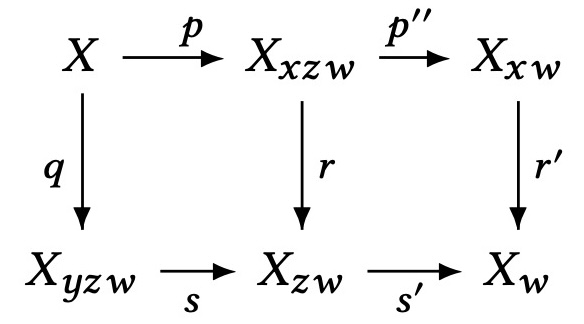}}}
\]
such that $p'' \circ p = p'$.
The left-hand square above is the independent square from~\eqref{equation:blib}.
Since $p'' \circ p = p'$, the right-hand square is also independent, by descent along $p$ from the independent square in~\eqref{equation:bliba}. So the composite rectangle is independent.

The composite rectangle provides the independent square in
\[
\vcenter{\hbox{\includegraphics[scale=0.18]{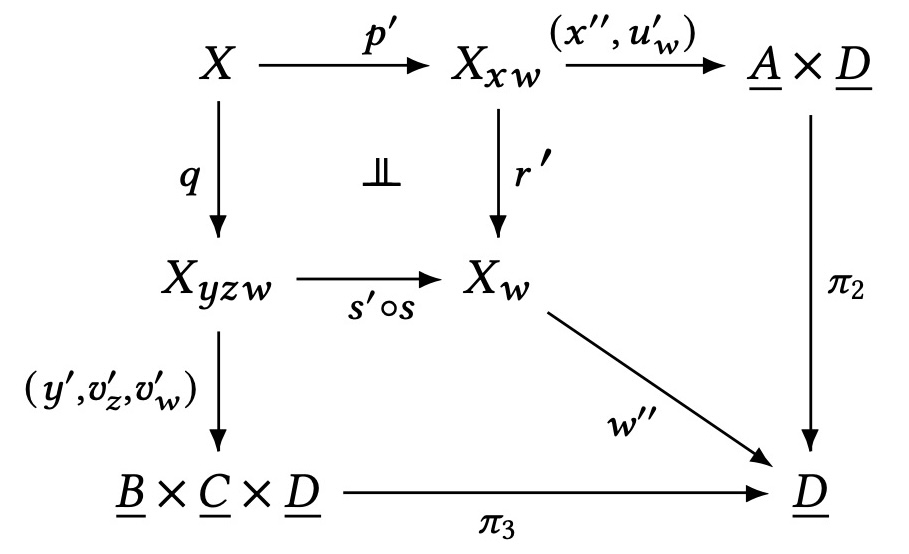}}}
\]
showing that $(x, (y, z), w) \in {\Indep}_{\Sh{A},\,\Sh{B}\times \Sh{C} \Cond\Sh{D}}(X)$ as required.
\end{proof}

\section{Proof of Proposition~\ref{proposition:sBP-indep-pullbacks}}
\label{appendix:sbp}

The goal of the section is to prove Proposition~\ref{proposition:sBP-indep-pullbacks}, which states that
Definition~\ref{definition:indep-square-sbp} endows $\sBP_0$ with independent pullback structure satisfying the descent property. 

Recall that Definition~\ref{definition:indep-square-sbp} defines a commuting square \eqref{equation:square-sbp} in $\sBP_0$  to be \emph{independent} if $p \Indep q \Cond r \circ p$ according to Definition~\ref{definition:conditional-independence-RV}. Since the square is commuting, the property in Definition~\ref{definition:conditional-independence-RV} simplifies to: for every $S \in \mathcal{B}_{\Omega_Y}$ and $T \in \mathcal{B}_{\Omega_Z}$, and for
$P_{\Omega_W}$-almost all $\omega \in \Omega_W$, 
\begin{equation}
\label{sbps-ind-cond}
P_{(r\circ p)^{-1}(\omega)} (p^{-1}(S)  \cap q^{-1}(T))  ~=  ~
P_{r^{-1}(\omega)}(S) \cdot P_{s^{-1}(\omega)}(T) \enspace . 
\end{equation}

The key proposition below characterises the independence of \eqref{equation:square-sbp}
as being equivalent to $p$, considered 
as a 
map on fibre sets $q^{-1}(\omega_Z) \to r^{-1}(s(\omega_Z))$, 
preserving the disintegration-induced probability measures, for almost all $\omega_Z$.
\begin{proposition}
\label{proposition:fibre-preserving}
A commuting square in $\sBP_0$~\eqref{equation:square-sbp} is independent if and only if,
for $P_{\Omega_Z}$-almost-all $\omega_Z \in \Omega_Z$, it holds that
$p_*(P_{q^{-1}(\omega_Z)})  = P_{r^{-1}(s(\omega_Z))}\,$.
\end{proposition}
\begin{proof}
We first prove the right-to-left implication. Accordingly, suppose $p_*(P_{q^{-1}(\omega_Z)})  = P_{r^{-1}(s(\omega_Z))}$
holds for $P_{\Omega_Z}$-almost-all $\omega_Z \in \Omega_Z$.
For $P_{\Omega_W}$-almost every $\omega \in \Omega_W$, we prove~\eqref{sbps-ind-cond} by
\begin{align*}
& P_{(r\circ p)^{-1}(\omega)} (p^{-1}(S)  \cap q^{-1}(T))  ~ 
 \\
 & \qquad
 = ~ \int P_{q^{-1}(\omega_Z)}(p^{-1}(S)  \cap q^{-1}(T)) ~ \mathrm{d}P_{s^{-1}(\omega)}(\omega_Z)
 \\
 & \qquad = ~ \int \One_T(\omega_Z) \cdot P_{q^{-1}(\omega_Z)}(p^{-1}(S)) ~ \mathrm{d}P_{s^{-1}(\omega)}(\omega_Z)
 \\
 & \qquad = ~ \int \One_T(\omega_Z) \cdot P_{r^{-1}(s(\omega_Z))}(S)~ \mathrm{d}P_{s^{-1}(\omega)}(\omega_Z)
 & & \text{because $p_*(P_{q^{-1}(\omega_Z)})  = P_{r^{-1}(s(\omega_Z))}$} 
 \\
 & \qquad = ~ \int \One_T(\omega_Z) \cdot P_{r^{-1}(\omega)}(S)~ \mathrm{d}P_{s^{-1}(\omega)}(\omega_Z)
 \\
 & \qquad = ~ P_{r^{-1}(\omega)}(S) \cdot  \int \One_T(\omega_Z) ~ \mathrm{d}P_{s^{-1}(\omega)}(\omega_Z)
 \\
 & \qquad = P_{r^{-1}(\omega)}(S) \cdot P_{s^{-1}(\omega)}(T) \enspace .
\end{align*}

For the left-to-right implication, suppose \eqref{sbps-ind-cond} holds, for $P_{\Omega_W}$-almost every $\omega \in \Omega_W$. 
Note that, for any $S \in \mathcal{B}_{\Omega_Y}$ the function 
\[
T ~ \mapsto ~ \int \One_T(\omega_Z) \cdot P_{q^{-1}(\omega_Z)}(p^{-1}(S))~ \mathrm{d}P_{\Omega_Z}(\omega_Z)
\]
is a measure $\mathcal{B}_{\Omega_Z} \to [0,1]$ with density
\[
\omega_Z ~ \mapsto ~ P_{q^{-1}(\omega_Z)}(p^{-1}(S)) \enspace.
\]
Similarly, the function 
\[
T ~ \mapsto ~ \int \One_T(\omega_Z) \cdot P_{r^{-1}(s(\omega_Z))}(S)~ \mathrm{d}P_{\Omega_Z}(\omega_Z)
\]
is a measure with density
\[
\omega_Z ~ \mapsto ~ P_{r^{-1}(s(\omega_Z))}(S) \enspace.
\]
Below we prove
\begin{equation}
\label{eqn:t-measures}
\int \One_T(\omega_Z) \cdot P_{q^{-1}(\omega_Z)}(p^{-1}(S))~ \mathrm{d}P_{\Omega_Z}(\omega_Z)
~ = ~ \int \One_T(\omega_Z) \cdot P_{r^{-1}(s(\omega_Z))}(S)~ \mathrm{d}P_{\Omega_Z}(\omega_Z) \enspace ,
\end{equation}
which establishes that the two measures are equal, and hence their densities are almost surely equal. 
That is, for $P_{\Omega_Z}$-almost-all $\omega_Z \in \Omega_Z$, we have
$P_{q^{-1}(\omega_Z)}(p^{-1}(S)) = P_{r^{-1}(s(\omega_Z))}(S)$, for all $S \in \mathcal{B}_{\Omega_Y}$.
That is, $p_*(P_{q^{-1}(\omega_Z)}) = P_{r^{-1}(s(\omega_Z))}$, as required.

It remains to prove~\eqref{eqn:t-measures}. For this, we calculate
\begin{align*}
& \int \One_T(\omega_Z) \cdot P_{r^{-1}(s(\omega_Z))}(S)~ \mathrm{d}P_{\Omega_Z}(\omega_Z)
\\
& \qquad = ~ \int \int  \One_T(\omega_Z) \cdot P_{r^{-1}(s(\omega_Z))}(S)~ \mathrm{d}P_{s^{-1}(\omega)}(\omega_Z) ~ \mathrm{d}P_{\Omega}(\omega)
\\
& \qquad = ~ \int \int  \One_T(\omega_Z) \cdot P_{r^{-1}(\omega)}(S)~ \mathrm{d}P_{s^{-1}(\omega)}(\omega_Z) ~ \mathrm{d}P_{\Omega}(\omega)
\\
& \qquad = ~ \int  P_{r^{-1}(\omega)}(S)  \cdot \left(\int  \One_T(\omega_Z) ~ \mathrm{d}P_{s^{-1}(\omega)}(\omega_Z) \right)   ~ \mathrm{d}P_{\Omega}(\omega)
\\
& \qquad = ~ \int  P_{r^{-1}(\omega)}(S)  \cdot P_{s^{-1}(\omega)}(T) ~ \mathrm{d}P_{\Omega}(\omega)
\\
& \qquad = ~  \int P_{(s\circ q)^{-1}(\omega)} (p^{-1}(S)  \cap q^{-1}(T)) ~ \mathrm{d}P_{\Omega}(\omega)
& & \text{by \eqref{sbps-ind-cond}}
\\
& \qquad = ~  \int \int P_{q^{-1}(\omega_Z)} (p^{-1}(S)  \cap q^{-1}(T)) ~ \mathrm{d}P_{s^{-1}(\omega)}(\omega_Z) ~ \mathrm{d}P_{\Omega}(\omega)
\\
& \qquad = ~  \int \int \One_T(\omega_Z) \cdot P_{q^{-1}(\omega_Z)} (p^{-1}(S)) ~ \mathrm{d}P_{s^{-1}(\omega)}(\omega_Z) ~ \mathrm{d}P_{\Omega}(\omega)
\\
& \qquad = ~  \int \One_T(\omega_Z) \cdot P_{q^{-1}(\omega_Z)} (p^{-1}(S)) ~ \mathrm{d}P_{\Omega_Z}(\omega_Z) \enspace .
\end{align*}
For any fixed $S \in \mathcal{B}_{\Omega_Y}$ the function mapping any $T$ to the left-hand side 
of~\eqref{eqn:t-measures} is clearly a measure $\mathcal{B}_{\Omega_Z} \to [0,1]$.
\end{proof}

We now verify that independent squares in $\sBP_0$ indeed satisfy the axioms for independent pullback structure.
Axioms (IP1) and (IP2) are straightforward. Axiom (IP3) is an easy consequence of Proposition~\ref{proposition:fibre-preserving}.
For Axiom (IP5), it is not difficult to verify that~\eqref{equation:pullback-measure}
indeed constructs an independent pullback square. The descent property is also straightforward. This leaves us with (IP4), which is established in greater generality by the proposition below. 

\begin{proposition}
In a commuting diagram in $\sBP_0$ as below, if both the composite rectangle (AB) and  right-hand square (B) are independent
and $[q],[t]$ are also jointly monic, then the left-hand square (A) is independent.
\[
\vcenter{\hbox{\includegraphics[scale=0.18]{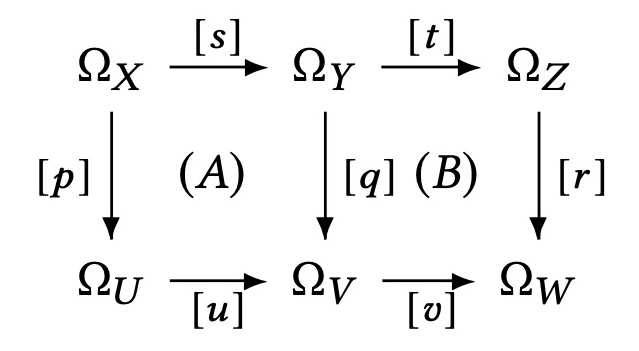}}}
\]
\end{proposition}
\begin{proof}
We use Proposition~\ref{proposition:fibre-preserving} to prove that (A) is independent. That is, we show that, for 
$P_{\Omega_U}$-almost-all $\omega_U \in \Omega_U$, and for all $C \in \mathcal{B}_{\Omega_Y}$,
\begin{equation}
\label{eqn:last}
P_{p^{-1}(\omega_U)}(s^{-1}(C)) ~ = ~ P_{q^{-1}(u(\omega_U))}(C) \enspace .
\end{equation}
We show this first for $C$ of the form $t^{-1}(A) \cap q^{-1}(B)$, where $A \in \mathcal{B}_{\Omega_Z}$ and
$B \in \mathcal{B}_{\Omega_V}$. In this case, we have
\begin{align*}
& P_{p^{-1}(\omega_U)}(s^{-1}t^{-1}(A) \cap s^{-1}q^{-1}(B)) \\
& \qquad = ~ P_{p^{-1}(\omega_U)}(s^{-1}t^{-1}(A) \cap p^{-1}u^{-1}(B)) \\
& \qquad = ~ \One_{u^{-1}(B)} (\omega_U) \cdot P_{p^{-1}(\omega_U)}(s^{-1}t^{-1}(A)) \\
& \qquad = ~ \One_{B} (u(\omega_U)) \cdot P_{p^{-1}(\omega_U)}(s^{-1}t^{-1}(A)) \\
& \qquad = ~ \One_{B} (u(\omega_U)) \cdot P_{r^{-1}(v(u(\omega_U)))}(A) 
 & & \text{by Proposition~\ref{proposition:fibre-preserving} for (AB)} \\
 & \qquad = ~ \One_{B} (u(\omega_U)) \cdot P_{q^{-1}(u(\omega_U))}(t^{-1}(A)) 
 & & \text{by Proposition~\ref{proposition:fibre-preserving} for (B)} \\
 & \qquad = ~ P_{q^{-1}(u(\omega_U))}(t^{-1}(A) \cap q^{-1}(B)) \enspace .
\end{align*}
The joint monicity of $[q]$ and $[t]$ means that the there is a measure $1$ set $S \in \mathcal{B}_{\Omega_Y}$
such that the paired function $(t,q) : S \to \Omega_Z \times \Omega_V$ is injective. 
Since $S \subseteq 
\Omega_Y$ is Borel, the standard Borel structure on $\Omega_Y$ restricts to $S$, and $(t,q)$ is a measurable embedding 
of the standard Borel space $S$ into the product standard Borel space $\Omega_Z \times \Omega_V$. Thus
every Borel subset of $S$ is the restriction of a Borel subset of $\Omega_Z \times \Omega_V$. Since the $\sigma$-algebra of Borel subsets of $\Omega_Z \times \Omega_V$ is generated by Borel rectangles $A \times B$, it follows that
the Borel subsets of $S$ are generated by sets of the form $S \cap (t^{-1}(A) \cap q^{-1}(B))$. Moreover, such sets are closed under finite intersections. 

The left-hand and right-hand sides of~\eqref{eqn:last} define measures $C \mapsto P_{p^{-1}(\omega_U)}(s^{-1}(C))$ and
$P_{q^{-1}(u(\omega_U))}$ respectively. By the equality we have shown for $C$ of the form $t^{-1}(A) \cap q^{-1}(B)$, these measures agree on a generating set for $\mathcal{B}_{\Omega_Y}$ (restricted to $S$) that is closed under finite intersections. The two measures are therefore equal. This proves~\eqref{eqn:last}.
\end{proof}

\end{document}